\newtheorem{theorem}{Theorem}
\newtheorem{lemma}[theorem]{Lemma}
\newtheorem{corollary}[theorem]{Corollary}
\newtheorem{definition}[theorem]{Definition}
\newcommand{\specialcell}[2][c]{%
  \begin{tabular}[#1]{@{}c@{}}#2\end{tabular}}
\newcommand{\R}{\mathbb{R}}
\newcommand{\E}{\mathbb{E}}
\let\Pr\relax
\DeclareMathOperator*{\Pr}{\mathbb{P}}
\newcommand{\variance}{\mathrm{Var}}
\newcommand{\norm}[1]{\|#1\|}
\newcommand{\normFull}[1]{\left\|#1\right\|}
\newcommand{\normFro}[1]{\norm{#1}_{\mathrm{F}}}
\newcommand{\abs}[1]{\lvert#1\rvert}
\newcommand{\inprod}[1]{\left\langle #1 \right\rangle}
\newcommand{\gradient}{\bigtriangledown}
\newcommand{\grad}{\gradient}
\newcommand{\hessian}{\gradient^{2}}
\newcommand{\hess}{\hessian}
\DeclareMathOperator*{\argmin}{arg\,min}
\DeclareMathOperator{\rank}{rank}
\DeclareMathOperator{\nnz}{nnz}
\newcommand{\bv}[1]{\mathbf{#1}}
\newcommand{\wt}{\widetilde}
\newcommand{\poly}{\mathop\mathrm{poly}}
\newcommand{\otilde}{\widetilde{O}}
\newcommand{\ma}{\mvar{a}}
\newcommand{\mb}{\mvar{b}}
\newcommand{\mSigma}{\mvar{\Sigma}}
\newcommand{\eqdef}{\mathbin{\stackrel{\rm def}{=}}}
\newcommand{\dist}{\mathcal{D}}
\newcommand{\supp}{\mathrm{supp}}
\newcommand{\sk}[1]{\noindent{\textcolor{cyan}{{\bf sk:} \em #1}}}
\newcommand{\gap}{\mathrm{gap}}
\DeclareMathOperator{\nrank}{sr}
\DeclareMathOperator{\nvar}{v}
\newcommand{\rayquot}{\mathrm{quot}}
\newcommand{\ssvrgstep}{\mathrm{ssvrg\_iter}}
\newcommand{\opt}[1]{{#1}^{\mathrm{opt}}}
\global\long\def\boldVar#1{\mathbf{#1}}
\global\long\def\mvar#1{\boldVar{#1}}
\global\long\def\defeq{\stackrel{\mathrm{{\scriptscriptstyle def}}}{=}}
 \renewcommand{\norm}[1]{\left\|#1\right\|}
\global\long\def\ma{\mvar A}
 \global\long\def\mb{\mvar B}
\global\long\def\mI{\mvar I}
\global\long\def\abs#1{\left|#1\right|}
\global\long\def\ceil#1{\left\lceil #1 \right\rceil }
\global\long\def\lamtilij#1#2{\widetilde{\lambda}_{#2}^{\left(#1\right)}}
\global\long\def\lamhatij#1#2{\widehat{\lambda}_{#2}^{\left(#1\right)}}
\global\long\def\lamj#1{\lambda_{#1}}
\global\long\def\lambari#1{\overline{\lambda}^{\left(#1\right)}}
\newcommand{\expec}[1]{\mathbb{E}\left[#1\right]}
\newcommand{\prob}[1]{\mathbb{P}\left[#1\right]}
\newcommand{\xtilde}{\widetilde{x}}
\newcommand{\lamiBinv}[1]{\lambda_{#1}\left(\mb^{-1}\right)}
\newcommand{\lambdah}{\widehat{\lambda}}
\newcommand{\xhat}{\widehat{x}}
\newcommand{\mcalF}{\mathcal{F}}
\newcommand{\mcalG}{\mathcal{G}}
\newcommand{\rayquoth}[1]{\widehat{\mathrm{quot}}\left(#1\right)}
\newcommand{\solve}[1]{\mathrm{solve}\left(#1\right)}
\newcommand{\set}[1]{\left\{#1\right\}}
\newcommand{\mM}{\mathbf{M}}
\begin{document}

\title{Robust Shift-and-Invert Preconditioning: Faster and More Sample Efficient Algorithms for Eigenvector Computation~\footnote{The manuscript is out of date. An updated version of this work is available at \cite{newVersion}.}}
\date{November 2, 2015}

\author{Chi Jin \\ UC Berkeley \\ \texttt{chijin@eecs.berkeley.edu}
\and
Sham M. Kakade \\ University of Washington \\ \texttt{sham@cs.washington.edu}
\and
Cameron Musco\\MIT\\ \texttt{cnmusco@mit.edu}
\and
Praneeth Netrapalli \\ Microsoft Research, New England \\ \texttt{praneeth@microsoft.com}
\and
Aaron Sidford \\ Microsoft Research, New England \\ \texttt{asid@microsoft.com} }
\maketitle

\vspace{-1.3em}
\begin{abstract}
In this paper we provide faster algorithms and improved sample complexities for approximating the top eigenvector of a matrix $\bv{A}^\top \bv{A}$. In particular we give the following results for computing an $\epsilon$ approximate eigenvector - i.e. some $x$ such that $x^\top \bv{A}^\top \bv{A}x \ge (1-\epsilon)\lambda_1(\bv{A}^\top \bv{A})$:

\begin{itemize}
	\item \textbf{Offline Eigenvector Estimation:}  Given an explicit matrix $\bv{A} \in \R^{n \times d}$, we show how to compute an $\epsilon$ approximate top eigenvector in time $\otilde \left(\left [\nnz(\bv{A}) + \frac{d\nrank(\bv{A})}{\gap^2}\right ]\cdot \log 1/\epsilon \right )$ and $\otilde\left(\left [\frac{\nnz(\bv A)^{3/4} (d\nrank(\bv{A}))^{1/4}}{\sqrt{\gap}}\right ]\cdot \log1/\epsilon \right )$. Here $\nrank(\bv{A})$ is the stable rank, $\gap$ is the multiplicative gap between the largest and second largest eigenvalues, and $\tilde O(\cdot)$ hides log factors in $d$ and $\gap$. By separating the $\gap$ dependence from $\nnz(\bv{A})$ our first runtime improves classic iterative algorithms such as the power and Lanczos methods. It also improves on previous work separating the $\nnz(\bv{A})$ and $\gap$ terms using fast subspace embeddings \cite{ailon2009fast,clarkson2013low} and stochastic optimization \cite{shamir2015stochastic}. We obtain significantly improved dependencies on $\nrank(\bv{A})$ and $\epsilon$ and our second running time improves this further when $\nnz(\bv{A}) \le \frac{d\nrank(\bv{A})}{\gap^2}$.
	
	\item \textbf{Online Eigenvector Estimation:} 
	Given a distribution $\dist$ over vectors $a \in \R^d$ with covariance matrix $\bv{\Sigma}$ and a vector $x_0$ which is an $O(\gap)$ approximate top eigenvector for $\bv{\Sigma}$, we show how to
	 compute an $\epsilon$ approximate eigenvector using $\tilde O \left (\frac{\nvar(\dist)}{\gap^2} + \frac{\nvar(\dist)}{\gap \cdot \epsilon} \right )$ samples from $\dist$.  Here $\nvar(\dist)$ is a natural notion of the variance of $\dist$. 
	Combining our algorithm with a number of existing algorithms to initialize $x_0$ we obtain improved sample complexity and runtime results under a variety of assumptions on $\dist$. Notably, we show that, for general distributions, our sample complexity result is asymptotically optimal - we achieve optimal accuracy as a function of sample size as the number of samples grows large.
\end{itemize}

We achieve our results using a general framework that we believe is of independent interest. We provide a robust analysis of the classic method of \emph{shift-and-invert} preconditioning to reduce eigenvector computation to \emph{approximately} solving a sequence of linear systems. We then apply variants of stochastic variance reduced gradient descent (SVRG) and additional recent advances in solving linear systems to achieve our claims. We believe our results suggest the generality and effectiveness of shift-and-invert based approaches and imply that further computational improvements may be reaped in practice.
\end{abstract}

\pagebreak

\section{Introduction}

Given a matrix $\bv{A} \in \mathbb{R}^{n \times d}$, computing the top eigenvector
of $\bv{A}^{\top}\bv{A}$ is a fundamental problem in computer science with
applications ranging from principal component analysis \cite{jolliffe2002principal}, to spectral clustering and learning of mixture models \cite{ng2002spectral,vempala2004spectral}, to pagerank computation \cite{page1999pagerank}, and a number of other graph related computations \cite{koren2003spectral,spielman2007spectral}.

In this paper we provide improved algorithms for computing the top eigenvector,
both in the \emph{offline} case, where the matrix $\bv{A}$ is given
explicitly as well as in the \emph{online} or \emph{statistical} case where we are simply given samples from a distribution $\mathcal{D}$ over
vectors $a \in \mathbb{R}^{d}$ and wish to compute the top eigenvector of $\E_{a\sim\dist} \left [aa^\top \right ]$, the covariance matrix of the distribution. 

Our algorithms are based on the classic idea of \emph{shift-and-invert} preconditioning for eigenvalue computation \cite{saad1992numerical}. We give a new robust analysis of the shifted-and-inverted power method, which allows us to efficiently reduce maximum eigenvector computation to \emph{approximately} solving a sequence of linear systems in the matrix $\lambda \bv I - \bv{A}^\top \bv {A}$ for some shift parameter $\lambda \approx \lambda_1(\bv{A})$. We then show how to solve these systems efficiently using variants of Stochastic Variance Reduced Gradient (SVRG) \cite{johnson2013accelerating} that optimize a convex function that is given as a sum of non-convex components.

\subsection{Our Approach}

The well known power method for computing the top eigenvector of a $\bv{A}^\top \bv{A}$ starts with a initial vector $x$ (often random) and repeatedly multiplies by $\bv{A}^\top \bv{A}$, eventually causing $x$ to converge to the top eigenvector. Assuming a random start vector, convergence requires $O \left ( \frac{\log(d/\epsilon)}{\gap} \right )$ iterations, where $\gap = \frac{\lambda_1-\lambda_2}{\lambda_1}$. 
The dependence on this gap is inherent to the power method and ensures that the largest eigenvalue is significantly amplified in comparison to the remaining values.

If the gap is small, one way to attempt to deal with this dependence is to replace $\bv{A}$ with a preconditioned matrix -- i.e. a matrix with the same top eigenvector but a much larger eigenvalue gap.
Specifically, let $\bv{B} = \lambda \bv{ I} - \bv{A}$ for some shift parameter $\lambda$. We can see that the smallest eigenvector of $\bv{B}$ (the largest eigenvector of $\bv{B}^{-1}$) is equal to the largest eigenvector of $\bv{A}$. Additionally, if $\lambda$ is near the largest eigenvalue of $\bv{A}$, $\lambda_1$, there will be a constant gap between the largest and second largest values of $\bv{B}^{-1}$. For example, if $\lambda = (1+\gap)\lambda_1$, then we will have $\lambda_1\left(\bv{B}^{-1} \right ) = \frac{1}{\lambda-\lambda_1} = \frac{1}{\gap \cdot \lambda_1}$ and $\lambda_2\left(\bv{B}^{-1} \right ) = \frac{1}{\lambda - \lambda_2} = \frac{1}{2\gap \cdot \lambda_1}$.

This constant factor gap means that running the power method on $\bv{B}^{-1}$ converges quickly to the top eigenvector of $\bv{A}$, specifically in $O \left (\log (d/\epsilon) \right )$ iterations. Of course, there is a catch -- each iteration of the shifted-and-inverted power method requires solving a linear system in $\bv{B}$. Furthermore, the condition number of $\bv{B}$ is proportional $\frac{1}{\gap}$, so as $\gap$ gets smaller, solving this linear system becomes more difficult for standard iterative methods.

Fortunately, the problem of solving linear systems is incredibly well studied and there are many efficient iterative algorithms we can adapt to apply $\bv{B}^{-1}$ approximately. In particular, we show how to accelerate the iterations of the shifted-and-inverted power method using variants of Stochastic Variance Reduced Gradient (SVRG) \cite{johnson2013accelerating}.
Due to the condition number of $\bv{B}$, we will not entirely avoid a $\frac{1}{\gap}$ dependence, however, we can separate this dependence from the input size $\nnz(\bv A)$.

Typically, stochastic gradient methods  are used to optimize convex functions that are given as the sum of many convex components. To solve a linear system $(\bv{M}^\top\bv{M}) x = b$ we minimize the convex function $f(x) = \frac{1}{2} x^\top (\bv{M}^\top\bv{M}) x + b^\top x$ with components $\psi_i(x) = \frac{1}{2} x^\top \left (m_i m_i^\top \right )x + \frac{1}{n}b^\top x$ where $m_i$ is the $i^{th}$ row of $\bv M$. Such an approach can be used to solve systems in $\bv{A}^\top \bv{A}$, however solving systems in $\bv{B} = \lambda \bv I - \bv A^\top \bv A$ requires more care. The components of our function are not so simple, and we require an analysis of SVRG that guarantees convergence even when some of these components are  \emph{non-convex}. We give a simple analysis for this setting, generalizing recent work appearing in the literature \cite{shalev2015sdca,csiba2015primal}.

Given fast approximate solvers for $\bv{B}$, the second main component required by our algorithmic framework is a new error bound for the shifted-and-inverted power method, showing that it is robust to the approximate linear system solvers, such as SVRG. 
We give a general robustness analysis, showing   
exactly what accuracy each system must be solved to, allowing for faster implementations using linear solvers with weaker guarantees. Our proofs center around the potential function $$G(x) \defeq
  \frac{\norm{\bv{P}_{v_1^{\perp}}x}_{\mb}}
  {\norm{\bv{P}_{v_1}x}_{\mb}}$$
where $\bv{P}_{v_1}$ and $\bv{P}_{v_1^{\perp}}$ are the projections onto the top eigenvector and its complement respectively. This function resembles tangent based potential functions used in previous work \cite{hardt2014noisy} except that the norms of the projections are measured over $\bv{B}$. For the exact power method, this is irrelevant -- it is not hard to see that progress is identical in both the $\ell_2$ and $\mb$ norms (see Lemma \ref{same_progress} of the Appendix). However, since $\norm{\cdot}_\mb$ is a natural norm for measuring the progress of linear system solvers for $\mb$, our potential function makes it possible to show that progress is also made when we compute $\bv{B}^{-1}x$ approximately up to some error $\xi$ with bounded $\norm{\xi}_\mb$.

\subsection{Our Results}

Our algorithmic framework described above offers several advantageous. Theoretically, we obtain improved running times for computing the top eigenvector. In the offline case, in Theorem \ref{main_offline_theorem} we give an algorithm running in time $O\left(\left [\nnz(\bv{A}) + \frac{d \nrank \bv{A}}{\gap^2}\right ] \cdot \left [ \log \frac{1}{\epsilon} + \log^2 \frac{d}{\gap} \right ]\right)$, where $\gap$ is the multiplicative gap between the largest and second largest eigenvalues, $\nrank(\bv{A}) = \norm{\bv A}_F^2 / \norm{\bv A}_2^2 \le \rank(\bv{A})$ is the stable rank of $\bv A$, and $\nnz(\bv{A})$ is the number of non-zero entries in the matrix. Up to log factors, our runtime is in many settings proportional to the input size $\nnz(\bv{A})$, and so is very efficient for large data matrices. In the case when $\nnz(\bv{A}) \le \frac{d\nrank(\bv{A})}{\gap^2}$ we also use the results of \cite{frostig2015regularizing, lin2015catalyst} to provide an accelerated runtime of $O\left(\left [\frac{\nnz(\bv{A})^{3/4}(d \nrank (\bv{A}))^{1/4}}{\sqrt{\gap}}\right ] \cdot \left [\log \frac{d}{\gap} \log \frac{1}{\epsilon} + \log^3 \frac{d}{\gap} \right ]\right)$, shown in Theorem \ref{accelerated_offline_theorem}.

Our algorithms return an approximate top eigenvector $x$ with $x^\top\bv{A}^\top \bv{A} x \ge (1-\epsilon)\lambda_1$. Note that, by choosing error $\epsilon \cdot \gap$, we can ensure that $x$ is actually close to $v_1$ -- i.e. that $|x^\top v_1| \ge 1-\epsilon$. Further, we obtain the same asymptotic runtime since $O\left (\log \frac{1}{\epsilon \gap} + \log^2 \frac{d}{gap} \right ) = O\left (\log \frac{1}{\epsilon} + \log^2 \frac{d}{gap} \right )$. We compare our runtimes with previous work in Table \ref{offline_table}. 

In the online case, in Theorem \ref{warmstart_online_theorem}, we show how to improve an $O(\gap)$ factor approximation to the top eigenvector to an $\epsilon$ approximation using $O\left (\frac{\nvar(\dist) \log\log(1/\epsilon)}{\gap^2} + \frac{\nvar(\dist)}{\gap \cdot \epsilon}\right)$ samples where $\nvar(\dist)$ is a natural upper bound on the variance of our distribution. Our algorithm is based off the streaming SVRG algorithm of \cite{frostig2014competing, lin2015catalyst}. It requires just $O(d)$ amortized time to process each sample, uses just $O(d)$ space, and is easy to parallelize. We can apply our result in a variety of regimes, using algorithms from the literature to obtain the initial $O(\gap)$ factor approximation and our algorithm to refine this solution. As shown in Table \ref{online_table}, this gives improved runtime and sample complexity over existing work. Notably, we give improved asymptotic sample complexity over known matrix concentration results for general distributions, and give the first streaming algorithm that is asymptotically optimal in the popular Gaussian spike model.

Our bounds hold for \emph{any} $\bv{A}$ or distribution $\mathcal{D}$.  In the offline case we require no initial knowledge of $\lambda_1(\bv{A})$, the eigenvalue gap, or the top eigenvector. We are hopeful that our online algorithm can also be made to work without such estimates.

Outside of our runtime results, our robust analysis of the shifted-and-inverted power method  provides new understanding of this well studied and widely implemented technique. It gives a means of obtaining provably accurate results when each iteration is implemented using fast approximate linear system solvers with rather weak accuracy guarantees.

In practice, this reduction between approximate linear system solving and eigenvector computation shows that regression libraries can be directly utilized to obtain faster running times for eigenvector computation in many cases. Furthermore, in theory we believe that our reduction suggests computational limits inherent in eigenvector computation as seen by the often easier-to-analyze problem of linear system solving. Indeed, in Section \ref{sec:lower}, we provide evidence that in certain regimes our statistical results are optimal.

We remark that during the preparation of our manuscript we found that
previously and independently Dan Garber and Elad Hazan had discovered
a similar technique using shift-and-invert preconditioning and SVRG for sums
of non-convex functions to improve the running time for offline eigenvector computation \cite{garber2015fast}.

\subsection{Previous Work}


\subsubsection*{Offline Eigenvector Computation}\label{previous_work_offline}

Due to its universal applicability, eigenvector computation in the offline case is extremely well studied. Classical methods, such as the QR algorithm, take roughly $O(nd^2)$ time to compute a full eigendecomposition. This can be accelerated to $O(nd^{\omega - 1})$, where $\omega < 2.373$ is the matrix multiplication constant \cite{williams2012multiplying,le2014powers}, however this is still prohibitively expensive for large matrices. Hence, faster iterative methods are often employed, especially when only the top eigenvector (or a few of the top eigenvectors) is desired.

As discussed, the popular power method requires $O \left ( \frac{\log(d/\epsilon)}{\gap} \right )$ iterations to converge to an $\epsilon$ approximate top eigenvector. Using Chebyshev iteration, or more commonly, the Lanczos method, this bound can be improved to $O \left ( \frac{\log(d/\epsilon)}{\sqrt{\gap}} \right )$ \cite{saad1992numerical}, giving total runtime of $O \left (\nnz(\bv A) \cdot \frac{\log(d/\epsilon)}{\sqrt{\gap}} \right )$.

Unfortunately, if $\bv{A}$ is very large and $\gap$ is small, this can still be quite expensive, and there is a natural desire to separate the $\frac{1}{\sqrt{\gap}}$ dependence from the $\nnz(\bv A)$ term. One approach is to use random subspace embedding matrices \cite{ailon2009fast,clarkson2013low} or fast row sampling algorithms \cite{cohen2015uniform}, which can be applied in $ O(\nnz(\bv A))$ time and yield a matrix $\bv{\tilde A}$ which is a good spectral approximation to the original. The number of rows in $\bv{\tilde A}$ depends only on the stable rank of $\bv A$ and the error of the embedding -- hence it can be significantly smaller than $n$. Applying such a subspace embedding and then computing the top eigenvector of $\bv{\tilde A}$ will require runtime $O \left (\nnz(\bv A) + \poly(\nrank(\bv{A}),\epsilon,\gap) \right )$, achieving the goal of reducing runtime dependence on the input size $\nnz(\bv A)$. Unfortunately, the dependence on $\epsilon$ will be significantly suboptimal -- such an approach cannot be used to obtain a linearly convergent algorithm. Further, the technique does not extend to online setting, unless we are willing to store a full subspace embedding of our sampled rows. 

Another approach, which we follow more closely, is to apply stochastic optimization techniques, which iteratively update an estimate to the top eigenvector, considering a random row of $\bv{A}$ with each update step. Such algorithms naturally extend to the online setting and have led to improved dependence on the input size for a variety of problems \cite{bottou2010large}. Using variance-reduced stochastic gradient techniques, \cite{shamir2015stochastic} achieves runtime $ O\left (\left (\nnz(\bv A) + \frac{dr^2n^2}{\gap^2\lambda_1^2} \right ) \cdot \log(1/\epsilon) \log\log(1/\epsilon) \right )$ for approximately computing the top eigenvector of a matrix with constant probability.  Here $r$ is an upper bound on the squared row norms of $\bv{A}$. In the \emph{best case}, when row norms are uniform, this runtime can be simplified to $O\left (\left (\nnz(\bv A) + \frac{d\nrank(\bv A)^2}{\gap^2} \right ) \cdot \log(1/\epsilon) \log\log(1/\epsilon) \right )$.

The result in \cite{shamir2015stochastic} makes an important contribution in separating input size and gap dependencies using stochastic optimization techniques.
Unfortunately, the algorithm requires an approximation to the eigenvalue gap and a starting vector that has a constant dot product with the top eigenvector. In \cite{shamir2015fast} the analysis is extended to a random initialization, however loses polynomial factors in $d$. Furthermore, the dependences on the stable rank and $\epsilon$ are suboptimal -- we improve them to $\nrank(\bv A)$ and $\log(1/\epsilon)$ respectively, obtaining true linear convergence.

\begin{table}[H]
\def\arraystretch{1.25}
\begin{center}
\begin{tabular}{|>{\centering}m{6.5cm}|c|}
\hline
\textbf{Algorithm}  & \textbf{Runtime}\\
\hline
Power Method & $O\left (\nnz(\bv{A})\frac{\log(d/\epsilon)}{\gap} \right )$ \\
\hline
Lanczos Method & $O\left (\nnz(\bv{A})\frac{\log(d/\epsilon)}{\sqrt{\gap}} \right )$ \\
\hline
Fast Subspace Embeddings \cite{clarkson2013low} Plus Lanczos  & $O\left (\nnz(\bv{A}) + \frac{d\nrank(\bv{A})}{\max \left \{ \gap^{2.5} \epsilon, \epsilon^{2.5} \right \} } \right )$ \\
\hline
SVRG \cite{shamir2015stochastic} (assuming uniform row norms and warm-start) & $O\left (\left (\nnz(\bv A) + \frac{d\nrank(\bv A)^2}{\gap^2} \right ) \cdot \log(1/\epsilon) \log\log(1/\epsilon) \right )$\\
\hline
\textbf{Theorem \ref{main_offline_theorem}} & $O\left(\left [\nnz(\bv{A}) + \frac{d \nrank \bv{A}}{\gap^2}\right ] \cdot \left [\log \frac{1}{\epsilon} + \log^2 \frac{d}{\gap} \right ]\right)$\\
\hline
\textbf{Theorem \ref{accelerated_offline_theorem}} & $O\left(\left [\frac{\nnz(\bv{A})^{3/4}(d \nrank(\bv{A}))^{1/4}}{\sqrt{\gap}}\right ] \cdot   \left[\log \frac{d}{\gap} \log \frac{1}{\epsilon} + \log^3 \frac{d}{\gap} \right ]\right)$\\
\hline
\end{tabular}
\caption{Comparision to Previous work on Offline Eigenvector Estimation. We give runtimes for computing $x$ such that $x^\top \bv{A}^\top\bv{A} x \ge (1-\epsilon)\lambda_1$.}
\label{offline_table}
\end{center}
\vspace{-1em}
\end{table}

\subsubsection*{Online Eigenvector Computation}

While in the offline case the primary concern is computation time, in the online, or statistical setting, research also focuses on minimizing the number of samples that we must draw from $\dist$ in order to achieve a given accuracy on our eigenvector estimate. Especially sought after are results that achieve asymptotically optimal accuracy as the sample size grows large.

While the result we give in Theorem \ref{warmstart_online_theorem} will work for any distribution parameterized by a variance bound, in this section, in order to more easily compare to previous work, we normalize $\lambda_1 = 1$ and assume we have the variance bound $\E_{a\sim\dist} \norm{(aa^\top)^2}_2 = O(d)$ along with the row norm bound $\norm{a}_2^2 \le O(d)$. Additionally, we compare runtimes for computing some $x$ such that $|x^\top v_1| \ge 1-\epsilon$, as this is the most popular guarantee studied in the literature. Theorem \ref{warmstart_online_theorem} is easily extended to this setting as obtaining $x$ with $x^T \bv{AA}^\top x \ge (1-\epsilon\cdot \gap) \lambda_1$ ensures $|x^\top v_1| \ge 1-\epsilon$. Our our algorithm requires $O\left ( \frac{d}{\gap^2 \epsilon} \right )$ samples to find such a vector under the assumptions given above.

 The simplest algorithm in this setting is to take $n$ samples from $\dist$ and compute the leading eigenvector of the empirical estimate $\widehat\E[a a^\top] = \frac{1}{n} \sum_{i=1}^n a_i a_i^\top$. By a matrix Bernstein bound, such as inequality Theorem 6.6.1 of \cite{tropp2015introduction}, $O\left ( \frac{d\log d}{\gap^2 \epsilon} \right )$ samples is enough to insure $\norm{\widehat\E[a a^\top] - \E[a a^\top]}_2 \le \sqrt{\epsilon}\gap$. 
By Lemma \ref{spectral_error_conversion} in the Appendix, this gives that, if $x$ is set to the top eigenvector of $\widehat\E[a a^\top] $ it will satisfy $|x^\top v_1| \ge 1-\epsilon$. Such an $x$ can then be approximated by applying any offline eigenvector algorithm to the empirical estimate.

A large body of work focuses on improving the computational and sample cost of this simple algorithm, under a variety of assumptions on $\dist$. The most common focus is on obtaining \emph{streaming algorithms}, in which the storage space is just $O(d)$ - proportional to the size of a single sample.

In Table \ref{online_table} we give a sampling of results in this area. All of these results rely on distributional assumptions at least as strong as those given above. In each setting, we can use the cited algorithm to first compute an $O(\gap)$ approximate eigenvector, and then refine this approximation to an $\epsilon$ approximation using $O\left ( \frac{d}{\gap^2 \epsilon} \right )$ samples by applying our streaming SVRG based algorithm. 
This allows us to obtain improved runtimes and sample complexities. Notably, by the lower bound shown in Section \ref{sec:lower}, in all settings considered in Table \ref{online_table}, we achieve optimal asymptotic sample complexity - as our sample size grows large, our $\epsilon$ decreases at an optimal rate. 
To save space, we do not include our improved runtime bounds in Table \ref{online_table}, however they are easy to derive by adding the runtime required by the given algorithm to achieve $O(\gap)$ accuracy, to $O\left (\frac{d^2}{\gap^{2} \epsilon}\right)$ -- the runtime required by our streaming algorithm.

The bounds given for the simple matrix Bernstein based algorithm described above,  Krasulina/Oja's Algorithm \cite{balsubramani2013fast}, and SGD \cite{shamir2015convergence}  require no additional assumptions, aside from those given at the beginning of this section.
The streaming results cited for \cite{mitliagkas2013memory} and \cite{hardt2014noisy} assume $a$ is generated from a
Gaussian spike model, where $a_i = \sqrt{\lambda_1}\gamma_i{v_1} + Z_i$
and $\gamma_i \sim \mathcal{N}(0, 1), Z_i \sim \mathcal{N}(0, I_d)$. We note that under this model, the matrix Bernstein results improve by a $\log d$ factor and so match our results in achieving asymptotically optimal convergence rate. The results of \cite{mitliagkas2013memory} and \cite{hardt2014noisy} sacrifice this optimality in order to operate under the streaming model. Our work gives the best of both works -- a streaming algorithm giving asymptotically optimal results. 

The streaming Alecton algorithm \cite{sa2015global} assumes $\E\norm{aa^\top \bv W a a^\top} \le O(1)\text{tr}(\bv W)$ for any symmetric $\bv W$ that commutes with $\E aa^\top$. This is a strictly stronger assumption than the assumption above that 
$\E_{a\sim\dist} \norm{(aa^\top)^2}_2 = O(d)$.



\begin{table}
\def\arraystretch{1.25}
\small
\begin{center}
\begin{tabular}{|c|c|c|c|c|}
\hline
\textbf{Algorithm} & \specialcell{\textbf{Sample}\\ \textbf{Size}} & \textbf{Runtime} & \textbf{Streaming?} & \specialcell{\textbf{Our Sample}\\ \textbf{Complexity}}\\ 
\hline
\specialcell{Matrix Bernstein plus \\ Lanczos (explicitly forming\\ sampled matrix)} & $O\left (\frac{d\log d}{gap^2 \epsilon}\right) $ & $O\left(\frac{d^3\log d}{gap^2\epsilon}\right)$ & $\times$ & $O\left (\frac{d\log d}{gap^3} + \frac{d}{gap^2 \epsilon}\right) $ \\ 
\hline
\specialcell{Matrix Bernstein plus \\Lanczos (iteratively applying\\ sampled matrix)} & $O\left (\frac{d\log d}{gap^2 \epsilon}\right) $ & $O\left (\frac{d^2\log d\cdot \log(d/\epsilon)}{gap^{2.5}\epsilon}\right)$ & $\times$ & $O\left (\frac{d\log d}{gap^3} + \frac{d}{gap^2 \epsilon}\right) $\\ 
\hline
\specialcell{Memory-efficient PCA \\\cite{mitliagkas2013memory, hardt2014noisy}} & $O\left(\frac{d\log (d/\epsilon)}{gap^3 \epsilon}\right )$ & $O\left (\frac{d^2\log (d/\epsilon)}{gap^3 \epsilon}\right)$ &  $\surd$ & $O\left(\frac{d\log (d/\gap)}{gap^4} + \frac{d}{gap^2 \epsilon}\right )$\\ 
\hline
Alecton \cite{sa2015global} & $O(\frac{d\log (d/\epsilon)}{gap^2 \epsilon})$ & $O(\frac{d^2\log (d/\epsilon)}{gap^2 \epsilon})$ & $\surd$ & $O(\frac{d\log (d/\gap)}{gap^3} + \frac{d}{gap^2 \epsilon})$ \\ 
\hline
\specialcell{Krasulina / Oja's \\Algorithm \cite{balsubramani2013fast}} & $O(\frac{d^{c_1}}{gap^2 \epsilon^{c_2}})$ &$O(\frac{d^{c_1+1}}{gap^2 \epsilon^{c_2}})$  & $\surd$ & $O(\frac{d^{c_1}}{gap^{2+c_2}} + \frac{d}{gap^2 \epsilon})$ \\ 
\hline
SGD \cite{shamir2015convergence} & $O(\frac{d^3\log (d/\epsilon)}{\epsilon^2})$ & $O(\frac{d^4\log (d/\epsilon)}{\epsilon^2})$ & $\surd$ & $O\left (\frac{d^3\log (d/\gap)}{\gap^2} +\frac{d}{\gap^2 \epsilon} \right ) $ \\
\hline
\end{tabular}
\caption{Summary of existing work on Online Eigenvector Estimation and improvements given by our results. Runtimes are for computing $x$ such that $|x^\top v_1| \ge 1-\epsilon$. For each of these results we can obtain improved running times and sample complexities by running the algorithm to first compute an $O(\gap)$ approximate eigenvector, and then running our algorithm to obtain an $\epsilon$ approximation using an additional $O\left ( \frac{d}{\gap^2 \epsilon} \right )$ samples, $O(d)$ space, and $O(d)$ work per sample.
	}
\label{online_table}
\end{center}
\end{table}

\subsection{Paper Organization}
\begin{description}
\item[Section \ref{prelims}] Review problem definitions and parameters for our runtime and sample bounds.
\item[Section \ref{framework}] Describe the shifted-and-inverted power method and show how it can be implemented using approximate system solvers.
\item[Section \ref{sec:offline}] Show how to apply SVRG to solve systems in our shifted matrix, giving our main runtime results for eigenvector computation in the offline setting.
\item[Section \ref{sec:online}] Show how to use an online variant of SVRG to run the shifted-and-inverted power method, giving our main sampling complexity and runtime results in the statistical setting.
\item[Section \ref{parameter_free}] Show how to efficiently estimate the shift parameters required by our algorithms, completing their analysis.
\item[Section \ref{sec:lower}] Give a lower bound in the statistical setting, showing that our results are asymptotically optimal.
\end{description}
\section{Preliminaries}\label{prelims}

We bold all matrix variables. 
We use $[n] \defeq \{1,...,n\}$. For a symmetric positive semidefinite (PSD) matrix $\bv{M}$ we let $\norm x_{\bv{M}} \defeq \sqrt{x^{\top}\bv{M} x}$ and we let $\lambda_1(M), ..., \lambda_n(M)$ denote its eigenvalues in decreasing order.
We use $\bv{M} \preceq \bv{N}$ to denote the condition that $x^{\top}\bv{M} x\leq x^{\top}\bv{N} x$
for all $x$. 

\subsection{The Offline Problem }

We are given a matrix $\ma\in\R^{n\times d}$ with rows $a^{(1)},...,a^{(n)}\in\R^{d}$
and wish to compute an approximation the top eigenvector of $\mSigma\defeq\ma^{\top}\ma$. Specifically for some error parameter $\epsilon$ we want a unit vector $x$ such that $x^\top \bv{\Sigma} x \ge (1-\epsilon) \lambda_1(\bv{\Sigma})$.

\subsection{The Statistical Problem}

We are given $n$ independent samples from a distribution $\dist$
on $\R^{d}$ and wish to compute the top eigenvector of $\mSigma\defeq\E_{a\sim\dist} \left [aa^\top \right ]$. Again, for some error parameter $\epsilon$ we want to return a unit vector $x$ such that $x^\top \bv{\Sigma} x \ge (1-\epsilon) \lambda_1(\bv{\Sigma})$.

\subsection{Problem Parameters}

We parameterize the running time of our algorithm in terms of several natural properties of $\ma$, $\dist$, and $\mSigma$. We let $\lambda_{1},...,\lambda_{d}$ denote the eigenvalues of $\mSigma$
in decreasing order and we let $v_1,... ,v_d$ denote their corresponding
eigenvectors. We define the \emph{eigenvalue gap} by $\gap\defeq\frac{\lambda_{1}-\lambda_{2}}{\lambda_{1}}$.

We use the following additional parameters to provide running times for the offline and statistical problems respectively:

\begin{itemize}
	\item \textbf{Offline Problem}: We let  $\nrank(\ma) \defeq\sum_{i}\frac{\lambda_{i}}{\lambda_{1}} = \frac{\norm{\ma}_F^2}{\norm{\ma}_2^2}$ denote the stable rank of $\ma$. Note that we always have $\nrank(\ma) \le \rank(\ma)$. We let $\nnz(\ma)$ denote the number of non-zero entries in $\ma$.
	\item \textbf{Online Problem}: We let $\nvar(\dist) \defeq \frac{\norm{\E_{a \sim \dist}\left [ \left (a a^\top \right )^2 \right ]}_2}{\norm{\E_{a \sim \dist} (aa^\top)}_2^2} = \frac{\norm{\E_{a \sim \dist}\left [ \left (a a^\top \right )^2 \right ]}_2}{\lambda_1^2}$ denote a natural upper bound on the variance of $\dist$ in various settings. Note that $\nvar(\dist) \ge 1$.
\end{itemize}


\section{Framework}\label{framework}

Here we develop our robust shift-and-invert framework. In Section~\ref{sec:framework:basics} we provide a basic overview of the framework and in Section~\ref{sec:framework:potential} we introduce the potential function we use to measure progress of our algorithms. In Section~\ref{sec:framework:power-iteration} we show how to analyze the framework given access to an exact linear system solver and in Section~\ref{sec:framework:approximate-power} we strengthen this analysis to work with an inexact linear system solver. Finally, in Section~\ref{sec:framework:init} we discuss initializing the framework.

\subsection{Shifted-and-Inverted Power Method Basics}
\label{sec:framework:basics}

We let $\mb_{\lambda}\defeq\lambda\mI-\mSigma$ denote the shifted matrix that we will use in our implementation of the shifted-and-inverted power method. As discussed, in order for $\bv{B}_\lambda^{-1}$ to have a large eigenvalue gap, $\lambda$ should be set to $(1+c \cdot \gap) \lambda_1$ for some constant $c \geq 0$. Throughout this section we assume that we have a crude estimate of $\lambda_1$ and $\gap$ and fix $\lambda$ to be a value satisfying  $\left (1 + \frac{\gap}{150}\right)\lambda_1 \le \lambda \le \left (1 + \frac{\gap}{100}\right)\lambda_1$. (See Section~\ref{parameter_free} for how we can compute such a $\lambda$). For the remainder of this section we work with such a fixed value of $\lambda$ and therefore for convenience denote $\bv{B}_\lambda$ as $\bv{B}$.

Note that $\lamiBinv{i} = \frac{1}{\lambda_i(\mb)} = \frac{1}{\lambda-\lambda_i}$ and so $\frac{\lamiBinv{1}}{\lamiBinv{2}} = \frac{\lambda-\lambda_2}{\lambda-\lambda_1} \ge \frac{\gap}{\gap/100} = 100.$ This large gap will ensure that, assuming the ability to apply $\bv{B}^{-1}$, the power method will converge very quickly. In the remainder of this section we develop our error analysis for the shifted-and-inverted power method which demonstrates that approximate application of $\bv{B}^{-1}$ in each iteration in fact suffices.

\subsection{Potential Function}
\label{sec:framework:potential}

Our analysis of the power method focuses on the objective of maximizing
the Rayleigh quotient, $x^\top \mSigma x$ for unit vector $x$. Note that as the following lemma shows, this has a directly correspondence to the error in maximizing $|v_1^\top x|$:

\begin{lemma}[Bounding Eigenvector Error by Rayleigh Quotient]
\label{lem:ray_to_evec}
For a unit vector $x$ let $\epsilon = \lambda_1 - x^\top \bv{\Sigma} x$. If $\epsilon \leq \lambda_1 \cdot \gap$ then
\[
\left | v_1^\top x \right |
\geq \sqrt{1 - \frac{\epsilon}{\lambda_1 \cdot \gap}}.
\]
\end{lemma}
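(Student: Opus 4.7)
The plan is a direct decomposition argument. Write $x$ in the eigenbasis of $\bv{\Sigma}$ as $x = \alpha v_1 + w$ where $w \perp v_1$, and $\alpha = v_1^\top x$, so that $\|w\|_2^2 = 1 - \alpha^2$ by unitness of $x$. The goal is to lower bound $\alpha^2$ in terms of $\epsilon$.

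Next, I would expand the Rayleigh quotient using this decomposition:
\[
x^\top \bv{\Sigma} x = \alpha^2 \lambda_1 + w^\top \bv{\Sigma} w.
\]
Since $w$ lies in the orthogonal complement of $v_1$, its span is contained in the eigenspaces of $\bv{\Sigma}$ with eigenvalue at most $\lambda_2$, so $w^\top \bv{\Sigma} w \le \lambda_2 \|w\|_2^2 = \lambda_2 (1 - \alpha^2)$. Substituting gives
\[
x^\top \bv{\Sigma} x \le \alpha^2 \lambda_1 + \lambda_2 (1 - \alpha^2) = \lambda_1 - (1 - \alpha^2)(\lambda_1 - \lambda_2).
\]

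Rearranging using $\epsilon = \lambda_1 - x^\top \bv{\Sigma} x$ and the definition $\lambda_1 - \lambda_2 = \lambda_1 \cdot \gap$ yields
\[
(1 - \alpha^2) \lambda_1 \cdot \gap \le \epsilon,
\]
i.e.\ $\alpha^2 \ge 1 - \tfrac{\epsilon}{\lambda_1 \cdot \gap}$. The hypothesis $\epsilon \le \lambda_1 \cdot \gap$ ensures this right-hand side is nonnegative, so we may take square roots to conclude $|v_1^\top x| = |\alpha| \ge \sqrt{1 - \tfrac{\epsilon}{\lambda_1 \cdot \gap}}$.

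There is essentially no main obstacle: the statement follows from a one-line eigenbasis decomposition and the spectral bound $w^\top \bv{\Sigma} w \le \lambda_2 \|w\|_2^2$ on the orthogonal complement of $v_1$. The only subtlety worth flagging is that the hypothesis $\epsilon \le \lambda_1 \cdot \gap$ is exactly what is needed to keep the radicand nonnegative (otherwise the bound is vacuous, since $|v_1^\top x|$ could be $0$).
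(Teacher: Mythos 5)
Your proof is correct and takes essentially the same route as the paper: an eigenbasis decomposition of $x$ with the orthogonal component bounded via $w^\top \bv{\Sigma} w \le \lambda_2 \norm{w}_2^2$. The paper phrases it slightly differently (asserting the extremal $x$ lies in $\mathrm{span}\{v_1, v_2\}$ and computing exactly), but your direct inequality version is, if anything, a cleaner justification of the same one-line argument.
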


\begin{proof}
Among all unit vectors $x$ such that $\epsilon = \lambda_1 - x^\top \mSigma x$, a minimizer of $\left | v_1^\top x\right |$ has the form $x = (\sqrt{1 - \delta^2}) v_1 + \delta v_2$ for some $\delta$. We have
\begin{align*}
\epsilon
&=
\lambda_1 - x^\top \bv \Sigma x
= 
\lambda_1 - \lambda_1 (1 - \delta^2) - \lambda_2 \delta^2
=
(\lambda_1 - \lambda_2) \delta^2.
\end{align*}
Therefore by direct computation,
\begin{align*}
\left | v_1^\top x \right | = \sqrt{1 - \delta^2} = \sqrt{1 - \frac{\epsilon}{\lambda_1-\lambda_2}} = \sqrt{1 - \frac{\epsilon}{\lambda_1 \cdot \gap}} ~ .
\end{align*}
\end{proof}

In order to track the progress of our algorithm we use a more
complex potential function than just the Rayleigh quotient error, $\lambda_1 - x^\top \bv{\Sigma}
x$.  Our potential function $G$ is defined for $x \neq 0$ by
\begin{align*}
	G(x) \defeq
  \frac{\norm{\bv{P}_{v_1^{\perp}}\left(x\right)}_{\mb}}
  {\norm{\bv{P}_{v_1}(x)}_{\mb}} 
\end{align*}
where $\bv{P}_{v_1}(x)$ and $\bv{P}_{v_1^{\perp}}\left(x\right)$ denote the
projections of $x$ in the direction of $v_1$ and on the subspace
orthogonal to $v_1$ respectively. Equivalently, we have that:
\begin{align}\label{gequiv}
	G(x) =
  \frac{\sqrt{\norm{x}_{\mb}^{2}-\left(v_1^{\top}\mb^{1/2}x\right)^{2}}}{\abs{v_1^\top
  \mb^{1/2}x}} =  \frac{\sqrt{\sum_{i\geq 2} \frac{\alpha_i^2}{ \lambda_{i}(\bv{B}^{-1})}}}{\sqrt{\frac{\alpha_1^2}{ \lambda_{1}(\bv{B}^{-1})}}}.
\end{align}
where $\alpha_i = v_i^\top x$.

When the Rayleigh quotient error $\epsilon = \lambda_1 - x^\top\mSigma
x$ of $x$ is small, we can show a strong relation between $\epsilon$  and $G(x)$. We prove this in two parts. First we prove a technical lemma, Lemma~\ref{lem:potfunc1}, that we will use several time for bounding the numerator of $G$ and then we prove the connection in Lemma~\ref{lem:rayquot-potential}. 

\begin{lemma} \label{lem:potfunc1} For a unit vector $x$ and $\epsilon = \lambda_1 - x^\top \mSigma x$ if $\epsilon \leq \lambda_1 \cdot \gap$ then 
\[
\epsilon \leq 
x^\top \mb x - (v_1^\top \mb x) (v_1^\top x)
\leq \epsilon \left (1 + 
 \frac{\lambda - \lambda_1}{\lambda_1 \cdot \gap}\right).
\]
\end{lemma}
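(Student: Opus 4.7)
My plan is to work in the eigenbasis of $\bv{\Sigma}$ and reduce everything to explicit coefficient sums, then use Lemma~\ref{lem:ray_to_evec} to control a single leftover term. Writing $x = \sum_i \alpha_i v_i$ with $\sum_i \alpha_i^2 = 1$, the definition $\epsilon = \lambda_1 - x^\top \bv{\Sigma} x$ gives the clean identity
\[
\epsilon = \sum_{i \ge 2} \alpha_i^2 (\lambda_1 - \lambda_i).
\]
This is the form I want to compare the middle quantity to.

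Next I would expand the middle quantity $x^\top \bv{B} x - (v_1^\top \bv{B} x)(v_1^\top x)$ using $\bv{B} = \lambda\bv{I} - \bv{\Sigma}$. Since the $v_i$ are eigenvectors of $\bv{B}$ with eigenvalues $\lambda - \lambda_i$, one gets
\[
x^\top \bv{B} x = \sum_i \alpha_i^2 (\lambda - \lambda_i),
\qquad
(v_1^\top \bv{B} x)(v_1^\top x) = (\lambda - \lambda_1)\alpha_1^2,
\]
so the $i=1$ term cancels and the middle quantity becomes exactly $\sum_{i \ge 2} \alpha_i^2 (\lambda - \lambda_i)$. Splitting $\lambda - \lambda_i = (\lambda_1 - \lambda_i) + (\lambda - \lambda_1)$ rewrites this as
\[
\sum_{i \ge 2} \alpha_i^2 (\lambda - \lambda_i) = \epsilon + (\lambda - \lambda_1)\bigl(1 - \alpha_1^2\bigr).
\]

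The lower bound is immediate since $\lambda > \lambda_1$ makes the second summand nonnegative. For the upper bound, the only work is to bound $1-\alpha_1^2$. Since $\epsilon \le \lambda_1 \cdot \mathrm{gap}$, Lemma~\ref{lem:ray_to_evec} applies and yields $\alpha_1^2 = (v_1^\top x)^2 \ge 1 - \frac{\epsilon}{\lambda_1 \cdot \mathrm{gap}}$, i.e.\ $1-\alpha_1^2 \le \frac{\epsilon}{\lambda_1 \cdot \mathrm{gap}}$. Substituting yields the claimed upper bound $\epsilon\bigl(1 + \tfrac{\lambda-\lambda_1}{\lambda_1 \cdot \mathrm{gap}}\bigr)$.

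I do not expect a real obstacle here: the argument is essentially bookkeeping once one notices the $i=1$ cancellation and that $1-\alpha_1^2$ is exactly what Lemma~\ref{lem:ray_to_evec} controls. The only thing to be a little careful about is the hypothesis $\epsilon \le \lambda_1 \cdot \mathrm{gap}$, which is needed precisely so that Lemma~\ref{lem:ray_to_evec} can be invoked.
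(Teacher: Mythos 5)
Your proof is correct and is essentially the paper's argument: both reduce the middle quantity to the identity $x^\top \mb x - (v_1^\top \mb x)(v_1^\top x) = \epsilon + (\lambda-\lambda_1)\left(1-(v_1^\top x)^2\right)$ (you via the eigenbasis, the paper by direct expansion of $\mb = \lambda\mI - \mSigma$) and then invoke Lemma~\ref{lem:ray_to_evec} for the upper bound and $\lambda > \lambda_1$ for the lower bound.
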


\begin{proof}
Since $\mb = \lambda \mI - \mSigma$ and since $v_1$ is an eigenvector of 
$\mSigma$ with eigenvalue $\lambda_1$ we have
\begin{align*}
x^\top \mb x - (v_1^\top \mb x) (v_1^\top x)
&=
\lambda \norm{x}_2^2 - x^\top \mSigma x
- (\lambda v_1^\top x - v_1^\top \mSigma x) (v_1^\top x)
\\
&=
\lambda - \lambda_1 + \epsilon -
(\lambda v_1^\top x - \lambda_1 v_1^\top x) (v_1^\top x)
\\
&=
(\lambda - \lambda_1) \left(1 - (v_1^\top x)^2\right) + \epsilon.
\end{align*}
Now by Lemma~\ref{lem:ray_to_evec} we know that $| v_1^\top x |
\geq \sqrt{1 - \frac{\epsilon }{\lambda_1 \cdot \gap}}$, giving us the upper bound. Furthermore, since trivially $\left |v_1^\top x\right | \leq 1$ and $\lambda - \lambda_1 > 0$, we have the lower bound.
\end{proof}

\begin{lemma}[Potential Function to Rayleigh Quotient Error Conversion]\label{lem:rayquot-potential}
	For a unit vector $x$ and $\epsilon = \lambda_1 - x^\top \mSigma x$ if $\epsilon \leq \frac{1}{2}\lambda_1 \cdot \gap$, we have:
	\begin{align*}
		\frac{\epsilon}{\lambda-\lambda_1} \leq G(x)^2 \leq \left(1+\frac{\lambda-\lambda_1}{\lambda_1 \cdot \gap}\right) \left(1+\frac{2\epsilon}{\lambda_1 \cdot \gap}\right)\frac{\epsilon}{\lambda-\lambda_1}.
	\end{align*}
\end{lemma}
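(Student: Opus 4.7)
The plan is to rewrite $G(x)^2$ in a form where both numerator and denominator decouple nicely using the fact that $v_1$ is an eigenvector of $\mb$, and then apply the two previous lemmas (Lemma~\ref{lem:potfunc1} and Lemma~\ref{lem:ray_to_evec}) to bound each piece separately.

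First I would observe that since $\mSigma v_1 = \lambda_1 v_1$, we have $\mb v_1 = (\lambda - \lambda_1) v_1$ and hence $\mb^{1/2} v_1 = \sqrt{\lambda - \lambda_1}\, v_1$. Plugging this into the middle expression in \eqref{gequiv}, the denominator of $G(x)^2$ becomes $(v_1^\top \mb^{1/2} x)^2 = (\lambda - \lambda_1)(v_1^\top x)^2$, while the numerator can be rewritten as $\|x\|_\mb^2 - (v_1^\top \mb^{1/2} x)^2 = x^\top \mb x - (v_1^\top \mb x)(v_1^\top x)$ (using $v_1^\top \mb x = (\lambda-\lambda_1) v_1^\top x$).

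Next, I would bound numerator and denominator separately. The numerator is exactly the quantity bounded in Lemma~\ref{lem:potfunc1}, giving
\[
\epsilon \;\leq\; x^\top \mb x - (v_1^\top \mb x)(v_1^\top x) \;\leq\; \epsilon\!\left(1 + \frac{\lambda-\lambda_1}{\lambda_1 \cdot \gap}\right).
\]
For the denominator, Lemma~\ref{lem:ray_to_evec} yields $(v_1^\top x)^2 \geq 1 - \frac{\epsilon}{\lambda_1 \cdot \gap}$, while trivially $(v_1^\top x)^2 \leq 1$, so
\[
(\lambda-\lambda_1)\!\left(1 - \tfrac{\epsilon}{\lambda_1 \cdot \gap}\right) \;\leq\; (v_1^\top \mb^{1/2} x)^2 \;\leq\; \lambda-\lambda_1.
\]
The lower bound on $G(x)^2$ follows immediately by dividing the lower bound on the numerator by the upper bound on the denominator.

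For the upper bound, dividing the upper numerator bound by the lower denominator bound gives $G(x)^2 \leq \frac{\epsilon}{\lambda-\lambda_1}\!\left(1 + \frac{\lambda-\lambda_1}{\lambda_1 \cdot \gap}\right) \cdot \frac{1}{1 - \epsilon/(\lambda_1 \gap)}$. The only remaining step is the elementary inequality $\frac{1}{1-t} \leq 1 + 2t$ valid for $t \in [0, 1/2]$, applied with $t = \epsilon/(\lambda_1 \cdot \gap) \leq 1/2$ (this is where the hypothesis $\epsilon \leq \tfrac{1}{2}\lambda_1 \cdot \gap$ is used). This converts the $1/(1-t)$ factor into the claimed $(1 + 2\epsilon/(\lambda_1 \gap))$ factor and completes the proof. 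No step is really an obstacle; the only mild subtlety is recognizing the numerator as the quantity already controlled by Lemma~\ref{lem:potfunc1}, which is why that lemma was isolated.
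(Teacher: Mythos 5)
Your proof is correct and follows essentially the same route as the paper: rewrite $G(x)^2$ as $\frac{x^\top \mb x - (v_1^\top \mb x)(v_1^\top x)}{(\lambda-\lambda_1)(v_1^\top x)^2}$ using that $v_1$ is an eigenvector of $\mb$, bound the numerator with Lemma~\ref{lem:potfunc1} and the denominator with Lemma~\ref{lem:ray_to_evec}, then apply $\frac{1}{1-t}\leq 1+2t$ for $t\leq 1/2$. No gaps.
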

\begin{proof}
	Since $v_1$ is an eigenvector of $\mb$, we can write $G(x)^2 = \frac{x^\top \mb x - (v_1^\top \mb x) (v_1^\top x)}{(v_1^\top \mb x) (v_1^\top x)}$. Lemmas~\ref{lem:ray_to_evec} and~\ref{lem:potfunc1} then give us:
	\begin{align*}
		\frac{\epsilon}{\lambda-\lambda_1} \leq G(x)^2 \leq \left(1+\frac{\lambda-\lambda_1}{\lambda_1 \cdot \gap}\right)\frac{\epsilon}{\left(\lambda-\lambda_1\right) \left(1-\frac{\epsilon}{\lambda_1 \cdot \gap}\right)}.
	\end{align*}
	Since $\epsilon \leq \frac{1}{2}\lambda_1 \cdot \gap$, we have $\frac{1}{1-\frac{\epsilon}{\lambda_1 \cdot \gap}} \leq 1+\frac{2\epsilon}{\lambda_1 \cdot \gap}$. This proves the lemma.
\end{proof}

\subsection{Power Iteration}
\label{sec:framework:power-iteration}

Here we show that the shifted-and-inverted power iteration in fact makes progress with respect to our objective function given an exact linear system solver for $\mb$. Formally, we show that applying $\mb^{-1}$ to a vector $x$ decreases the potential function $G(x)$ geometrically.

\begin{theorem}\label{thm:powermethod}
	Let $x$ be a unit vector with $\inprod{x,v_1} \neq 0$ and let $\xtilde = \mb^{-1}w$, i.e. the power method update of $\mb^{-1}$ on $x$. Then, we have:
	\begin{align*}
		G(\xtilde) \leq \frac{\lamiBinv{2}}{\lamiBinv{1}} G(x)
		\leq \frac{1}{100} G(x).
	\end{align*}
\end{theorem}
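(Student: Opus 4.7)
The plan is a direct calculation in the eigenbasis of $\mSigma$ (which is also the eigenbasis of $\mb$ and $\mb^{-1}$), using the closed-form expression for $G$ given in equation~(\ref{gequiv}).

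First I would write $x = \sum_i \alpha_i v_i$ so that, since each $v_i$ is an eigenvector of $\mb^{-1}$ with eigenvalue $\lambda_i(\mb^{-1}) = 1/(\lambda-\lambda_i)$, the power method update $\xtilde = \mb^{-1} x$ has coordinates $\tilde\alpha_i \defeq v_i^\top \xtilde = \lambda_i(\mb^{-1})\, \alpha_i$ in the eigenbasis. Note that the assumption $\inprod{x,v_1}\neq 0$ gives $\alpha_1\neq 0$, and hence $\tilde\alpha_1 \neq 0$, so $G(\xtilde)$ is well-defined.

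Next, using the right-hand expression in~(\ref{gequiv}) applied to $\xtilde$ and to $x$, I would compute
\begin{align*}
G(\xtilde)^2
= \frac{\sum_{i\geq 2} \tilde\alpha_i^2/\lambda_i(\mb^{-1})}{\tilde\alpha_1^2/\lambda_1(\mb^{-1})}
= \frac{\sum_{i\geq 2} \lambda_i(\mb^{-1})\,\alpha_i^2}{\lambda_1(\mb^{-1})\,\alpha_1^2},
\qquad
G(x)^2 = \frac{\sum_{i\geq 2} \alpha_i^2/\lambda_i(\mb^{-1})}{\alpha_1^2/\lambda_1(\mb^{-1})}.
\end{align*}
Introducing the nonnegative weights $\beta_i \defeq \alpha_i^2/\lambda_i(\mb^{-1})$ for $i\geq 2$, the numerator of $G(\xtilde)^2$ becomes $\sum_{i\geq 2} \lambda_i(\mb^{-1})^2 \beta_i$, while the denominator is $\lambda_1(\mb^{-1})\,\alpha_1^2$; similarly $G(x)^2 = \lambda_1(\mb^{-1}) \sum_{i\geq 2} \beta_i / \alpha_1^2$. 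Taking the ratio yields
\begin{align*}
\frac{G(\xtilde)^2}{G(x)^2}
= \frac{\sum_{i\geq 2} \lambda_i(\mb^{-1})^2 \beta_i}{\lambda_1(\mb^{-1})^2 \sum_{i\geq 2} \beta_i}.
\end{align*}

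Finally, since $\beta_i\geq 0$ and $\lambda_i(\mb^{-1}) \leq \lambda_2(\mb^{-1})$ for all $i\geq 2$ (the eigenvalues of $\mb^{-1}$ are $1/(\lambda-\lambda_i)$, which are nonnegative and maximized over $i\geq 2$ at $i=2$ since $\lambda_1\geq \lambda_i$ and $\lambda>\lambda_1$), the convex-combination-style bound gives
\begin{align*}
\frac{G(\xtilde)^2}{G(x)^2} \leq \frac{\lambda_2(\mb^{-1})^2}{\lambda_1(\mb^{-1})^2}.
\end{align*}
Taking square roots yields the first inequality of the theorem, and the second inequality $G(\xtilde)\leq \tfrac{1}{100} G(x)$ follows from the gap bound $\lambda_1(\mb^{-1})/\lambda_2(\mb^{-1}) \geq 100$ established in Section~\ref{sec:framework:basics} from the choice $\lambda \leq (1+\gap/100)\lambda_1$. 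There is no genuine obstacle here beyond careful bookkeeping of the exponents of $\lambda_i(\mb^{-1})$ after applying the weighted form of $G$; the robustness claims against inexact solves will be the real work in the next subsection.
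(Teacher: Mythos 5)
Your proposal is correct and is essentially the paper's own argument: both expand $x$ in the eigenbasis, use the weighted form of $G$ from \eqref{gequiv}, and bound the numerator by pulling out $\lambda_2(\mb^{-1})^2 / \lambda_1(\mb^{-1})^2$ before taking square roots, then invoke the gap bound $\lambda_1(\mb^{-1})/\lambda_2(\mb^{-1}) \geq 100$. The only cosmetic difference is that you phrase it as a ratio $G(\xtilde)^2/G(x)^2$ (which silently assumes $G(x)\neq 0$, a trivial degenerate case) while the paper writes the chain of inequalities directly.
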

Note that $\xtilde$ may no longer be a unit vector. However, $G(\xtilde, v_1) = G(c \xtilde, v_1)$ for any scaling parameter $c$, so the theorem also holds for $\xtilde$ scaled to have unit norm.
\begin{proof}
	Writing $x$ in the eigenbasis, we have $x=\sum_i \alpha_i v_i$ and  $\xtilde = \sum_i \alpha_i \lamiBinv{i}v_i$. Since $\inprod{x,v_1} \neq 0$, $\alpha_1 \neq 0$  and by the equivalent formation of $G(x)$ given in \eqref{gequiv}:
	\begin{align*}
		G(\xtilde) = \frac{\sqrt{\sum_{i\geq 2} \alpha_i^2 \lambda_{i}(\bv{B}^{-1})}}{\sqrt{\alpha_1^2 \lambda_{1}(\bv{B}^{-1})}}
		\leq \frac{\lamiBinv{2}}{\lamiBinv{1}} \cdot \frac{\sqrt{\sum_{i\geq 2} \frac{\alpha_i^2}{\lambda_{i}(\bv{B}^{-1})}}}{\sqrt{\frac{\alpha_1^2}{\lambda_1(\bv{B}^{-1})}}}
		= \frac{\lamiBinv{2}}{\lamiBinv{1}} \cdot G(x) ~.
	\end{align*}
Recalling that $\frac{\lamiBinv{1}}{\lamiBinv{2}} = \frac{\lambda-\lambda_2}{\lambda-\lambda_1} \ge \frac{\gap}{\gap/100} = 100$ yields the result. 
\end{proof}
The challenge in using the above theorem, and any traditional analysis of the shifted-and-inverted power method, is that we don't actually have access to $\mb^{-1}$. In the next section we show that the shifted-and-inverted power method is robust -- we still make progress on our objective function even if we only approximate $\mb^{-1}x$ using a fast linear system solver.

\subsection{Approximate Power Iteration}
\label{sec:framework:approximate-power}

We are now ready to prove our main result on the shifted-and-inverted power method using approximate linear system solves at each iteration. In words, we show that each iteration of the power method makes constant factor expected progress on our potential function assuming we:
\begin{enumerate}
\item Start with a sufficiently good $x$ and an approximation of $\lambda_1$ 
\item Apply $\bv{B}^{-1}$ approximately using a system solver such that the function error (or distance to $\bv{B}^{-1} x$ in the $\bv{B}$ norm) is sufficiently small \emph{in expectation}.
\item Estimate Rayleigh quotients over $\bv{\Sigma}$ well enough to only accept updates that do not hurt progress on the objective function too much.
\end{enumerate}

This third assumption is necessary since the second assumption is quite weak. An expected progress bound on the linear system solver allows, for example, the solver to occasionally return a solution that is entirely orthogonal to $v_1$, causing us to make unbounded backwards progress on our potential function. The third assumption allows us to reject possibly harmful updates and ensure that we still make progress in expectation. In the offline setting, we can access $\bv{A}$ and are able able to compute Rayleigh quotients exactly in time $\nnz(\bv{A})$ time. However, we only assume the ability to estimate quotients as in the online setting we only have access to $\mSigma$ through samples from $\dist$.

While, our general theorem for the approximate power iteration, Theorem~\ref{thm:powermethod-perturb}, assumes that we can solve linear systems to some absolute accuracy in expectation, this is not the standard assumption for many linear system solvers. Many fast iterative linear system solvers assume an initial approximation to $\bv{B}^{-1} x$ and then show that the quality
of this approximation is then improved geometrically in each iteration of the algorithm. In Corollary~\ref{cor:constant_factor_corollary} we show how to find a  coarse initial approximation to $\bv{B}^{-1} x$, in fact just approximating $\bv{B}^{-1}$ with $\frac{1}{x^\top \mb x} x$. Moreover, using this course approximation in Corollary~\ref{cor:constant_factor_corollary} we show that Theorem~\ref{thm:powermethod-perturb} actually implies that it suffices to just make a fixed relative amount of progress in solving the linear system. 

Note that in both claims we measure error of the linear system solver using $\norm{\cdot}_\mb$. This is a natural norm in which geometric convergence is shown for many linear system solvers and directly corresponds to the function error of minimizing $f(w) = \frac{1}{2} w^\top \mb w - w^\top x$ to compute $\mb^{-1} x$.

\begin{theorem}[Approximate Shifted-and-Inverted Power Iteration -- Warm Start]\label{thm:powermethod-perturb}
	Let $x=\sum_i \alpha_i v_i$ be a unit vector such that $G(x) \leq \frac{1}{\sqrt{10}}$. Suppose we know some shift parameter $\lambda$ with $\left (1 + \frac{\gap}{150} \right ) \lambda_1< \lambda \le \left (1 + \frac{\gap}{100} \right ) \lambda_1$ and an estimate $\lambdah_1$ of $\lambda_1$ such that
	$
		\frac{10}{11} \left(\lambda-\lambda_1\right) \leq \lambda-\lambdah_1 \leq \lambda-\lambda_1
	$.
	Furthermore, suppose we have a subroutine $\mathrm{solve}(\cdot)$ that on any input $x$
	\begin{align*}
		\expec{\norm{\solve{x} - \mb^{-1}x}_{\mb}} \leq  \frac{c_1}{1000}
		\sqrt{\lambda_1(\bv{B}^{-1})},
	\end{align*}
	for some $c_1 < 1$,
	and a subroutine $\rayquoth{\cdot}$ that on any input $x \neq 0$
	\begin{align*}
		\abs{\rayquoth{x} - \rayquot(x)}
			\leq \frac{1}{30}\left(\lambda-\lambda_1\right) \;  \text{ for all nonzero } \; x \in \R^d.
	\end{align*}
	where $\rayquot(x) \defeq \frac{x^\top \mSigma x}{x^\top x}$.

	Then the following update procedure:
	\begin{align*}
		\text {Set }\xhat = \solve{x},
	\end{align*}
	\begin{align*}
		\text {Set } \xtilde = \left\{
		\begin{array}{cc}
		\xhat& \mbox{ if } \left\{
		\begin{array}{c}
			\rayquoth{\xhat} \geq \lambdah_1 - \left(\lambda-\lambdah_1\right)/6 \mbox{ and } \\
			\norm{\xhat}_2 \geq \frac{2}{3}\frac{1}{\lambda-\lambdah_1}
		\end{array} \right. \\
		x & \mbox{ otherwise,}
		\end{array}
		\right.
	\end{align*}
	satisfies the following:
	\begin{itemize}
		\item	$G(\xtilde)\leq \frac{1}{\sqrt{10}}$ and
		\item	$\expec{G(\xtilde)}\leq \frac{3}{25} G(x) + \frac{c_1}{500}$.
	\end{itemize}
\end{theorem}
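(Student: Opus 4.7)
The plan is to decompose the approximate step as $\hat x = \bar x + \xi$ where $\bar x = \bv{B}^{-1}x$ is the exact shifted-and-inverted power step. Theorem~\ref{thm:powermethod} gives $G(\bar x) \le G(x)/100 \le 1/(100\sqrt{10})$, so the exact step makes strong progress, and the proof must quantify how the solver error $\xi$ and the acceptance tests interact.

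First I would prove a triangle-inequality bound on $G(\hat x)$. Since $v_1$ is an eigenvector of $\bv{B}$, the projections $\bv{P}_{v_1}$ and $\bv{P}_{v_1^{\perp}}$ are $\bv{B}$-orthogonal, giving
\[
G(\hat x) \;\le\; \frac{\|\bv{P}_{v_1^{\perp}}\bar x\|_{\bv{B}} + \|\xi\|_{\bv{B}}}{\|\bv{P}_{v_1}\bar x\|_{\bv{B}} - \|\xi\|_{\bv{B}}}
\]
whenever the denominator is positive. Because $G(x)^2 \le 1/10$ implies $\alpha_1^2 \ge 10/11$, we have $\|\bv{P}_{v_1}\bar x\|_{\bv{B}} = |\alpha_1|\sqrt{\lambda_1(\bv{B}^{-1})} \ge \sqrt{10/11}\sqrt{\lambda_1(\bv{B}^{-1})}$, so this bound is useful whenever $\|\xi\|_{\bv{B}}$ is a small fraction of $\sqrt{\lambda_1(\bv{B}^{-1})}$.

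Second I would establish the deterministic bound $G(\tilde x) \le 1/\sqrt{10}$. Under rejection $\tilde x = x$ and the bound is inherited from the hypothesis. Under acceptance, chaining the test $\rayquoth(\hat x) \ge \hat\lambda_1 - (\lambda-\hat\lambda_1)/6$, the $\rayquoth$ accuracy $(\lambda-\lambda_1)/30$, and the bounds $\lambda_1 \le \hat\lambda_1 \le \lambda_1 + (\lambda-\lambda_1)/11$ yields $\rayquot(\hat x) \ge \lambda_1 - (\lambda-\lambda_1)/5$. Combining this with the norm lower bound $\|\hat x\|_2 \ge \tfrac{2}{3}(\lambda - \hat\lambda_1)^{-1}$ and plugging into the identity $G(\hat x)^2 = \hat x^{\top}\bv{B}\hat x/((\lambda-\lambda_1)(v_1^{\top}\hat x)^2) - 1$, with Lemma~\ref{lem:ray_to_evec} supplying a lower bound on $(v_1^{\top}\hat x/\|\hat x\|)^2$, should yield $G(\hat x) \le 1/\sqrt{10}$.

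Third I would obtain the expected bound by conditioning on $\|\xi\|_{\bv{B}}$. On the event $\{\|\xi\|_{\bv{B}} \le \tau\}$ for $\tau = \Theta(\sqrt{\lambda_1(\bv{B}^{-1})})$, the proximity of $\hat x$ to $\bar x$ forces both acceptance tests to pass (since $\bar x$ itself comfortably satisfies them), and the triangle bound yields $G(\hat x) \le (3/25)G(x) + O(\|\xi\|_{\bv{B}}/\sqrt{\lambda_1(\bv{B}^{-1})})$. On the complement, either the tests fail (so $\tilde x = x$ and $G$ is unchanged) or they pass (so $G(\tilde x) \le 1/\sqrt{10}$ by the deterministic bound), and Markov's inequality on $\|\xi\|_{\bv{B}}$ controls the probability of this event. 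Combining the events and taking expectation, using $\mathbb{E}\|\xi\|_{\bv{B}} \le (c_1/1000)\sqrt{\lambda_1(\bv{B}^{-1})}$, should absorb all contributions into the stated bound $(3/25)G(x) + c_1/500$.

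The main obstacle is matching the constants, especially for the deterministic bound in Step~2. Using only the Rayleigh-quotient test, a direct application of Lemma~\ref{lem:rayquot-potential} with $\epsilon \le (\lambda - \lambda_1)/5$ yields roughly $G(\hat x)^2 \le 1/5$, which is too loose by a factor of about two; the norm test must be leveraged carefully in combination with the Rayleigh quotient to sharpen this to $1/10$. Simultaneously, the acceptance thresholds must be loose enough that the solver's error bound forces rejection to be rare enough that the progress rate $(3/25)G(x)$ is achievable in expectation. Threading this tension, stringent enough for the deterministic bound but loose enough to rarely reject, is the delicate part of the proof.
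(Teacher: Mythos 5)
Your architecture matches the paper's (decompose $\xhat = \mb^{-1}x + \xi$, use Theorem~\ref{thm:powermethod} for the exact step, show that small $\norm{\xi}_{\mb}$ forces acceptance, and control rejection via Markov), but the expectation bookkeeping in your Step~3 has a genuine gap. You condition on $\{\norm{\xi}_{\mb}\le\tau\}$ with $\tau=\Theta(\sqrt{\lamiBinv{1}})$ and, on the complement, fall back on the deterministic cap $G(\xtilde)\le 1/\sqrt{10}$. Markov gives that complement probability $\Theta(c_1)$, so its contribution to $\expec{G(\xtilde)}$ is $\Theta(c_1)\cdot\frac{1}{\sqrt{10}}$ — roughly $c_1/30$, not $c_1/500$, and this term is not absorbed by $\frac{3}{25}G(x)$ when $G(x)$ is small. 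The paper avoids this by never capping: it shows that on the entire acceptance event $\mcalF$ the two tests yield a \emph{deterministic} lower bound $\norm{\bv{P}_{v_1}(\xhat)}_{\mb}\ge\frac{1}{2}\sqrt{\lamiBinv{1}}$ (the norm test bounds $\norm{\xhat}_2$ from below and the Rayleigh test forces $|v_1^\top\xhat|/\norm{\xhat}_2\ge 3/4$ via Lemma~\ref{lem:ray_to_evec}), so that pointwise on $\mcalF$ one has $G(\xhat)\le\frac{G(x)}{50}+\frac{2\norm{\xi}_{\mb}}{\sqrt{\lamiBinv{1}}}$. The dependence on $\xi$ is then linear, the large-$\xi$ portion of $\mcalF$ enters only through $\expec{\norm{\xi}_{\mb}\mid\mcalF}\le\expec{\norm{\xi}_{\mb}}/\prob{\mcalF}$, and the stated constants follow. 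This deterministic denominator bound — rather than your $\norm{\bv{P}_{v_1}(\mb^{-1}x)}_{\mb}-\norm{\xi}_{\mb}$, whose positivity you cannot guarantee off the good event — is the structural role the acceptance tests play, and it is the missing idea in your proposal.

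On your Step~2: your instinct that the Rayleigh-quotient test alone gives only $G(\xhat)^2\lesssim 1/5$ is right, but your proposed remedy cannot work. Both $G(\cdot)$ and $\rayquot(\cdot)$ are invariant under rescaling of $\xhat$, while the norm test constrains only $\norm{\xhat}_2$; any direction passing the Rayleigh test can be scaled to pass the norm test, so the norm test adds no information about $G(\xhat)$ and cannot sharpen the deterministic bound. (For what it is worth, the paper's own computation here only establishes $G(\xhat)^2\le 1/\sqrt{10}$, via Lemma~\ref{lem:rayquot-potential} with $\lambda_1-\rayquot(\xhat)\le(\lambda-\lambda_1)/5$; closing this to $G(\xhat)\le 1/\sqrt{10}$ requires tightening the acceptance threshold or weakening the invariant, not invoking the norm test.)
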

That is, not only do we decrease our potential function by a constant factor in expectation, but we are guaranteed that the potential function will never increase beyond $1/\sqrt{10}$.
\begin{proof}
	The first claim follows directly from our choice of $\xtilde$ from $x$ and $\xhat$. 
	If $\xtilde = x$, it holds trivially by our assumption that $G(x) \leq \frac{1}{\sqrt{10}}$.  Otherwise, $\xtilde = \xhat$ and we know that 
	\begin{align*}
	\lambda_1 - \rayquot\left(\xhat\right) 
	&\leq \lambdah_1 - \rayquot\left(\xhat\right) \leq \lambdah_1 - \rayquoth{\xhat} + \abs{\rayquoth{\xhat} - \rayquot\left(\xhat\right)} 
	\\
	&\leq \frac{\lambda-\lambdah_1}{6} + \frac{\lambda-\lambda_1}{30} 
	\leq \frac{\lambda-\lambda_1}{5} \le \frac{\lambda_1 \cdot \gap}{500} ~.
	\end{align*} 
	The claim then follows from Lemma~\ref{lem:rayquot-potential} as
	\begin{align*}
		G(\xhat)^2 
		&\leq 
		\left(1 + \frac{\lambda - \lambda_1}{\lambda_1 \cdot \gap}\right) 
		\left(1 + \frac{2\left(\lambda_1 - \rayquot\left(\xhat\right)\right)}{\lambda_1 \cdot \gap}\right)
		\frac{\lambda_1 - \rayquot\left(\xhat\right)}{\lambda-\lambda_1} \\
		&\leq \frac{101}{100}\cdot \frac{251}{250} \cdot
		\frac{\left(\frac{\lambda_1 \cdot \gap}{500}\right)}
		{\left(\frac{\lambda_1 \cdot \gap}{150}\right)} 
		\le \frac{1}{\sqrt{10}} ~.
	\end{align*}
	
	All that remains is to show the second claim, that $\expec{G(\xtilde)}\leq \frac{3}{25} G(x) + \frac{4c_1}{1000}$. Let $\mcalF$ denote the event that we accept our iteration and set $x= \xhat = \solve{x}$. That is:
	\begin{align*}
	\mcalF &\defeq \set{\rayquoth{\xhat} \geq \lambdah_1 - 
		\frac{\lambda - \lambdah_1}{6}} \cup \set{\norm{\xhat}_2 \geq \frac{2}{3}\frac{1}{\lambda-\lambdah_1}}.
	\end{align*}
	Using our bounds on $\lambdah_1$ and $\rayquoth{\cdot}$, we know that $\rayquoth{x} \leq \rayquot(x) + (\lambda - \lambda_1) / 30$ and $\lambda - \lambdah_1 \leq  \lambda - \lambda_1$.
	Therefore, since $-1/6 - 1/30 \geq -1/2$ we have
	\begin{align*}
	\mathcal{F} &\subseteq \set{\rayquot\left(\xhat\right) \geq \lambda_1 - \left(\lambda-\lambda_1\right)/2} \cup \set{\norm{\xhat}_2 \geq \frac{2}{3}\frac{1}{\lambda-\lambda_1}},
	\end{align*}
	
	We will complete the proof in two steps. First we let $\xi \defeq \xhat - \mb^{-1}x$ and show that assuming $\mcalF$ is true then $G(\xhat)$ and
    $\norm{\xi}_{\mb}$ are linearly related, i.e. expected error bounds on $\norm{\xi}_{\mb}$ correspond to expected error bounds on $G(\xhat)$. Second, we bound the probability that $\mcalF$ does not occur and bound error incurred in this case. Combining these yields the result. 
    
    To show the linear relationship in the case where $\mcalF$ is true, first note Lemma~\ref{lem:ray_to_evec} shows that in this case $\abs{v_1^\top \frac{\xhat}{\norm{\xhat}_2}} \geq \sqrt{1-\frac{\lambda_1-\rayquot(\xhat)}{\lambda_1 \cdot \gap}} \geq \frac{3}{4}$. Consequently,
    \[
    \norm{\bv{P}_{v_1}\left(\xhat \right)}_{\mb}
    = \abs{v_1^\top\xhat}\sqrt{\lambda-\lambda_1}
    = \abs{v_1^\top \frac{\xhat}{\norm{\xhat}_2}}\cdot \norm{\xhat}\sqrt{\lambda-\lambda_1} 
     \geq \frac{3}{4}\cdot \frac{2}{3} \frac{1}{\sqrt{\lambda-\lambda_1}}
     =
     \frac{\sqrt{\lambda_1(\bv{B}^{-1})}}{2} ~.
    \]
    However, 
    \[
    \norm{\bv{P}_{v_1^{\perp}}\left(\xhat \right)}_{\mb}
	\leq \norm{\bv{P}_{v_1^{\perp}}\left(\mb^{-1} x \right)}_{\mb} + \norm{\bv{P}_{v_1^{\perp}}\left(\xi\right)}_\mb
		\leq \norm{\bv{P}_{v_1^{\perp}}\left(\mb^{-1} x \right)}_{\mb} + \norm{\xi}_{\mb} 
	\]
	and by Theorem~\ref{thm:powermethod} and the definition of $G$ we have
	\[
	\norm{\bv{P}_{v_1^{\perp}}\left(\mb^{-1} x \right)}_{\mb}
	= \norm{\bv{P}_{v_1}\left(\mb^{-1} x \right)}_{\mb} \cdot G(\mb^{-1} x)
	\leq \left(\abs{\inprod{x,v_1}} \sqrt{\lambda_1(\bv{B}^{-1})} \right)
	\cdot \frac{G(x)}{100} ~.
	\]
	Taking expectations, using that $\abs{\inprod{x,v_1}} \leq 1$, and combining these three inequalities yields
	\begin{equation}
	\expec{G(\xhat) \middle\vert  \mcalF}
	=
	\expec{\frac{\norm{\bv{P}_{v_1^{\perp}}\left(\mb^{-1} x \right)}_{\mb}}{\norm{\bv{P}_{v_1}\left(\mb^{-1} x \right)}_{\mb}} \middle\vert  \mcalF}
	\leq \frac{G(x)}{50}  
	+ 2 \frac{ \expec{\norm{\xi}_{\mb} 
			 \middle\vert \mcalF}}{\sqrt{\lambda_1(\bv{B}^{-1})}}
	\label{eqn:expec-potential}
	\end{equation}
	So, conditioning on making an update and changing $x$ (i.e. $\mathcal{F}$ occurring), we see that our potential function changes exactly as in the exact case (Theorem \ref{thm:powermethod}) with additional additive error due to our inexact linear system solve.

	Next we upper bound $\prob{\mcalF}$ and use it to compute $\expec{\norm{\xi}_{\mb}\middle\vert \mcalF}$. We will show that
	 $$\mcalG \defeq \set{\norm{\xi}_{\mb} \leq \frac{1}{100}\cdot \sqrt{\lamiBinv{1}} } \subseteq \mcalF$$ which then implies by Markov inequality that
	\begin{align}
		\prob{\mcalF} &\geq \prob{\norm{\xi}_{\mb} \leq \frac{1}{100}\cdot \sqrt{\lamiBinv{1}} } 
		\geq 1- \frac{\expec{\norm{\xi}_{\mb}}}{\frac{1}{100}\cdot \sqrt{\lamiBinv{1}} } \geq \frac{9}{10},\label{eqn:PF}
	\end{align}
	where we used the fact that $\E [\norm{\xi}_{\mb}] \le \frac{c_1}{1000} \sqrt{\lambda_1(\bv{B}^{-1})}$ for some $c_1 < 1$.
	
	Let us now show that $\mcalG \subseteq \mcalF$. Suppose $\mcalG$ is occurs. We can bound $\norm{\xhat}_2$ as follows:
	\begin{align}
		\norm{\xhat}_2 
		& \geq \norm{\mb^{-1}x}_2 - \norm{\xi}_2
		\geq \norm{\mb^{-1}x}-\sqrt{\lamiBinv{1}}\norm{\xi}_{\mb} \nonumber \\
		&\geq \abs{\alpha_1}\lamiBinv{1} - \frac{1}{100}\cdot \lamiBinv{1} \nonumber \\
		&= \frac{1}{\lambda-\lambda_1} \left(\abs{\alpha_1} - \frac{1}{100}\right)
		\geq \frac{3}{4}\frac{1}{\lambda-\lambda_1},\label{eqn:f2}
	\end{align}
	where we use Lemmas~\ref{lem:potfunc1} and~\ref{lem:rayquot-potential} to conclude that $\abs{\alpha_1} \geq \sqrt{1-\frac{1}{10}}$. We now turn to showing the Rayleigh quotient condition required by $\mcalF$. In order to do this, we first bound $\xhat^\top \mb \xhat - \left(v_1^\top \mb \xhat\right)\left(v_1^\top \xhat\right)$ and then use Lemma~\ref{lem:potfunc1}. We have:
	\begin{align*}
	\sqrt{\xhat^\top \mb \xhat - \left(v_1^\top \mb \xhat\right)\left(v_1^\top \xhat\right)} 
	&= \norm{\bv{P}_{v_1^{\perp}}\left(\xhat \right)}_\mb 
	\leq \norm{\bv{P}_{v_1^{\perp}}\left(\mb^{-1}x\right)}_\mb + \norm{\bv{P}_{v_1^{\perp}}\left( \xi \right)}_\mb \\
	&\leq \sqrt{\sum_{i\geq 2} \alpha_i^2 \lamiBinv{i}} + \frac{1}{100}\cdot \sqrt{\lamiBinv{1}} \\
	&\leq \sqrt{\lamiBinv{2}} + \frac{1}{100}\cdot \sqrt{\lamiBinv{1}}
	\leq \frac{1}{9}\sqrt{\lambda-\lambda_1},
	\end{align*}
	where we used the fact that $\lamiBinv{2}\leq \frac{1}{100}\lamiBinv{1}$ since $\lambda \le \lambda_1 + \frac{\gap}{100}$ in the last step. Now, using Lemma~\ref{lem:potfunc1} and the bound on $\norm{\xhat}_2$, we conclude that
	\begin{align}
		\lambdah_1 - \rayquoth{\xhat}
		&\leq \lambda_1 - \rayquot\left(\xhat\right) + \abs{\rayquot\left(\xhat\right)-\rayquoth{\xhat}} + \lambdah_1 - \lambda_1\nonumber \\
		&\leq \frac{\xhat^\top \mb \xhat - \left(v_1^\top \mb \xhat\right)\left(v_1^\top \xhat\right)}{\norm{\xhat}^2_2} + \frac{\lambda-\lambda_1}{30} + \frac{\lambda-\lambda_1}{11} \nonumber \\
		&\leq \frac{1}{81\left(\lambda-\lambda_1\right)} \cdot \frac{16}{9}\left(\lambda-\lambda_1\right)^2 + \frac{\lambda-\lambda_1}{8} \nonumber \\
		&\leq \left(\lambda-\lambda_1\right)/6
		\leq \left(\lambda-\lambdah_1\right)/4. \label{eqn:f1}
	\end{align}
	Combining~\eqref{eqn:f2} and~\eqref{eqn:f1} shows that $\mcalG \subseteq \mcalF$ there by proving~\eqref{eqn:PF}.
	
	Using this and the fact that $\norm{\cdot}_{\mb} \geq 0$ we can upper bound $\expec{\norm{\xi}_{\mb}\middle\vert \mcalF}$ as follows:
	\begin{align*}
		\expec{\norm{\xi}_{\mb}\middle\vert \mcalF} \leq \frac{1}{\prob{\mcalF}} \cdot \expec{\norm{\xi}_{\mb}}
		\leq \frac{c_1}{900}\cdot \sqrt{\lambda_1(\bv{B}^{-1})}
	\end{align*}
	Plugging this into~\eqref{eqn:expec-potential}, we obtain:
	\[
	\expec{G(\xhat)\middle\vert \mcalF} 
	\leq 
	\frac{1}{50} G(x) + \frac{2\expec{\norm{\xi}_{\mb}\middle\vert \mcalF}}{\sqrt{\lambda_1(\bv{B}^{-1})}}
	\leq \frac{1}{50}\cdot G(x) + \frac{2c_1}{900}.
	\]
	We can now finally bound $\expec{G(\xtilde)}$ as follows:
	\begin{align*}
		\expec{G(\xtilde)} &= \prob{\mcalF} \cdot \expec{G(\xhat)\middle\vert \mcalF} + \left(1-\prob{\mcalF}\right) G(x) \\
		&\leq \frac{9}{10} \left ( \frac{1}{50}\cdot G(x) + \frac{2c_1}{900}\right) + \frac{1}{10} G(x) 
		= \frac{3}{25} G(x) + \frac{2c_1}{1000}.
	\end{align*}
	This proves the theorem.
\end{proof}

\begin{corollary} [Relative Error Linear System Solvers]
	\label{cor:constant_factor_corollary} 
		For any unit vector $x$, we have:
		\begin{equation}
		\label{eq:constant_factor_corollary}
		\norm{\frac{1}{x^\top \mb x} x - \mb^{-1}x}_{\mb} \leq \alpha_1\sqrt{\lambda_1(\bv{B}^{-1})} \cdot G(x) =  \lamiBinv{1} \sqrt{\sum_{i\geq 2} \frac{\alpha_i^2}{\lamiBinv{i}}},
		\end{equation}
		where $x=\sum_i \alpha_i v_i$ is the decomposition of $x$ along $v_i$.
	Therefore, instantiating Theorem \ref{thm:powermethod-perturb} with $c_1 = \alpha_1 G(x)$ gives $\E [G(\xtilde)] \le \frac{4}{25} G(x)$ as long as:
	\begin{align*}
		\expec{\norm{\solve{x}-\mb^{-1}x}_{\mb}} \le  \frac{1}{1000} \norm{\frac{1}{\lambda-x^\top \bv{\Sigma}x} x - \mb^{-1}x}_{\mb}.
	\end{align*}
\end{corollary}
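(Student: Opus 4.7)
The plan is to prove the inequality \eqref{eq:constant_factor_corollary} by a direct least-squares computation in the eigenbasis of $\mb$, and then to substitute that bound into Theorem~\ref{thm:powermethod-perturb} with the choice $c_1 = \alpha_1 G(x)$ to obtain the stated contraction.

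First I would write $x = \sum_i \alpha_i v_i$ in the common eigenbasis of $\mSigma$ and $\mb$, so that $\mb^{-1} x = \sum_i \alpha_i \lamiBinv{i} v_i$ and $x^\top \mb x = \sum_i \alpha_i^2 / \lamiBinv{i}$ (using $\norm{v_i}_\mb^2 = \lambda - \lambda_i = 1/\lamiBinv{i}$). For any scalar $c$, the $\mb$-norm error expands as
\[
\norm{c\, x - \mb^{-1}x}_\mb^2 \;=\; \sum_i \alpha_i^2 \frac{(c - \lamiBinv{i})^2}{\lamiBinv{i}}.
\]
Differentiating in $c$ shows that the unique minimizer is $c^\star = (\sum_i \alpha_i^2)/(\sum_i \alpha_i^2/\lamiBinv{i}) = 1/(x^\top \mb x)$, where I used $\norm{x}_2 = 1$. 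Hence the choice $c = 1/(x^\top \mb x)$ is no worse than the suboptimal choice $c = \lamiBinv{1}$, which kills the $i=1$ term and leaves $\sum_{i \ge 2} \alpha_i^2 (\lamiBinv{1} - \lamiBinv{i})^2/\lamiBinv{i}$. Since $0 \le \lamiBinv{1} - \lamiBinv{i} \le \lamiBinv{1}$ for $i \ge 2$, this is at most $\lamiBinv{1}^2 \sum_{i \ge 2} \alpha_i^2/\lamiBinv{i}$, giving the second equality in \eqref{eq:constant_factor_corollary}. The first equality is then just the identity $\sqrt{\sum_{i \ge 2} \alpha_i^2/\lamiBinv{i}} = \sqrt{\alpha_1^2/\lamiBinv{1}} \cdot G(x)$, which is precisely the formulation of $G$ given by \eqref{gequiv}.

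For the second statement, I would plug \eqref{eq:constant_factor_corollary} into the assumed relative accuracy bound on $\solve{\cdot}$, which immediately verifies the absolute error hypothesis of Theorem~\ref{thm:powermethod-perturb} with the parameter $c_1 = \alpha_1 G(x)$. Note $c_1 < 1$ since $|\alpha_1| \le 1$ and $G(x) \le 1/\sqrt{10}$, so Theorem~\ref{thm:powermethod-perturb} applies and yields
\[
\E[G(\xtilde)] \;\le\; \tfrac{3}{25} G(x) + \tfrac{c_1}{500} \;=\; \tfrac{3}{25} G(x) + \tfrac{\alpha_1 G(x)}{500} \;\le\; \tfrac{3}{25} G(x) + \tfrac{G(x)}{500} \;=\; \tfrac{61}{500} G(x) \;\le\; \tfrac{4}{25} G(x),
\]
as desired.

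There is no real obstacle here beyond the one clean observation that $1/(x^\top \mb x)$ is the $\mb$-norm-optimal scalar approximation to $\mb^{-1}x$ along the direction of $x$; once that is in hand, the rest is algebraic bookkeeping and a direct invocation of Theorem~\ref{thm:powermethod-perturb}.
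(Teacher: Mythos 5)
Your proposal is correct and follows essentially the same route as the paper: both arguments rest on the observation that $1/(x^\top \mb x)$ is the $\mb$-norm-optimal scalar multiple of $x$ approximating $\mb^{-1}x$ (the paper phrases this via the quadratic objective $f(w) = \frac{1}{2}w^\top \mb w - w^\top x$, you via direct eigenbasis expansion and differentiation), and both then compare against the suboptimal choice $c = \lamiBinv{1}$ and bound $(\lamiBinv{1} - \lamiBinv{i})^2 \le \lamiBinv{1}^2$ for $i \ge 2$. The final substitution into Theorem~\ref{thm:powermethod-perturb} with $c_1 = \alpha_1 G(x)$ and the arithmetic $\frac{3}{25} + \frac{1}{500} \le \frac{4}{25}$ also match the paper.
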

\begin{proof}
	Since $\mb$ is PSD we see that if we let $f(w) = \frac{1}{2} w^\top \mb w - w^\top x$, then the minimizer is $\mb^{-1} x$. Furthermore note that $\frac{1}{x^\top \mb x} = \argmin_\beta
	f(\beta x)$ and therefore
	\begin{align*}
	 \norm{\frac{1}{x^\top \mb x} x - \mb^{-1}x}_{\mb}^2
	&=  x^\top \mb^{-1} x - \frac{1}{x^\top \mb x}
	= 2 \left [f\left(\frac{x}{x^\top \mb x}\right) - f(\mb^{-1} x) \right] \\
	= & 2 \left [\min_\beta f(\beta x) - f(\mb^{-1} x) \right]
	\le 2 \left[f(\lamiBinv{1}x) - f(\mb^{-1} x) \right] \\
	= &\lamiBinv{1}^2 x^\top \mb x - 2 \lamiBinv{1} x^\top x + x^\top \mb^{-1} x \\
	= & \sum_{i=1}^d \abs{v_i^\top \mb^{\frac{1}{2}} x}^2(\lamiBinv{1}-\lamiBinv{i})^2
	\le \lamiBinv{1}^2\sum_{i\ge 2} \abs{v_i^\top \mb^{\frac{1}{2}} x}^2 \\
	= & \lamiBinv{1}^2\sum_{i\geq 2} \frac{\alpha_i^2}{\lamiBinv{i}},
	\end{align*}
	which proves \eqref{eq:constant_factor_corollary}.	
	
Consequently
\begin{align*}
\frac{c_1}{1000}\sqrt{\lambda_1(\bv{B}^{-1})} = \frac{1}{1000} \alpha_1 G(x)\sqrt{\lambda_1(\bv{B}^{-1})} \ge \frac{1}{1000} \norm{\frac{1}{x^\top \mb  x} x - \mb^{-1}x}_{\mb}
\end{align*}
which with Theorem~\ref{thm:powermethod-perturb} then completes the proof.
\end{proof}

\subsection{Initialization}
\label{sec:framework:init}

Theorem \ref{thm:powermethod-perturb} and Corollary \ref{cor:constant_factor_corollary} show that, given a good enough approximation to $v_1$, we can rapidly refine this approximation by applying the shifted-and-inverted power method. In this section, we cover initialization. That is, how to obtain a good enough approximation to apply these results.


We first give a simple bound on the quality of a randomly chosen start vector $x_0$. 

\begin{lemma}[Random Initialization Quality]\label{lem:init-random}
    Suppose $x \sim \mathcal{N}(0, \bv{I})$, and we initialize $x_0$ as $\frac{x}{\norm{x}_2}$, then with probability greater than $1- O\left (\frac{1}{d^{10}} \right )$, we have:
    \begin{equation*}
    	G(x_0) \le \sqrt{\kappa(\mb^{-1})} d^{10.5} \le 15 \frac{1}{\sqrt{\gap}} \cdot d^{10.5}
    \end{equation*}
    where $\kappa(\mb^{-1}) = \lambda_1(\mb^{-1}) / \lambda_d(\mb^{-1)}$.
\end{lemma}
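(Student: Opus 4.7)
The plan is to exploit the rotational invariance of the isotropic Gaussian together with the scale-invariance of $G$ (i.e.\ $G(cx) = G(x)$ for any nonzero scalar $c$), which means it suffices to bound $G(x)$ directly for $x \sim \mathcal{N}(0,\bv{I})$ without worrying about the normalization $x_0 = x/\|x\|_2$. Writing $x$ in the eigenbasis of $\bv\Sigma$ (equivalently of $\mb$) as $x = \sum_i \alpha_i v_i$, rotational symmetry guarantees that $\alpha_1,\dots,\alpha_d$ are i.i.d.\ $\mathcal{N}(0,1)$.

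Next I would plug this decomposition into the equivalent form of $G$ given in \eqref{gequiv}. Using $\lambda_i(\mb) = 1/\lambda_i(\mb^{-1})$ and the observation that $\lambda_i(\mb) \le \lambda_d(\mb)$ for every $i \ge 2$, I get
\[
G(x)^2 \;=\; \frac{\sum_{i\ge 2}\alpha_i^2 \lambda_i(\mb)}{\alpha_1^2\,\lambda_1(\mb)} \;\le\; \frac{\lambda_d(\mb)}{\lambda_1(\mb)} \cdot \frac{\sum_{i\ge 2}\alpha_i^2}{\alpha_1^2} \;\le\; \kappa(\mb^{-1})\cdot \frac{\|x\|_2^2}{\alpha_1^2},
\]
since $\lambda_d(\mb)/\lambda_1(\mb) = \lambda_1(\mb^{-1})/\lambda_d(\mb^{-1}) = \kappa(\mb^{-1})$. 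This reduces the problem to showing that the scalar random variable $\|x\|_2^2/\alpha_1^2$ is at most $d^{21}$ with the claimed probability.

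For that I would invoke two standard Gaussian facts. First, $\|x\|_2^2$ is $\chi^2_d$-distributed, so $\|x\|_2^2 \le 2d$ with probability $1-\exp(-\Omega(d))$, much better than the $1-O(d^{-10})$ we need. Second, Gaussian anti-concentration gives $\Pr[|\alpha_1|\le t]\le t\sqrt{2/\pi}$, so choosing $t$ a small polynomial in $1/d$ (for example $t = 1/d^{10.25}$) yields $\alpha_1^2 \ge 1/d^{20.5}$ except with probability $O(1/d^{10})$. A union bound then gives $\|x\|_2^2/\alpha_1^2 \le d^{21}$ with probability $1-O(1/d^{10})$, proving the first inequality up to an absorbed constant.

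Finally, for the second inequality I would simply bound $\kappa(\mb^{-1}) = (\lambda-\lambda_d)/(\lambda-\lambda_1)$ directly from the hypotheses on $\lambda$: $\lambda-\lambda_d \le \lambda \le (1+\gap/100)\lambda_1$ and $\lambda-\lambda_1 \ge (\gap/150)\lambda_1$, yielding $\kappa(\mb^{-1}) \le 150(1+\gap/100)/\gap \le 225/\gap$ (assuming WLOG $\gap \le 1$), so $\sqrt{\kappa(\mb^{-1})}\le 15/\sqrt{\gap}$. There is no real obstacle here; the entire argument is routine, with the only mild bit of care being choosing the constants in the anti-concentration threshold and the bound on $\|x\|_2^2$ so that the final exponent on $d$ comes out exactly to $10.5$.
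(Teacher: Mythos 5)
Your proposal is correct and follows essentially the same route as the paper: exploit scale invariance of $G$, pass to the eigenbasis where the coordinates are i.i.d.\ standard Gaussians, pull out the condition number $\kappa(\mb^{-1})$, and combine a $\chi^2_d$ concentration bound on $\norm{x}_2^2$ with Gaussian anti-concentration for $\alpha_1 = v_1^\top x$. The only additions are cosmetic: you tune the anti-concentration threshold slightly differently (a matter of constants, as you note) and you explicitly verify the bound $\sqrt{\kappa(\mb^{-1})} \le 15/\sqrt{\gap}$ from the assumed range of $\lambda$, which the paper states without proof.
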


\begin{proof}
	\begin{align*}
		G(x_0) =& G(x) = \frac{\norm{\bv{P}_{v_1^{\perp}}\left(x\right)}_{\mb}} {\norm{\bv{P}_{v_1}(x)}_{\mb}} = \frac{\sqrt{\norm{x}_{\mb}^{2}-\left(v_1^\top\mb^{1/2}x\right)^{2}}}{\abs{v_1^\top \mb^{1/2}x}}
		=\frac{\sqrt{\sum_{i\geq2} \frac{(v_i^\top x)^2}{\lamiBinv{i}}}}{\sqrt{\frac{(v_1^\top x)^2}{\lamiBinv{1}}}}, \\
		\le& \sqrt{\kappa(\mb^{-1})} \cdot \frac{\sqrt{\sum_{i\geq2}(v_i^\top x)^2}}{\abs{v_1^\top x}}
	\end{align*}
	Since $\{v_i^\top x\}_i$ are independent standard normal Gaussian variables.
	By standard concentration arguments, with probability greater than $1- e^{-\Omega(d)}$, we have $\sqrt{\sum_{i\geq2}(v_i^\top x)^2} = O\left (\sqrt{d} \right )$. Meanwhile, $v_1^\top x$ is just a one-dimensional standard Gaussian. It is easy to show $\Pr \left(\abs{v_1^\top x} \le \frac{1}{d^{10}} \right) = O\left (\frac{1}{d^{10}} \right)$, which finishes the proof.
\end{proof}

We now show that we can rapidly decrease our initial error to obtain the required $G(x) \le \frac{1}{\sqrt{10}}$ bound for Theorem \ref{thm:powermethod-perturb}.

\begin{theorem}[Approximate Shifted-and-Inverted Power Method -- Burn-In]\label{thm:init-offline}
	Suppose we initialize $x_0$ as in Lemma \ref{lem:init-random} and suppose we have access to a subroutine $\solve{\cdot}$ such that
	\begin{align*}
		\expec{\norm{\solve{x}-\mb^{-1}x}_{\mb}} \leq \frac{1}{3000 \kappa(\mb^{-1})d^{21}} \cdot \norm{\frac{1}{\lambda-\rayquot(x)} x - \mb^{-1}x}_{\mb}
	\end{align*}
	where $\kappa(\mb^{-1}) = \lambda_1(\mb^{-1}) / \lambda_d(\mb^{-1)}$.
	Then the following procedure,
	\begin{align*}
		x_{t} = \solve{x_{t-1}}/\norm{\solve{x_{t-1}}}
	\end{align*}
	after $T = O\left(\log d + \log \kappa(\mb^{-1}))\right)$ iterations  satisfies:
	\begin{align*}
		G(x_T) \leq \frac{1}{\sqrt{10}},
	\end{align*}
	with probability greater than $1- O(\frac{1}{d^{10}})$.
\end{theorem}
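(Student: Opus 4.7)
My plan is to show that the highly accurate solver assumed here lets us dispense with the acceptance test used in Theorem~\ref{thm:powermethod-perturb} and still obtain geometric contraction of $G$ per step, until we reach the regime where Theorem~\ref{thm:powermethod-perturb} itself applies. First, apply Lemma~\ref{lem:init-random} to get, with probability $1 - O(1/d^{10})$, the bound $G(x_0) \le \sqrt{\kappa(\mb^{-1})}\, d^{10.5}$, where I keep the tighter $\sqrt{\kappa(\mb^{-1})}$ form from that proof rather than the looser $1/\sqrt{\gap}$ form.

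For the per-iteration analysis, fix an iterate $x = x_{t-1}$, write $x = \sum_i \alpha_i v_i$, let $\xi = \solve{x} - \mb^{-1} x$, and set $r \defeq \norm{\xi}_\mb / \norm{\bv{P}_{v_1}(\mb^{-1} x)}_\mb$. The triangle inequality applied separately to the projections onto $v_1$ and $v_1^\perp$, combined with the exact-case bound $G(\mb^{-1}x) \le G(x)/100$ from Theorem~\ref{thm:powermethod}, yields
\[
G(x_t) = G(\solve{x}) \;\le\; \frac{G(x)/100 + r}{1 - r} \qquad \text{whenever } r < 1.
\]
The delicate point is controlling $r$. The calculation inside the proof of Corollary~\ref{cor:constant_factor_corollary} actually gives the stronger identity $\norm{\frac{1}{\lambda - \rayquot(x)}x - \mb^{-1}x}_\mb = \sqrt{\lamiBinv{1}}\,|\alpha_1|\, G(x)$ for unit $x$; together with $\norm{\bv{P}_{v_1}(\mb^{-1}x)}_\mb = |\alpha_1|\sqrt{\lamiBinv{1}}$, the assumed solver guarantee cancels the $|\alpha_1|$ factor to give
\[
\expec{r} \;\le\; \frac{G(x)}{3000\,\kappa(\mb^{-1})\, d^{21}}.
\]
This cancellation is crucial: since $|\alpha_1|$ is tiny (possibly $\sim 1/d^{10.5}$) after initialization, without it the bound on $r$ would explode and no progress would be provable.

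Next, apply Markov's inequality iteration-wise: with probability $\ge 1 - 1/d^5$, $r_t \le G(x_{t-1})/(3000\,\kappa\, d^{16})$. Proceeding by induction on $t$, assume $G(x_{t-1}) \le G(x_0) \le \sqrt{\kappa}\, d^{10.5}$. Then $r_t \le 1/(3000\sqrt{\kappa}\, d^{5.5}) \ll 1/2$, so the one-step bound above gives $G(x_t) \le G(x_{t-1})/50$, which preserves the induction hypothesis trivially. Consequently, $G(x_t) \le G(x_0)\cdot 50^{-t}$, so $G(x_T) \le 1/\sqrt{10}$ after $T = O(\log G(x_0)) = O(\log d + \log \kappa(\mb^{-1}))$ iterations. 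A union bound over the $T$ iterations together with the initialization failure probability gives overall failure probability $O(1/d^{10})$.

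The main obstacle I expect is precisely verifying the $|\alpha_1|$ cancellation inside the solver bound; once that identity is in hand the remainder is routine bookkeeping. A secondary subtlety is that the Markov-based high-probability bound on $r_t$ is used inside the induction, so one must union-bound over iterations from the outset rather than analyzing the chain only in expectation — otherwise, bad tails in any single step could break the invariant $G(x_t) \le G(x_0)$ that makes $r_t$ small.
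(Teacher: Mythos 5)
Your proposal follows essentially the same route as the paper's proof: a triangle-inequality bound on the $v_1$ and $v_1^\perp$ projections of $\solve{x}$ combined with Theorem~\ref{thm:powermethod}, the bound from Corollary~\ref{cor:constant_factor_corollary} relating $\norm{\frac{1}{\lambda-\rayquot(x)}x-\mb^{-1}x}_\mb$ to $|\alpha_1|\sqrt{\lamiBinv{1}}\,G(x)$ so that the $|\alpha_1|$ factors cancel, and then Markov plus induction plus a union bound. (One cosmetic point: Corollary~\ref{cor:constant_factor_corollary} gives this as an inequality $\norm{\frac{1}{\lambda-\rayquot(x)}x-\mb^{-1}x}_\mb \le |\alpha_1|\sqrt{\lamiBinv{1}}\,G(x)$ rather than an identity, but that is the direction you need, so nothing breaks.) The one quantitative slip is in the probability accounting: applying Markov with threshold only $d^5$ times the mean gives per-iteration failure probability $1/d^5$, so the union bound over $T=O(\log d+\log\kappa(\mb^{-1}))$ iterations yields failure probability $O\bigl(\log(d\,\kappa(\mb^{-1}))/d^5\bigr)$, which does not match the claimed $1-O(1/d^{10})$. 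The fix is immediate given the slack you have: take the threshold to be roughly $\sqrt{\kappa(\mb^{-1})}\,d^{10.5}$ times the mean (as the paper does), which still leaves $r_t\le G(x_{t-1})/30$ and $r_t\le 1/2$, while driving the per-iteration failure probability down to $O\bigl(1/(\sqrt{\kappa(\mb^{-1})}\,d^{10.5})\bigr)$ so the union bound gives $O(1/d^{10})$.
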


\begin{proof}
As before, we first bound the numerator and denominator of $G(\xhat)$ more carefully as follows:
\begin{align*}
	\begin{array}{lrl}
	\textbf{Numerator:}	&\norm{\bv{P}_{v_1^{\perp}}\left(\xhat \right)}_{\mb}
	&\leq \norm{\bv{P}_{v_1^{\perp}}\left(\mb^{-1} x \right)}_{\mb} + \norm{\bv{P}_{v_1^{\perp}}\left(\xi\right)}_\mb
	\leq \norm{\bv{P}_{v_1^{\perp}}\left(\mb^{-1} x \right)}_{\mb} + \norm{\xi}_{\mb} \\
	& &=\sqrt{\sum_{i\geq2}\left(v_{i}^{T}B^{-1/2}x\right)^{2}}+\norm{\xi}_{\mb} = \sqrt{\sum_{i\geq2} \alpha_{i}^{2}\lamiBinv{i}} + \norm{\xi}_{\mb}, \\
	\textbf{Denominator:} &\norm{\bv{P}_{v_1}\left(\xhat \right)}_{\mb}
	&\geq  \norm{\bv{P}_{v_1}\left(\mb^{-1} x \right)}_{\mb} - \norm{\bv{P}_{v_1}\left(\xi\right)}_\mb
	\geq \norm{\bv{P}_{v_1}\left(\mb^{-1} x \right)}_{\mb} - \norm{\xi}_{\mb} \\
	&&=\abs{v_{i}^{T}B^{-1/2}x}-\norm{\xi}_{\mb} = \alpha_1\sqrt{\lamiBinv{1}} - \norm{\xi}_{\mb}
	\end{array}
\end{align*}

We now use the above estimates to bound $G(\xhat)$.
\begin{align*}
	G(\xhat)  &\leq \frac{\sqrt{\sum_{i\geq2} \alpha_{i}^{2}\lamiBinv{i}} + \norm{\xi}_{\mb}}{\alpha_1\sqrt{\lamiBinv{1}} - \norm{\xi}_{\mb}} 
	\leq \frac{\lamiBinv{2}\sqrt{\sum_{i\geq2} \frac{\alpha_{i}^{2}}{\lamiBinv{i}}} + \norm{\xi}_{\mb}}{\lamiBinv{1}\sqrt{\frac{\alpha_1^2}{\lamiBinv{1}}} - \norm{\xi}_{\mb}} \\
	&= G(x) \frac{\lamiBinv{2} + \norm{\xi}_{\mb}/\sqrt{\sum_{i\geq2} \frac{\alpha_{i}^{2}}{\lamiBinv{i}}}}{\lamiBinv{1} - \norm{\xi}_{\mb}/\sqrt{\frac{\alpha_1^2}{\lamiBinv{1}}}}
\end{align*}
By Lemma \ref{lem:init-random}, we know with at least probability $1- O(\frac{1}{d^{10}})$, 
we have $G(x_0) \le \sqrt{\kappa(\mb^{-1})} d^{10.5}$.

Conditioned on high probability result of $G(x_0)$, we now use induction to prove $G(x_t) \le G(x_0)$. It trivially holds for $t=0$.
Suppose we now have $G(x) \le G(x_0)$, then by the condition in Theorem \ref{thm:init-offline} and Markov inequality, we know with probability greater than $1-\frac{1}{100\sqrt{\kappa(\mb^{-1})} d^{10.5}}$ we have: 
\begin{align*}
	\norm{\xi}_B \le &\frac{1}{30\sqrt{\kappa(\mb^{-1})} d^{10.5}} \cdot \norm{\frac{1}{\lambda-\rayquot(x)} x - \mb^{-1}x}_{\mb}\\
	\le & \frac{1}{30} \cdot \norm{\frac{1}{\lambda-\rayquot(x)} x - \mb^{-1}x}_{\mb} \min \left \{1, \frac{1}{G(x_0)} \right \} \\
	\le &\frac{1}{30} \cdot \norm{\frac{1}{\lambda-\rayquot(x)} x - \mb^{-1}x}_{\mb} \min \left \{1, \frac{1}{G(x)} \right \} \\
	\le & \frac{\lamiBinv{1}-\lamiBinv{2}}{4}\min\left\{\sqrt{\sum_{i\geq2} \frac{\alpha_{i}^{2}}{\lamiBinv{i}}}, \sqrt{\frac{\alpha_1^2}{\lamiBinv{1}}}\right\}
\end{align*}
The last inequality uses Corollary~\ref{cor:constant_factor_corollary} with the fact that $\lamiBinv{2} \le \frac{1}{100}\lamiBinv{1}$. Therefore, we have:
We will have:
\begin{equation*}
	G(\xhat) \le  \frac{\lamiBinv{1} + 3\lamiBinv{2}}{3\lamiBinv{1}+\lamiBinv{2}} \times G(x)
	\le \frac{1}{2}G(x)
\end{equation*}
This finishes the proof of induction.

Finally, by union bound, we know with probability greater than $1- O(\frac{1}{d^{10}})$ in $T = O(\log d + \log \kappa(\mb^{-1}))$ steps, 
we have:
\begin{equation*}
	G(x_T) \le  \frac{1}{2^T} G(x_0) \le \frac{1}{\sqrt{10}}
\end{equation*}

\end{proof}
\section{Offline Eigenvector Computation}\label{sec:offline}

In this section we show how to instantiate the framework of Section \ref{framework} in order to compute an approximate top eigenvector in the offline setting. As discussed, in the offline setting we can trivially compute the Rayleigh quotient of a vector in $\nnz(\ma)$ time as we have explicit access to $\ma^\top \ma$. Consequently the bulk of our work in this section is to show how we can solve linear systems in $\mb$ efficiently in expectation, allowing us to apply Corollary \ref{cor:constant_factor_corollary} of Theorem \ref{thm:powermethod-perturb}.

In Section~\ref{sec:offline:svrg} we first show how Stochastic Variance Reduced Gradient (SVRG) \cite {johnson2013accelerating} can be adapted to solve linear systems of the form $\mb x = b$.
If we wanted, for example, to solve a linear system in a positive definite matrix like $\bv{A}^\top  \bv A$, we would optimize the objective function $f(x) = \frac{1}{2}x^\top \bv{A}^\top  \bv A x - b^\top x$. This function can be written as the sum of $n$ \emph{convex components}, $\psi_i(x) = \frac{1}{2} x^\top \left (a_i a_i^\top \right )x + \frac{1}{n}b^\top x$. In each iteration of traditional gradient descent, one computes the full gradient of $f(x_i)$ and takes a step in that direction. In stochastic gradient methods, at each iteration, a single component is sampled, and the step direction is based only on the gradient of the sampled component. Hence, we avoid a full gradient computation at each iteration, leading to runtime gains.

Unfortunately, while we have access to the rows of $\bv{A}$ and so can solve systems in $\bv A^\top \bv A$, it is less clear how to solve systems in $\bv{B} = \lambda \bv I - \bv A^\top \bv A$. To do this, we will split our function into components of the form $\psi_i(x) = \frac{1}{2} x^\top \left (w_i \bv I - a_i a_i^\top \right )x + \frac{1}{n}b^\top x$ for some set of weights $w_i$ with $\sum_{i \in [n]} w_i = \lambda$.

Importantly, $w_i \bv I - a_i a_i^\top$ may not be positive semidefinite. That is, we are minimizing a sum of functions which is convex, but consists of non-convex components. While recent results for minimizing such functions could be applied directly \cite{shalev2015sdca,csiba2015primal} here we show how to obtain stronger results by using a more general form of SVRG and analyzing the specific properties of our function (i.e. the variance).

Our analysis shows that we can make constant factor progress in solving linear systems in $\bv{B}$ in time $O\left (\nnz(\bv A) + \frac{d\nrank(\bv A)}{\gap^2} \right )$. If $\frac{d\nrank(\bv A)}{\gap^2} \le \nnz(\bv A)$ this gives a runtime proportional to the input size -- the best we could hope for. 
If not, we show in Section~\ref{sec:offline:acceleration}  that it is possible to \emph{accelerate} our system solver, achieving runtime  $\tilde O \left (\frac{\sqrt{\nnz(\bv A)d\nrank(\bv A)}}{\gap} \right )$. This result uses the work of \cite{frostig2015regularizing, lin2015catalyst} on accelerated approximate proximal point algorithms. 

With our solvers in place, in
Section~\ref{sec:offline:results} we pull our results together, showing how to use these solvers in the framework of Section \ref{framework} to give faster running times for offline eigenvector computation.

\subsection{SVRG Based Solver}
\label{sec:offline:svrg}

Here we provide a sampling based algorithm for solving linear systems in $\mb$. In particular we provide an algorithm for solving the more general problem where we are given a strongly convex function that is a sum of possibly non-convex functions that obey smoothness properties. We provide a general result on bounding the progress of an algorithm that solves such a problem by non-uniform sampling in Theorem~\ref{lem:svrg-nonconv} and then in the remainder of this section we show how to bound the requisite quantities for solving linear systems in $\mb$.

\newcommand{\avgsmooth}{\overline{S}}

\begin{theorem}[SVRG for Sums of Non-Convex Functions]
\label{lem:svrg-nonconv} 
Consider a set of functions, $\{\psi_{1}, \psi_2,...\psi_n \}$, each mapping $\R^{d}\rightarrow\R$.
Let $f(x) = \sum_{i}\psi_{i}(x)$ and let $\opt{x} \eqdef \argmin_{x \in \R^{d}} f(x)$. Suppose we have a probability distribution $p$ on $[n]$, 
and that starting from some initial point $x_{0} \in \R^d$ in each iteration $k$ we pick $i_{k} \in [n]$ independently with probability $p_{i_k}$ and let 
\[
x_{k+1}
:= 
x_{k}
-\frac{\eta}{p_{i}} 
\left(\grad\psi_{i}(x_k)-\grad\psi_{i}(x_0 )\right) + \eta \grad f(x_0)
\]
for some $\eta$. If $f$ is $\mu$-strongly convex and if for all $x \in \R^d$
we have
\begin{equation}
\label{eq:svrg-nonconvex-avgsmooth}
\sum_{i \in [n]} \frac{1}{p_{i}} \norm{\grad \psi_{i}(x) - \grad\psi_{i}(\opt{x})}_{2}^{2} 
\leq 
2 \avgsmooth \left[f(x) - f(\opt{x}) \right],
\end{equation}
where $\avgsmooth$ is a variance parameter,
then for all $m \geq 1$ we have
\[
\E \left[\frac{1}{m} \sum_{k\in[m]} f(x_{k}) - f(\opt{x})\right] \leq \frac{1}{1-2\eta\bar{S}} 
\left[\frac{1}{\mu\eta m}+2\eta\avgsmooth \right] \cdot 
\left[f(x_{0})-f(\opt{x})\right]
\]
Consequently, if we pick $\eta$ to be a sufficiently small multiple of $1/\bar{S}$
then when $m = O(\overline{S} / \mu)$ we can decrease the error by
a constant multiplicative factor in expectation.
\end{theorem}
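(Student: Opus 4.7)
The plan is to adapt the standard SVRG convergence argument, bypassing the usual appeal to co-coercivity of individual components (which would require each $\psi_i$ to be convex) by invoking the variance-type bound~\eqref{eq:svrg-nonconvex-avgsmooth} directly. First I would verify that the stochastic update direction $v_k \defeq \frac{1}{p_{i_k}}\left(\grad\psi_{i_k}(x_k) - \grad\psi_{i_k}(x_0)\right) + \grad f(x_0)$ is unbiased, i.e. $\expec{v_k \mid x_k} = \grad f(x_k)$, which is immediate from the definition of $p$. Thus $x_{k+1} = x_k - \eta v_k$ is a stochastic gradient step on $f$.

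The core work is to bound $\expec{\norm{v_k}_2^2 \mid x_k}$. I would split $v_k = A + B$ where $A \defeq \frac{1}{p_{i_k}}\left(\grad\psi_{i_k}(x_k) - \grad\psi_{i_k}(\opt x)\right)$ and $B \defeq \frac{1}{p_{i_k}}\left(\grad\psi_{i_k}(\opt x) - \grad\psi_{i_k}(x_0)\right) + \grad f(x_0)$, and apply $\norm{a+b}_2^2 \le 2\norm{a}_2^2 + 2\norm{b}_2^2$. For $A$, the conditional second moment is $\sum_i \tfrac{1}{p_i}\norm{\grad\psi_i(x_k) - \grad\psi_i(\opt x)}_2^2 \le 2\avgsmooth\bigl[f(x_k) - f(\opt x)\bigr]$ by~\eqref{eq:svrg-nonconvex-avgsmooth}. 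For $B$, note that since $\grad f(\opt x) = 0$, $B$ equals $\grad f(x_0) - \E\!\left[\tfrac{1}{p_{i_k}}(\grad\psi_{i_k}(x_0) - \grad\psi_{i_k}(\opt x))\right] + $ itself, i.e.\ it is a zero-mean random vector, so variance-is-at-most-second-moment plus~\eqref{eq:svrg-nonconvex-avgsmooth} yield $\expec{\norm{B}_2^2} \le 2\avgsmooth\bigl[f(x_0) - f(\opt x)\bigr]$. Altogether, $\expec{\norm{v_k}_2^2 \mid x_k} \le 4\avgsmooth\bigl[f(x_k) - f(\opt x)\bigr] + 4\avgsmooth\bigl[f(x_0) - f(\opt x)\bigr]$.

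Next I would track the potential $\norm{x_k - \opt x}_2^2$ via the standard one-step SGD expansion $\expec{\norm{x_{k+1}-\opt x}_2^2 \mid x_k} = \norm{x_k - \opt x}_2^2 - 2\eta\inprod{\grad f(x_k), x_k - \opt x} + \eta^2 \expec{\norm{v_k}_2^2 \mid x_k}$. Convexity of $f$ gives $\inprod{\grad f(x_k), x_k - \opt x} \ge f(x_k) - f(\opt x)$, and after substituting the variance bound and rearranging one obtains
\[
2\eta(1 - 2\eta\avgsmooth)\bigl[f(x_k) - f(\opt x)\bigr] \le \norm{x_k - \opt x}_2^2 - \expec{\norm{x_{k+1} - \opt x}_2^2 \mid x_k} + 4\eta^2\avgsmooth\bigl[f(x_0) - f(\opt x)\bigr].
\]

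Finally I would take total expectations, telescope this inequality from $k=0$ to $k=m-1$ (the iterate-distance terms cancel and $\norm{x_m - \opt x}_2^2 \ge 0$ can be dropped), and invoke $\mu$-strong convexity to bound $\norm{x_0 - \opt x}_2^2 \le \tfrac{2}{\mu}\bigl[f(x_0) - f(\opt x)\bigr]$. Dividing through by $2\eta m(1 - 2\eta\avgsmooth)$ yields exactly the stated bound (the $[m]$ vs.\ $\{0,\dots,m-1\}$ indexing shift is cosmetic and costs only a constant). Picking $\eta = \Theta(1/\avgsmooth)$ small enough that $1 - 2\eta\avgsmooth$ is bounded below by a constant, and then $m = \Theta(\avgsmooth/\mu)$, balances the two terms $\tfrac{1}{\mu\eta m}$ and $2\eta\avgsmooth$, giving a constant-factor multiplicative decrease. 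I do not foresee a substantive obstacle: the only nonstandard step is the variance bound on $B$, and the key conceptual replacement for per-component convexity is the zero-mean/second-moment trick combined with the hypothesis~\eqref{eq:svrg-nonconvex-avgsmooth}, which has already been packaged into exactly the form needed to plug into the standard SGD descent lemma.
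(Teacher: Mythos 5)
Your proposal is correct and follows essentially the same route as the paper's proof: the same decomposition of the stochastic direction into the $(x_k,\opt{x})$ and $(x_0,\opt{x})$ pieces, the same zero-mean/variance-is-at-most-second-moment trick (using $\grad f(\opt{x})=0$) to handle the second piece, and the same telescoping plus strong-convexity conversion of $\norm{x_0-\opt{x}}_2^2$ at the end. No substantive differences.
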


\begin{proof}
We first note that $\E_{i_{k}}[x_{k+1} - x_{k}] = \eta \grad f(x_{k})$. This is, in each iteration, in expectation, we make a step in the direction of the gradient. Using this fact we have:
\begin{align*}
\E_{i_{k}} \norm{x_{k+1} - \opt{x}}_{2}^{2} &= \E_{i_{k}} \norm{(x_{k+1} -x_k) + (x_k - \opt{x}) }_{2}^{2}  \\
&= \norm{x_k - \opt{x} }_{2}^{2} - 2 \E_{i_{k}}(x_{k+1} -x_k)^\top(x_k - \opt{x}) + \E_{i_{k}}\norm{x_{k+1} - x_k }_{2}^{2} \\
&=\norm{x_k - \opt{x} }_{2}^{2} - 2 \eta \grad f(x_k)^{\top} \left(x_{k} - \opt{x} \right)\\
&+ \sum_{i \in [n]} \eta^{2} p_{i} \normFull{
	\frac{1}{p_{i}} 
	\left(\grad \psi_{i}(x_{k}) - \grad\psi_{i}(x_{0}) \right)
	+ \grad f(x_{0})}_{2}^{2}
\end{align*}
We now apply the fact that $\norm{x + y}_2^2 \leq 2\norm{x}_2^2 + 2\norm{y}_2^2$ to give:
\begin{align*}
\sum_{i \in [n]}
& p_i \normFull{\frac{1}{p_i} 
\left(\grad \psi_i (x_k) - \grad \psi_i (x_0)\right) + \grad f(x_0)
}_2^2
\\
&
\leq 
\sum_{i \in [n]}
2 p_i \normFull{\frac{1}{p_i} 
	\left(\grad \psi_i (x_k) - \grad \psi_i (\opt{x}) \right)
}_2^2
+
\sum_{i \in [n]}
2 p_i \normFull{\frac{1}{p_i} 
	\left(\grad \psi_i (x_0) - \grad \psi_i(\opt{x}) \right) - \grad f(x_0)
}_2^2.
\end{align*}
Then, using that $\grad f(\opt{x}) = 0$ by optimality, that $\E\norm{x-\E x}_{2}^{2} \leq \E\norm{x}_{2}^{2}$, and  \eqref{eq:svrg-nonconvex-avgsmooth} we have:
\begin{align*}
\sum_{i \in [n]}
& p_i \normFull{\frac{1}{p_i} 
\left(\grad \psi_i (x_k) - \grad \psi_i (x_0)\right) + \grad f(x_0)
}_2^2
\\
&
\le
\sum_{i \in [n]}
\frac{2}{p_i}
 \normFull{\grad \psi_i (x_k) - \grad \psi_i (\opt{x})
}_2^2
+
\sum_{i \in [n]}
2 p_i \normFull{\frac{1}{p_i} 
	\left(\grad \psi_i (x_0) - \grad \psi_i(\opt{x})) - (\grad f(x_0) - \grad f(\opt{x})\right)
}_2^2
\\
&
\leq
\sum_{i \in [n]}
\frac{2}{p_i}
\normFull{\grad \psi_i (x_k) - \grad \psi_i (\opt{x})
}_2^2
+
\sum_{i \in [n]}
2 p_i \normFull{\frac{1}{p_i} 
	\grad \psi_i (x_0) - \grad \psi_i(\opt{x}))
}_2^2
\\
&
\leq 4 \avgsmooth
\left[ f(x_{k})-f(\opt{x}) + f(x_{0}) - f(\opt{x}) \right]
\end{align*}

Since $f(\opt{x}) - f(x_k) \geq \grad f(x_k)^\top (\opt{x} - x_k)$ by the convexity of $f$, these inequalities imply
\begin{align*}
\E_{i_{k}}\norm{x_{k+1} - \opt{x}}_{2}^{2} 
& 
\leq\norm{x_{k} - \opt{x}}_{2}^{2} 
- 2 \eta \left[f(x_{k}) - f(\opt{x})\right] 
+ 4 \eta^{2} \avgsmooth \left[f(x_{k}) - f(\opt{x}) + f(x_{0}) - f(\opt{x})\right] 
\\
& 
= \norm{x_{k} - \opt{x}}_{2}^{2}-2\eta(1-2\eta S)\left(f(x_{k})-f(\opt{x})\right)+4\eta^{2}\bar{S}\left(f(x_{0})-f(\opt{x})\right)
\end{align*}
Rearranging, we have:
\begin{align*}
2\eta(1-2\eta S)\left(f(x_{k})-f(\opt{x})\right) \le \norm{x_{k} - \opt{x}}_{2}^{2} - \E_{i_{k}}\norm{x_{k+1} - \opt{x}}_{2}^{2} + 4\eta^{2}\bar{S}\left(f(x_{0})-f(\opt{x})\right).
\end{align*}
And summing over all iterations and taking expectations we have:
\begin{align*}
\E \left [2\eta(1-2\eta \bar{S})\sum_{k\in[m]}f(x_{k})-f(\opt{x}) \right ] \le \norm{x_0 - \opt{x}}_2^2 + 4m\eta^{2}\bar{S}\left[f(x_{0})-f(\opt{x})\right].
\end{align*}
Finally, we use that
by strong convexity, $\norm{x_{0}-\opt{x}}_{2}^{2}\leq\frac{2}{\mu}\left(f(x_{0})-f(\opt{x})\right)$ to obtain:
	\[
	\E \left [2\eta(1-2\eta \bar{S})\sum_{k\in[m]}f(x_{k})-f(\opt{x})\right ]\leq \frac{2}{\mu}\left[f(x_{0})-f(\opt{x})\right]+4m\eta^{2}\bar{S}\left[f(x_{0})-f(\opt{x})\right]
	\]
	and thus
	\[
	\E \left [\frac{1}{m}\sum_{k\in[m]}f(x_{k})-f(\opt{x}) \right ]\leq\frac{1}{1-2\eta\bar{S}}\left[\frac{1}{\mu\eta m}+2\eta\bar{S}\right]\cdot\left[f(x_{0})-f(\opt{x})\right]
	\]
	
\end{proof}

Theorem \ref{lem:svrg-nonconv} immediately yields a solver for $\bv{B}x = b$. Finding the minimum norm solution to this system is equivalent to minimizing $f(x) = \frac{1}{2} x^\top \bv{B} x - b^\top x$. If we take the common approach of applying a smoothness bound for each $\psi_i$ along with a strong convexity bound on $f(x)$ we obtain:
\begin{lemma}[Simple Variance Bound for SVRG]\label{simple_variance_bound}
Let
\begin{align*}
	\psi_i(x) \defeq \frac{1}{2} x^\top \left(\frac{\lambda\norm{a_i}_2^2}{\norm{\bv A}_F^2} \mI - a_i a_i^\top \right) x 
	- \frac{1}{n}b^\top x
\end{align*}
so we have $\sum_{i\in[n]} \psi_i(x) = f(x) = \frac{1}{2} x^\top \bv{B} x - b^\top x$.
Setting $p_i = \frac{\norm{a_i}_2^2}{\norm{\bv A}_F^2}$ for all $i$, we have 
\begin{align*}
\sum_{i \in [n]} \frac{1}{p_i} \norm{\grad \psi_i(x)-\grad \psi_i(\opt{x})}_2^2 = O \left (
\frac{\norm{\bv A}_F^4}{\lambda - \lambda_1}
\left[f(x) - f(\opt{x})\right] \right )
\end{align*}
\end{lemma}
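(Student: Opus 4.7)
The plan is to directly compute the gradient differences, bound each term by Cauchy--Schwarz style inequalities, and then convert the resulting quadratic forms into a function-error bound using strong convexity of $f$.

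First I would note that the minimizer is $\opt{x} = \mb^{-1} b$, that $f(x) - f(\opt{x}) = \tfrac{1}{2}(x-\opt{x})^{\top} \mb (x-\opt{x})$, and that for each $i$ the gradient is linear in $x$, so with $y \defeq x - \opt{x}$ and $c_i \defeq \lambda \|a_i\|_2^2 / \|\bv{A}\|_F^2$ I have
\begin{equation*}
\grad \psi_i(x) - \grad \psi_i(\opt{x}) = (c_i \mI - a_i a_i^{\top}) y.
\end{equation*}
Applying $\|u+v\|_2^2 \leq 2\|u\|_2^2 + 2\|v\|_2^2$ splits the squared norm into the two pieces $2 c_i^2 \|y\|_2^2$ and $2 \|a_i\|_2^2 (a_i^{\top} y)^2$.

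Next I would sum these pieces weighted by $1/p_i = \|\bv{A}\|_F^2 / \|a_i\|_2^2$. The first contribution telescopes nicely because $\frac{c_i^2}{p_i} = \frac{\lambda^2 \|a_i\|_2^2}{\|\bv{A}\|_F^2}$, so $\sum_i \frac{c_i^2}{p_i} = \lambda^2$, giving $2\lambda^2 \|y\|_2^2$. The second contribution simplifies to $2\|\bv{A}\|_F^2 \sum_i (a_i^{\top} y)^2 = 2\|\bv{A}\|_F^2 \cdot y^{\top}\bv{\Sigma} y$. The bound so far is
\begin{equation*}
\sum_{i \in [n]} \tfrac{1}{p_i} \|\grad \psi_i(x) - \grad \psi_i(\opt{x})\|_2^2 \leq 2\lambda^2 \|y\|_2^2 + 2\|\bv{A}\|_F^2 \cdot y^{\top}\bv{\Sigma} y.
\end{equation*}

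Then I would convert each term to a multiple of $f(x) - f(\opt{x})$. Strong convexity of $f$ with parameter $\mu = \lambda - \lambda_1$ gives $\|y\|_2^2 \leq \frac{2}{\lambda-\lambda_1}[f(x) - f(\opt{x})]$, and $y^{\top}\bv{\Sigma} y = \lambda \|y\|_2^2 - y^{\top}\mb y \leq \lambda \|y\|_2^2$ bounds the second quadratic form by a multiple of the first. Combining these yields a bound of the form $\frac{4\lambda^2 + 4\lambda \|\bv{A}\|_F^2}{\lambda - \lambda_1}[f(x) - f(\opt{x})]$.

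The final step, and arguably the only non-mechanical one, is arriving at the stated $\|\bv{A}\|_F^4$ in the numerator. For this I would use that $\lambda \leq (1 + \gap/100)\lambda_1 \leq 2 \|\bv{A}\|_2^2 \leq 2\|\bv{A}\|_F^2$, so both $\lambda^2$ and $\lambda \|\bv{A}\|_F^2$ are $O(\|\bv{A}\|_F^4)$, delivering the claimed $O\!\left(\frac{\|\bv{A}\|_F^4}{\lambda-\lambda_1}[f(x)-f(\opt{x})]\right)$. I do not expect any real obstacle beyond keeping track of these constants; the only subtlety is the choice to apply $\|u+v\|_2^2 \leq 2\|u\|_2^2 + 2\|v\|_2^2$ rather than trying to estimate $\|(c_i \mI - a_i a_i^{\top}) y\|_2^2$ directly, which would be messier because $c_i \mI - a_i a_i^{\top}$ need not be PSD.
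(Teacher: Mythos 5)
Your proof is correct, but it takes a different decomposition than the paper's proof of this lemma. The paper bounds each term individually via the smoothness of $\psi_i$: it uses the operator-norm estimate $\norm{(c_i \mI - a_i a_i^\top) y}_2 \le (c_i + \norm{a_i}_2^2)\norm{y}_2$, squares, sums against $1/p_i$ to get $\normFro{\ma}^4(1 + \lambda/\normFro{\ma}^2)^2 \norm{y}_2^2$, and then applies strong convexity. You instead split the square via $\norm{u+v}_2^2 \le 2\norm{u}_2^2 + 2\norm{v}_2^2$ and exploit the aggregate structure: the scalar parts telescope to $2\lambda^2\norm{y}_2^2$ and the rank-one parts sum to $2\normFro{\ma}^2\, y^\top \mSigma y$, after which $y^\top \mSigma y \le \lambda \norm{y}_2^2$ and strong convexity finish. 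Both routes give $O\bigl(\normFro{\ma}^4/(\lambda-\lambda_1)\bigr)[f(x)-f(\opt{x})]$ with comparable constants, and both need the same final observation that $\lambda \le 2\lambda_1 \le 2\normFro{\ma}^2$. Worth noting: your intermediate bound $2\lambda^2\norm{y}_2^2 + 2\normFro{\ma}^2 y^\top \mSigma y$ is essentially the quantity the paper derives (exactly, without the factor-of-two loss) in its \emph{improved} variance bound, Lemma~\ref{variance_lemma}; had you bounded $y^\top \mSigma y$ via $\mSigma \preceq \frac{\lambda_1}{\lambda-\lambda_1}\mb$ rather than by $\lambda\norm{y}_2^2$, you would have recovered the sharper $O\bigl(\lambda_1 \normFro{\ma}^2/(\lambda-\lambda_1)\bigr)$ bound, which saves a factor of $\nrank(\ma)$. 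The paper's per-component smoothness argument for the simple lemma cannot be sharpened this way, so your route is the more extensible of the two.
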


\begin{proof}
	We first compute, for all $i \in [n]$ 
	\begin{align}\label{basic_gradient_computation}
	\grad \psi_i(x) =
	\left(\frac{\lambda \norm{a_i}_2^2}{\normFro{\ma}^2} \mI 
	- a_i a_i^\top \right) x 
	- \frac{1}{n}b.
	\end{align}
	We have that each $\psi_i$ is $\frac{\lambda \norm{a_i}_2^2}{\normFro{\ma}^2} + \norm{a_i}^2$ smooth with respect to $\norm{\cdot}_2$. Specifically,
\begin{align*}
\norm{\grad \psi_i(x)-\grad \psi_i(\opt{x})}_2 &= \norm {\left (\frac{\lambda \norm{a_i}_2^2}{\normFro{\ma}^2} \mI 
	- a_i a_i^\top \right) (x-\opt{x}) }_2 \\
	&\le \left (\frac{\lambda \norm{a_i}_2^2}{\normFro{\ma}^2} + \norm{a_i}^2 \right ) \norm{x-\opt{x}}_2.
\end{align*}

Additionally, 
$f(x)$ is $\lambda_d(\bv B) = \lambda - \lambda_1$ strongly convex so we have $\norm{x - \opt{x}}_2^2 \leq \frac{2}{\lambda - \lambda_1} \left[f(x) - f(\opt{x})\right]$ and putting all this together we have
\begin{align*}
\sum_{i \in [n]} \frac{1}{p_i} \norm{\grad \psi_i(x) - \grad \psi_i (\opt{x})}_2^2
&\leq
\sum_{i \in [n]} \frac{\norm{\bv A}_F^2}{\norm{a_i}_2^2}
\cdot
\norm{a_i}_2^4 \left(\frac{\lambda}{\norm{\bv A}_F^2} + 1\right)^2
\cdot \frac{2}{\lambda - \lambda_1}
\left[f(x) - f(\opt{x})\right]
\\
&= O \left (
\frac{\norm{\bv A}_F^4}{\lambda - \lambda_1}
\left[f(x) - f(\opt{x})\right] \right )
\\
\end{align*}
where the last step uses that $\lambda \le 2\lambda_1 \le 2\norm{\bv A}_F^2$ so $\frac{\lambda}{\norm{\bv A}_F^2} \le 2$.
\end{proof}

Assuming that $\lambda = (1+c \cdot \gap)\lambda_1$ for some constant $c$, the above bound means that we can make constant progress on our linear system by setting $m = O(\avgsmooth/\mu) = O\left(\frac{\norm{\bv A}^4_F}{(\lambda-\lambda_1)^2}\right)= O\left ( \frac{\nrank(\bv{A})^2}{\gap^2} \right )$. This dependence on stable rank matches the dependence given in \cite{shamir2015stochastic} (see discussion in Section \ref{previous_work_offline}), however we can show that it is suboptimal. 
We show to improve the bound to $O\left ( \frac{\nrank(\bv{A})}{\gap^2} \right )$ by using a better variance analysis. Instead of bounding each $\norm{\grad \psi_i(x) - \grad \psi_i (\opt{x})}_2^2$ term using the smoothness of $\psi_i$, we more carefully bound the sum of these terms.

\begin{lemma}(Improved Variance Bound for SVRG)\label{variance_lemma} For $i \in [n]$ let
\begin{align*}
	\psi_i(x) \defeq \frac{1}{2} x^\top \left(\frac{\lambda\norm{a_i}_2^2}{\norm{\bv A}_F^2} \mI - a_i a_i^\top \right) x 
	- \frac{1}{n}b^\top x
\end{align*}
so we have $\sum_{i\in[n]} \psi_i(x) = f(x) = \frac{1}{2} x^\top \bv{B} x - b^\top x$.
Setting $p_i = \frac{\norm{a_i}_2^2}{\norm{\bv A}_F^2}$ for all $i$, we have for all $x$
	\[
	\sum_{i \in [n]}
	\frac{1}{p_i} \normFull{\grad \psi_i(x) - \grad \psi_i(\opt{x})}_2^2
	\leq 
	\frac{4\lambda_1 \normFro{\ma}^2}{\lambda - \lambda_1} \cdot \left[f(x) - f(\opt{x}) \right].
	\]
\end{lemma}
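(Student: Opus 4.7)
The plan is to exploit the exact algebraic form of the sum rather than applying the loose triangle inequality $\|x+y\|^2 \le 2\|x\|^2 + 2\|y\|^2$ to each summand separately, which is what implicitly costs the extra factor of $\nrank(\ma)$ in Lemma~\ref{simple_variance_bound}. The key observation is that when we expand $\norm{\grad \psi_i(x) - \grad \psi_i(\opt{x})}_2^2$ as a quadratic in $z := x - \opt{x}$ and sum with the weights $1/p_i$, the cross term combines with the two square terms to produce a factor of $\lambda_1 \norm{\ma}_F^2$ rather than $\norm{\ma}_F^4$.

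First, using \eqref{basic_gradient_computation} I compute
\[
\grad \psi_i(x) - \grad \psi_i(\opt{x}) = \left(\alpha_i \mI - a_i a_i^\top \right) z, \qquad \text{where } \alpha_i := \frac{\lambda \norm{a_i}_2^2}{\normFro{\ma}^2},
\]
so that
\[
\norm{\grad \psi_i(x) - \grad \psi_i(\opt{x})}_2^2 = \alpha_i^2 \norm{z}_2^2 - 2 \alpha_i (a_i^\top z)^2 + \norm{a_i}_2^2 (a_i^\top z)^2.
\]
I will keep the cross term $-2\alpha_i (a_i^\top z)^2$ explicitly instead of discarding it with a $2a^2 + 2b^2$ split.

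Next, I will sum each piece against $1/p_i = \normFro{\ma}^2 / \norm{a_i}_2^2$. A direct calculation gives $\sum_i \alpha_i^2 / p_i = \lambda^2$ using $\sum_i \norm{a_i}_2^2 = \normFro{\ma}^2$, together with $\sum_i (2\alpha_i/p_i)(a_i^\top z)^2 = 2\lambda \, z^\top \mSigma z$ and $\sum_i (\norm{a_i}_2^2/p_i)(a_i^\top z)^2 = \normFro{\ma}^2 \, z^\top \mSigma z$. Combining the three identities yields the clean expression
\[
\sum_{i \in [n]} \frac{1}{p_i} \norm{\grad \psi_i(x) - \grad \psi_i(\opt{x})}_2^2 = \lambda^2 \norm{z}_2^2 + \left(\normFro{\ma}^2 - 2\lambda\right) z^\top \mSigma z.
\]

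Finally, I will convert the right-hand side to a multiple of $z^\top \mb z = 2[f(x) - f(\opt{x})]$. Strong convexity of $f$ gives $\norm{z}_2^2 \le z^\top \mb z / (\lambda - \lambda_1)$ and $\mSigma \preceq \lambda_1 \mI$ gives $z^\top \mSigma z \le \lambda_1 \norm{z}_2^2$. The sign of $\normFro{\ma}^2 - 2\lambda$ forces a brief case split: if it is nonnegative, I apply $z^\top \mSigma z \le \lambda_1 \norm{z}_2^2$ to the whole expression; if it is negative, I drop the $z^\top \mSigma z$ term because it is nonpositive. In either case the coefficient of $\norm{z}_2^2$ is at most $\lambda^2 + \normFro{\ma}^2 \lambda_1$ (or, in the negative case, just $\lambda^2$), and using $\lambda \le (1 + \gap/100)\lambda_1 \le 2 \lambda_1$ together with the trivial bound $\lambda_1 \le \normFro{\ma}^2$ bounds this by $2\lambda_1 \normFro{\ma}^2$. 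Multiplying through by $1/(\lambda-\lambda_1)$ and using $f(x)-f(\opt{x}) = \tfrac{1}{2} z^\top \mb z$ then gives the stated $\frac{4\lambda_1 \normFro{\ma}^2}{\lambda - \lambda_1}[f(x) - f(\opt{x})]$.

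The main obstacle is not technical so much as notational: one must resist the reflex to bound each $\norm{\grad \psi_i - \grad \psi_i(\opt{x})}_2^2$ by smoothness (which was the source of the $\normFro{\ma}^4$ term in Lemma~\ref{simple_variance_bound}) and instead perform the aggregate computation, trusting that the cross term $-2\lambda \, z^\top \mSigma z$ produced by the sum will cancel cleanly against $\normFro{\ma}^2 z^\top \mSigma z$ precisely because of the weight choice $\alpha_i \propto \norm{a_i}_2^2$. Tracking constants so the final answer is $4$ rather than a larger multiple is the only subtlety.
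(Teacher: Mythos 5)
Your proposal is correct in substance and follows essentially the same route as the paper: expand $\sum_i \frac{1}{p_i}\norm{(\alpha_i\mI - a_ia_i^\top)z}_2^2$ exactly into $\lambda^2\norm{z}_2^2 - 2\lambda\, z^\top\mSigma z + \normFro{\ma}^2\, z^\top \mSigma z$, then convert to $\norm{z}_{\mb}^2 = 2[f(x)-f(\opt{x})]$ via strong convexity and $\mSigma \preceq \lambda_1\mI$. The only place you diverge is the final bookkeeping, and there your arithmetic does not quite deliver the stated constant: with $\lambda \le 2\lambda_1$ and $\lambda_1 \le \normFro{\ma}^2$ one gets $\lambda^2 + \normFro{\ma}^2\lambda_1 \le 5\lambda_1\normFro{\ma}^2$, not $2\lambda_1\normFro{\ma}^2$ (take $\lambda = 2\lambda_1$, $\lambda_1 = \normFro{\ma}^2$). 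The paper avoids this by not discarding the cross term: it writes $\lambda^2\norm{z}_2^2 - 2\lambda\, z^\top\mSigma z \le \lambda\, z^\top\mb z$, so the $\lambda^2$ contribution enters only as $\lambda(\lambda-\lambda_1)$, which is tiny under $\lambda \le (1+\gap/100)\lambda_1$ and is coarsely absorbed via $\lambda(\lambda-\lambda_1) \le \lambda_1\normFro{\ma}^2$, yielding the clean $2\lambda_1\normFro{\ma}^2/(\lambda-\lambda_1)\cdot\norm{z}_{\mb}^2$ and hence the constant $4$. Your version is still fine for every downstream use (only the order of $\avgsmooth$ matters), but as written it proves the lemma with a larger constant than claimed.
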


\begin{proof}
Using the gradient computation in \eqref{basic_gradient_computation} we have
	\begin{align}
	\sum_{i \in [n]}
	\frac{1}{p_i} \normFull{\grad \psi_i(x) - \grad \psi_i(\opt{x})}_2^2 &= \sum_{i \in [n]}
	\frac{\normFro{\ma}^2}{\norm{a_i}_2^2}
	\normFull{\left( \frac{\lambda\norm{a_i}_2^2}{\normFro{\ma}^2} \mI 
		- a_i a_i^\top \right) (x - \opt{x})}_2^2\nonumber\\
	&=
	\sum_{i \in [n]} 
	\frac{\lambda^2 \norm{a_i}_2^2}{\normFro{\ma}^2}
	\normFull{x - \opt{x}}^2_2
	- 2 \sum_{i \in [n]} \lambda \normFull{x - \opt{x}}_{a_i a_i^\top}^2
	\nonumber\\&+ \sum_{i \in [n]} \frac{\normFro{\ma}^2}{\norm{a_i}^2} \normFull{x - \opt{x}}_{\norm{a_i}_2^2 a_i a_i^\top}^2
	\nonumber\\
	&=
	\lambda^2 \normFull{x - \opt{x}}_2^2
	- 2 \lambda \normFull{x - \opt{x}}_{\mSigma}^2
	+ \normFro{\ma}^2 \normFull{x - \opt{x}}_{\mSigma}^2.\nonumber\\
		&\le
	\lambda \normFull{x - \opt{x}}_{\bv B}^2
	+ \normFro{\ma}^2 \normFull{x - \opt{x}}_{\mSigma}^2.\label{broken_up_norms}
	\end{align}
	
	Now since 
	\[
	\mSigma \preceq \lambda_1 \mI \preceq \frac{\lambda_1}{\lambda - \lambda_1} \mb 
	\]
	we have
	\begin{align*}
	\sum_{i \in [n]}
	\frac{1}{p_i} \normFull{\grad \psi_i(x) - \grad \psi_i(\opt{x})}_2^2  &\le \left (\frac{ \lambda(\lambda-\lambda_1) +\normFro{\ma}^2\cdot \lambda_1}{\lambda - \lambda_1} \right )\normFull{x - \opt{x}}_{\mb}^2\\
	&\le \left (\frac{2\normFro{\ma}^2 \lambda_1}{\lambda - \lambda_1} \right )\normFull{x - \opt{x}}_{\mb}^2
	\end{align*}
	where in the last inequality we just coarsely bound $\lambda(\lambda-\lambda_1) \le \lambda_1\norm{\bv A}_F^2$.
	Now since $\bv{B}$ is full rank, $\bv{B}\opt{x} = b$, we can compute:
	\begin{align}\label{norm_to_function_error_conversion}
	\norm{x - \opt{x}}_{\mb}^2 = x^\top \mb x - 2b^\top x + b^\top \opt{x} = 2[f(x) - f(\opt{x})].
	\end{align}
	The result follows.
\end{proof}

Plugging the bound in Lemma \ref{variance_lemma} into Theorem \ref{lem:svrg-nonconv} we have:
\begin{theorem}(Offline SVRG-Based Solver)\label{offline_solver}
Let $\avgsmooth =\frac{2\lambda_1 \normFro{\mSigma}^2}{\lambda - \lambda_1}$, $\mu =\lambda-\lambda_1$. The iterative procedure described in Theorem \ref{lem:svrg-nonconv} with $f(x) = \frac{1}{2}x^\top \bv{B} x - b^\top x$, $\psi_i(x) = \frac{1}{2} x^\top \left(\frac{\lambda\norm{a_i}_2^2}{\norm{\bv \Sigma}_F^2} \mI - a_i a_i^\top \right) x - b^\top x$, $p_i = \frac{\norm{a_i}_2^2}{\norm{\bv \Sigma}_F^2}$, $\eta = 1/(8\avgsmooth)$ and $m$ chosen uniformly at random from $[64 \avgsmooth/\mu]$ returns a vector $x_m$ such that
\begin{align*}
\E \norm{x_m - \opt{x}}_\mb^2 \le \frac{1}{2} \norm{x_0-\opt{x}}_\mb^2.
\end{align*}
Further, assuming $\left (1 + \frac{\gap}{150} \right ) \lambda_1< \lambda \le \left (1 + \frac{\gap}{100} \right ) \lambda_1$, this procedure runs in time $O\left (\nnz(\bv A) + \frac{d\cdot \nrank(\bv{A})}{\gap^2} \right )$.
\end{theorem}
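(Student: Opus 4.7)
My plan is to assemble the theorem from the two ingredients already proved: the general SVRG-for-sums-of-non-convex-functions bound (Theorem \ref{lem:svrg-nonconv}) and the improved variance bound (Lemma \ref{variance_lemma}). The only things left to do are (i) verify that the prescribed step size $\eta=1/(8\avgsmooth)$ and inner-loop length $m=64\avgsmooth/\mu$ produce a contraction factor of $1/2$ in the function-value error, (ii) convert the function-value statement into the desired $\norm{\cdot}_{\mb}$ statement, and (iii) work out the runtime.

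For step (i), since $f(x)=\tfrac{1}{2} x^\top \mb x - b^\top x$ is $\mu=(\lambda-\lambda_1)$-strongly convex and the components $\psi_i$ satisfy \eqref{eq:svrg-nonconvex-avgsmooth} with variance parameter $\avgsmooth=\frac{2\lambda_1\normFro{\ma}^2}{\lambda-\lambda_1}$ (by Lemma~\ref{variance_lemma}), Theorem \ref{lem:svrg-nonconv} applied to the iterates $x_0,\dots,x_m$ yields
\[
\E\!\left[\frac{1}{m}\sum_{k\in[m]}\bigl(f(x_k)-f(\opt{x})\bigr)\right] \le \frac{1}{1-2\eta\avgsmooth}\!\left[\frac{1}{\mu\eta m}+2\eta\avgsmooth\right]\bigl(f(x_0)-f(\opt{x})\bigr).
\]
Substituting $\eta=1/(8\avgsmooth)$ gives $2\eta\avgsmooth=\tfrac{1}{4}$ and $\tfrac{1}{\mu\eta m}=\tfrac{8\avgsmooth}{\mu m}$, so taking $m=64\avgsmooth/\mu$ makes $\tfrac{1}{\mu\eta m}=\tfrac{1}{8}$. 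The overall factor is therefore $\tfrac{4}{3}\bigl(\tfrac{1}{8}+\tfrac{1}{4}\bigr)=\tfrac{1}{2}$. Since $m$ is drawn uniformly from $[64\avgsmooth/\mu]$ and $x_m$ is returned, the left side equals $\E[f(x_m)-f(\opt{x})]$, giving $\E[f(x_m)-f(\opt{x})]\le\tfrac{1}{2}(f(x_0)-f(\opt{x}))$.

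For step (ii), I will invoke the identity derived in \eqref{norm_to_function_error_conversion}, namely $\norm{y-\opt{x}}_{\mb}^{2}=2[f(y)-f(\opt{x})]$ for every $y$, which applies to both $y=x_0$ and $y=x_m$. Combining with the contraction above yields
\[
\E\norm{x_m-\opt{x}}_{\mb}^{2}=2\,\E[f(x_m)-f(\opt{x})]\le f(x_0)-f(\opt{x})=\tfrac{1}{2}\norm{x_0-\opt{x}}_{\mb}^{2},
\]
which is the stated guarantee.

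For step (iii), I will account the cost of one run as follows: a single full-gradient evaluation $\grad f(x_0)=\mb x_0-b$ costs $O(\nnz(\ma))$ time because $\mb x_0 = \lambda x_0-\ma^\top(\ma x_0)$ can be formed with two sparse mat-vecs; each inner SVRG step only requires evaluating $\grad\psi_{i_k}(x_k)-\grad\psi_{i_k}(x_0)$, which by \eqref{basic_gradient_computation} reduces to two inner products of $a_{i_k}$ with $x_k-x_0$, and takes $O(d)$ time. The inner-loop length is $m=64\avgsmooth/\mu=O\!\bigl(\tfrac{\lambda_1\normFro{\ma}^2}{(\lambda-\lambda_1)^2}\bigr)$; using the assumption $(1+\gap/150)\lambda_1<\lambda\le(1+\gap/100)\lambda_1$, the denominator is $\Theta(\gap^{2}\lambda_1^{2})$, so $m=O\!\bigl(\tfrac{\normFro{\ma}^{2}/\lambda_1}{\gap^{2}}\bigr)=O\!\bigl(\tfrac{\nrank(\ma)}{\gap^{2}}\bigr)$. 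Adding the $O(\nnz(\ma))$ setup to the $m\cdot O(d)$ inner cost gives the claimed $O\!\bigl(\nnz(\ma)+\tfrac{d\cdot\nrank(\ma)}{\gap^{2}}\bigr)$ bound. No step here is particularly subtle; the only thing to be careful about is that sampling $i_k$ with probability $p_i=\norm{a_i}_2^{2}/\normFro{\ma}^{2}$ can be done in $O(1)$ time after $O(n)$ preprocessing to build the alias table, which is absorbed into the $O(\nnz(\ma))$ setup term.
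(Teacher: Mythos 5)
Your proposal is correct and follows essentially the same route as the paper: apply Lemma \ref{variance_lemma} to get the variance bound $2\avgsmooth[f(x)-f(\opt{x})]$, plug $\eta=1/(8\avgsmooth)$ and $m=64\avgsmooth/\mu$ into Theorem \ref{lem:svrg-nonconv} to get the factor $\tfrac{4}{3}(\tfrac{1}{8}+\tfrac{1}{4})=\tfrac{1}{2}$, convert via \eqref{norm_to_function_error_conversion}, and account the runtime as $O(\nnz(\ma))$ setup plus $O(d)$ per inner iteration. Your arithmetic for the contraction factor is in fact slightly cleaner than the paper's (which writes $\tfrac{4}{3}[\tfrac18+\tfrac18]$ despite $2\eta\avgsmooth=\tfrac14$), and the remark about $O(1)$-time sampling after preprocessing is a reasonable implementation detail the paper leaves implicit.
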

\begin{proof}
Lemma \ref{variance_lemma} tells us that \[
	\sum_{i \in [n]}
	\frac{1}{p_i} \normFull{\grad \psi_i(x) - \grad \psi_i(\opt{x})}_2^2
	\leq 
	2 \avgsmooth \left[f(x) - f(\opt{x}) \right].
	\]
Further $f(x) = \frac{1}{2}x^\top \bv{B} x - b^\top x$ is $\lambda_d(\bv B)$-strongly  convex and $\lambda_d(\bv B) = \lambda-\lambda_1 = \mu$.
Plugging this into Theorem \ref{lem:svrg-nonconv} and using \eqref{norm_to_function_error_conversion} which shows $\norm{x-\opt{x}}_\mb^2 = 2[f(x)-f(\opt{x})]$ we have, for $m$ chosen uniformly from $[64 \avgsmooth/ \mu]$: 
\begin{align*}
\E \left[\frac{1}{64 \avgsmooth/ \mu} \sum_{k\in[64 \avgsmooth/ \mu]} f(x_{k}) - f(\opt{x})\right] &\leq 4/3 \cdot 
\left[1/8+1/8 \right] \cdot 
\left[f(x_{0})-f(\opt{x})\right]\\
\E \left [f(x_m) - f(\opt{x}) \right ] &\le \frac{1}{2}\left[f(x_{0})-f(\opt{x})\right] \\
\E \norm{x_m - \opt{x}}_\mb^2 &\le \frac{1}{2}  \norm{x_0-x^{opt}}_\mb^2.
\end{align*}

The procedure requires $O\left(\nnz(\bv A)\right)$ time to initially compute $\grad f(x_0)$, along with each $p_i$ and the step size $\eta$ which depend on $\norm{\bv A}_F^2$ and the row norms of $\bv{A}$. Each iteration then just requires $O(d)$ time to compute $\grad \psi_i(\cdot)$ and perform the necessary vector operations. Since there are at most $[64 \avgsmooth/ \mu] = O\left ( \frac{\lambda_1\norm{\bv A}_F^2}{(\lambda-\lambda_1)^2} \right )$ iterations, our total runtime is
\begin{align*}
O \left (\nnz(\bv{A})  + d \cdot \frac{\lambda_1\norm{\bv A}_F^2}{(\lambda-\lambda_1)^2} \right ) = O\left (\nnz(\bv A) + \frac{d\cdot \nrank(\bv{A})}{\gap^2} \right ).
\end{align*}
Note that if our matrix is uniformly sparse - i.e. all rows have sparsity at most $d_s$, then the runtime is actually at most $O\left (\nnz(\bv A) + \frac{d_s \cdot \nrank(\bv{A})}{\gap^2} \right )$.
\end{proof}

\subsection{Accelerated Solver}
\label{sec:offline:acceleration}

Theorem \ref{offline_solver} gives a linear solver for $\bv{B}$ that makes progress in expectation and which we can plug into Theorems \ref{thm:powermethod-perturb} and \ref{thm:init-offline}. However, we first show that the runtime in Theorem \ref{offline_solver} can be accelerated in some cases. We apply a result of \cite{frostig2015regularizing}, which shows that, given a solver for a regularized version of a convex function $f(x)$, we can produce a fast solver for $f(x)$ itself. Specifically:

\begin{lemma}[Theorem 1.1 of \cite{frostig2015regularizing}]\label{acceleration_primitive}
Let $f(x)$ be a $\mu$-strongly convex function and let $\opt{x} \eqdef \argmin_{x\in\mathbb{R}^d} f(x)$. For any $\gamma > 0$ and any $x_0 \in \mathbb{R}^d$, let $f_{\gamma,x_0}(x) \eqdef f(x) + \frac{\gamma}{2} \norm{x-x_0}_2^2$. Let $\opt{x}_{\gamma,x_0}  \eqdef \argmin_{x\in\mathbb{R}^d} f_{\gamma,x_0} (x)$.

Suppose that, for all $x_0 \in \mathbb{R}^d$, $c > 0$, $\gamma > 0$, we can compute a point $x_c$ such that
\begin{align*}
\E f_{\gamma,x_0}(x_c) - f_{\gamma,x_0}(\opt{x}_{\gamma,x_0} ) \le \frac{1}{c} \left [f_{\gamma,x_0} - f_{\gamma,x_0}(\opt{x}_{\gamma,x_0} ) \right ]
\end{align*}

in time $\mathcal{T}_c$. Then given any $x_0$, $c > 0$, $\gamma > 2 \mu$, we can compute $x_1$ such that
\begin{align*}
\E f(x_1) - f(\opt{x}) \le \frac{1}{c} \left [f(x_0) - f(\opt{x}) \right ]
\end{align*}
in time $O\left(\mathcal{T}_{4\left (\frac{2\gamma + \mu}{\mu} \right )^{3/2}} \sqrt{\lceil \gamma/\mu \rceil} \log c \right ).$
\end{lemma}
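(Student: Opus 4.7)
The plan is to prove the statement via an accelerated proximal point method that uses the regularized subproblem solver as an inexact inner oracle. The central object is the Moreau envelope
\[
F_\gamma(x) \;=\; \min_{y \in \R^d} \left[ f(y) + \frac{\gamma}{2} \norm{y - x}_2^2 \right],
\]
which shares its minimizer with $f$ and, by standard convex analysis, is differentiable with $\grad F_\gamma(x) = \gamma \, (x - \opt{x}_{\gamma,x})$, where $\opt{x}_{\gamma,x} = \argmin_y f_{\gamma,x}(y)$ is precisely what the given subroutine approximates. One checks that $F_\gamma$ is $\gamma$-smooth, and since $f$ is $\mu$-strongly convex, $F_\gamma$ inherits $\mu' = \gamma\mu/(\gamma+\mu)$-strong convexity; under $\gamma > 2\mu$ the effective condition number is $\gamma/\mu' = (\gamma+\mu)/\mu = O(\gamma/\mu)$.

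Next I would run Nesterov's accelerated gradient method on $F_\gamma$. With exact gradients, this yields $F_\gamma(x_T) - F_\gamma(\opt{x}) \le (1 - \sqrt{\mu'/\gamma})^T \cdot (\text{initial error})$, so $T = O\bigl(\sqrt{\gamma/\mu}\log c\bigr)$ outer iterations suffice to shrink the error by a factor $c$. Each outer step requires a single gradient of $F_\gamma$, which by the closed form above amounts to one call to a solver for $f_{\gamma,x}$; and since $F_\gamma(x) \le f(x)$ with equality at $\opt{x}$, the terminal inner minimizer $\opt{x}_{\gamma,x_T}$ carries the required guarantee back to $f$ itself up to a controlled constant.

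The central technical step, and the main obstacle, is that the inner solves are only approximate, so every outer gradient is noisy. I would invoke an inexact-oracle analysis of accelerated gradient descent: if each inner call delivers a $1/c_{\text{inner}}$ relative decrease on the subproblem's function error, then the cumulative noise across the $\Theta(\sqrt{\gamma/\mu})$ accelerated outer steps remains controlled. Because acceleration amplifies noise through momentum, $c_{\text{inner}}$ must scale polynomially in the outer condition number; setting $c_{\text{inner}} = \Theta\bigl(\bigl((2\gamma+\mu)/\mu\bigr)^{3/2}\bigr)$ — matching the constant in the statement — keeps the accumulated error below the target $1/c$. Multiplying the $O\bigl(\sqrt{\lceil \gamma/\mu\rceil}\log c\bigr)$ outer-iteration count by the per-iteration cost $\mathcal{T}_{c_{\text{inner}}}$ yields the claimed runtime, and expectations propagate through the linear recursion of the accelerated method, handling the randomness of the stochastic inner solver.
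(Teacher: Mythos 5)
The paper offers no proof of this lemma: it is imported verbatim as Theorem 1.1 of the cited reference \cite{frostig2015regularizing}, so there is no in-paper argument to compare against. Judged against the actual proof in that reference, your outline reconstructs the right strategy — an accelerated approximate proximal point method, equivalently Nesterov acceleration applied to the Moreau envelope $F_\gamma$, whose gradient $\gamma(x - \opt{x}_{\gamma,x})$ is exactly what the given subroutine approximates. Your regularity bookkeeping is correct ($F_\gamma$ is $\gamma$-smooth and $\gamma\mu/(\gamma+\mu)$-strongly convex, giving condition number $O(\gamma/\mu)$ and hence $O(\sqrt{\gamma/\mu}\log c)$ outer iterations), and you correctly note that one must exit through the prox point, since $F_\gamma \le f$ means a small envelope gap does not by itself bound $f(x_T) - f(\opt{x})$; the inequality $f(\opt{x}_{\gamma,x_T}) \le F_\gamma(x_T)$ is what closes that loop.

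The genuine gap is that the entire technical content of the theorem sits inside the sentence ``I would invoke an inexact-oracle analysis of accelerated gradient descent.'' The subroutine gives a \emph{multiplicative} decrease of the subproblem's function error (relative to an unspecified warm start), not an additive gradient error, so off-the-shelf inexact-AGD results do not apply directly; the cited paper runs a bespoke estimate-sequence/potential argument tracking how these relative errors compound through the momentum recursion, and that is precisely where the exponent $3/2$ in $c_{\mathrm{inner}} = \Theta\bigl(((2\gamma+\mu)/\mu)^{3/2}\bigr)$ comes from. In your write-up that exponent is read off from the statement and asserted to suffice, which is circular as a proof. A second, smaller omission: the guarantee is only in expectation, and one must check that the expectation passes through the nonlinear coupling of the accelerated iterates (the reference does this by bounding a per-iteration potential in expectation), not merely ``through the linear recursion'' as you state. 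As a plan your sketch is faithful to the source; as a proof it leaves the theorem's main estimate unproved.
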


We first give a new variance bound on solving systems in $\bv{B}$ when a regularizer is used. The proof of this bound is very close to the proof given for the unregularized problem in Lemma \ref{variance_lemma}.
\begin{lemma}\label{regularized_variance_lemma} For $i \in [n]$ let
\begin{align*}
	\psi_i(x) \defeq \frac{1}{2} x^\top \left(\frac{\lambda\norm{a_i}_2^2}{\norm{\bv A}_F^2} \mI - a_i a_i^\top \right) x 
	- \frac{1}{n}b^\top x + \frac{\gamma\norm{a_i}_2^2}{2\norm{\bv A}_F^2} \norm{x - x_0}_2^2
\end{align*}
so we have $\sum_{i\in[n]} \psi_i(x) = f_{\gamma,x_0} (x) = \frac{1}{2} x^\top \bv{B} x - b^\top x + \frac{\gamma}{2}\norm{x - x_0}_2^2$.
Setting $p_i = \frac{\norm{a_i}_2^2}{\norm{\bv A}_F^2}$ for all $i$, we have for all $x$
	\[
	\sum_{i \in [n]}
	\frac{1}{p_i} \normFull{\grad \psi_i(x) - \grad \psi_i(\opt{x}_{\gamma,x_0} )}_2^2
	\leq 
	\left (  \frac{\gamma^2 + 12\lambda_1\norm{\bv A}_F^2}{\lambda-\lambda_1+\gamma}\right ) \left [f_{\gamma,x_0}(x)-f_{\gamma,x_0}(\opt{x}_{\gamma,x_0}) \right ]
	\]
\end{lemma}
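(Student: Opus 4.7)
My plan is to closely mirror the proof of Lemma~\ref{variance_lemma}, viewing the regularized objective as the unregularized one with shift $\lambda$ replaced by $\lambda + \gamma$ and optimum $\opt{x}$ replaced by $\opt{x}_{\gamma, x_0}$. The starting observation is that the $b$- and $x_0$-dependent linear pieces of $\grad \psi_i$ cancel in the gradient difference, leaving
\[
\grad \psi_i(x) - \grad \psi_i(\opt{x}_{\gamma, x_0}) = \left(\frac{(\lambda+\gamma)\|a_i\|_2^2}{\|\bv A\|_F^2}\bv I - a_i a_i^\top\right)(x - \opt{x}_{\gamma, x_0}),
\]
which has the same structural form as the expression appearing in Lemma~\ref{variance_lemma} after the substitution $\lambda \mapsto \lambda + \gamma$.

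Plugging this into $\sum_i \frac{1}{p_i}\|\cdot\|_2^2$ and repeating the algebraic expansion that produced~\eqref{broken_up_norms}, I would obtain
\[
\sum_i \frac{1}{p_i}\|\grad \psi_i(x) - \grad \psi_i(\opt{x}_{\gamma, x_0})\|_2^2 = (\lambda+\gamma)^2\,\|x - \opt{x}_{\gamma, x_0}\|_2^2 - 2(\lambda+\gamma)\|x - \opt{x}_{\gamma, x_0}\|_{\bv \Sigma}^2 + \|\bv A\|_F^2\,\|x - \opt{x}_{\gamma, x_0}\|_{\bv \Sigma}^2.
\]
Two PSD inequalities then pull this into a multiple of $\|x - \opt{x}_{\gamma, x_0}\|_{\bv B + \gamma \bv I}^2$: first, $(\lambda+\gamma)^2\bv I - 2(\lambda+\gamma)\bv \Sigma = (\lambda+\gamma)\big[(\lambda+\gamma)\bv I - 2\bv \Sigma\big] \preceq (\lambda+\gamma)(\bv B + \gamma \bv I)$, which holds since $\bv \Sigma \succeq 0$; and second, $\bv \Sigma \preceq \lambda_1 \bv I \preceq \frac{\lambda_1}{\lambda - \lambda_1 + \gamma}(\bv B + \gamma \bv I)$, since $\bv B + \gamma \bv I$ has smallest eigenvalue $\lambda - \lambda_1 + \gamma$. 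Combining them yields
\[
\sum_i \frac{1}{p_i}\|\grad \psi_i(x) - \grad \psi_i(\opt{x}_{\gamma, x_0})\|_2^2 \leq \left[(\lambda + \gamma) + \frac{\lambda_1 \|\bv A\|_F^2}{\lambda - \lambda_1 + \gamma}\right]\|x - \opt{x}_{\gamma, x_0}\|_{\bv B + \gamma \bv I}^2.
\]

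Finally, since $f_{\gamma, x_0}$ is a quadratic with Hessian $\bv B + \gamma \bv I$ minimized at $\opt{x}_{\gamma, x_0}$, the direct analogue of~\eqref{norm_to_function_error_conversion} gives $\|x - \opt{x}_{\gamma, x_0}\|_{\bv B + \gamma \bv I}^2 = 2\big[f_{\gamma, x_0}(x) - f_{\gamma, x_0}(\opt{x}_{\gamma, x_0})\big]$, turning the bound into the form stated in the lemma. To match the numerator $\gamma^2 + 12\lambda_1 \|\bv A\|_F^2$, I would expand $(\lambda + \gamma)(\lambda - \lambda_1 + \gamma) = \gamma^2 + \gamma(2\lambda - \lambda_1) + \lambda(\lambda - \lambda_1)$ and use $\lambda \leq (1 + \gap/100)\lambda_1 \leq 2\lambda_1$, $\lambda_1 \leq \|\bv A\|_F^2$, and AM--GM to absorb the $\lambda_1 \gamma$ cross term into a combination of $\gamma^2$ and $\lambda_1\|\bv A\|_F^2$. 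The main obstacle is exactly this last bookkeeping step: the regularization creates $\gamma^2$ and $\lambda_1 \gamma$ terms with no analogue in the unregularized proof, and they must be traded off via AM--GM carefully enough to land inside the claimed constants, with the slack factor $12$ in front of $\lambda_1\|\bv A\|_F^2$ absorbing the AM--GM overhead.
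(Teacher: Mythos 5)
Your route is cleaner than the paper's up to the last step: you observe that the regularizer's gradient merges \emph{exactly} into the identity part of the gradient difference, so that $\grad\psi_i(x)-\grad\psi_i(\opt{x}_{\gamma,x_0}) = \bigl(\tfrac{(\lambda+\gamma)\norm{a_i}_2^2}{\normFro{\ma}^2}\mI - a_ia_i^\top\bigr)(x-\opt{x}_{\gamma,x_0})$, rerun the \eqref{broken_up_norms} expansion with $\lambda\mapsto\lambda+\gamma$, and control everything in the $\mb+\gamma\mI$ norm. The paper instead keeps the regularizer gradient separate, pays a factor of $2$ via $\norm{u+v}_2^2\le 2\norm{u}_2^2+2\norm{v}_2^2$, and passes through the $\ell_2$ norm plus strong convexity. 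Everything you write through the bound $\bigl[(\lambda+\gamma)+\tfrac{\lambda_1\normFro{\ma}^2}{\lambda-\lambda_1+\gamma}\bigr]\cdot 2\bigl[f_{\gamma,x_0}(x)-f_{\gamma,x_0}(\opt{x}_{\gamma,x_0})\bigr]$ is correct.

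The genuine gap is the step you yourself flag as "the main obstacle": it cannot be completed as stated. Your numerator is $2(\lambda+\gamma)(\lambda-\lambda_1+\gamma)+2\lambda_1\normFro{\ma}^2 = 2\gamma^2+2\gamma(2\lambda-\lambda_1)+2\lambda(\lambda-\lambda_1)+2\lambda_1\normFro{\ma}^2$, whose $\gamma^2$ coefficient is already $2$, and AM--GM on the cross term only pushes it higher; no bookkeeping lands on the claimed $\gamma^2+12\lambda_1\normFro{\ma}^2$ uniformly in $\gamma$. Indeed the stated inequality is itself false for large $\gamma$: taking $x-\opt{x}_{\gamma,x_0}$ proportional to $v_1$, your (exact) identity gives a left-hand side of $\bigl((\lambda+\gamma)^2-2(\lambda+\gamma)\lambda_1+\lambda_1\normFro{\ma}^2\bigr)\norm{x-\opt{x}_{\gamma,x_0}}_2^2 = \gamma^2(1-o(1))\norm{x-\opt{x}_{\gamma,x_0}}_2^2$, while the claimed right-hand side is $\tfrac{1}{2}\bigl(\gamma^2+12\lambda_1\normFro{\ma}^2\bigr)\norm{x-\opt{x}_{\gamma,x_0}}_2^2$. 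The discrepancy traces to the paper's gradient formula \eqref{regularized_gradient_computation}, which records the gradient of $\tfrac{\gamma\norm{a_i}_2^2}{2\normFro{\ma}^2}\norm{x-x_0}_2^2$ with half the correct coefficient on $x$; this is what produces the paper's $\gamma^2/2$ and hence the $\gamma^2$ in the statement. So you should not try to force the stated constant: the correct conclusion of your argument is a bound of the form $\tfrac{C(\gamma^2+\lambda_1\normFro{\ma}^2)}{\lambda-\lambda_1+\gamma}\bigl[f_{\gamma,x_0}(x)-f_{\gamma,x_0}(\opt{x}_{\gamma,x_0})\bigr]$ for an absolute constant $C$ (e.g.\ $C=5$ suffices using $\lambda\le 2\lambda_1\le 2\normFro{\ma}^2$ and $6\gamma\lambda_1\le 3\gamma^2+3\lambda_1^2$), which is all that Theorem~\ref{accelerated_offline_solver} uses downstream.
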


\begin{proof}
	We have for all $i \in [n]$ 
	\begin{align}\label{regularized_gradient_computation}
	\grad \psi_i(x) =
	\left(\frac{\lambda \norm{a_i}_2^2}{\normFro{\ma}^2} \mI 
	- a_i a_i^\top \right) x 
	- \frac{1}{n}b + \frac{\gamma\norm{a_i}_2^2}{2\norm{\bv A}_F^2} (x-2x_0)
	\end{align}
Plugging this in we have:
	\begin{align*}
	\sum_{i \in [n]}
	\frac{1}{p_i} \normFull{\grad \psi_i(x) - \grad \psi_i(\opt{x}_{\gamma,x_0} )}_2^2 &= \sum_{i \in [n]}
	\frac{\normFro{\ma}^2}{\norm{a_i}_2^2}
	\normFull{\left( \frac{\lambda\norm{a_i}_2^2}{\normFro{\ma}^2} \mI 
		- a_i a_i^\top \right) (x - \opt{x}_{\gamma,x_0} ) + \frac{\gamma\norm{a_i}_2^2}{2\norm{\bv A}_F^2} (x - \opt{x}_{\gamma,x_0} )}_2^2
	\end{align*}
	For simplicity we now just use the fact that  $\norm{x+y}_2^2 \le 2\norm{x}_2^2 + 2\norm{y}_2^2$ and apply our bound from equation \eqref{broken_up_norms} to obtain:
	\begin{align*}
	\sum_{i \in [n]}
	\frac{1}{p_i} \normFull{\grad \psi_i(x) - \grad \psi_i(\opt{x}_{\gamma,x_0} )}_2^2  &\le 
	2\lambda^2 \normFull{x - \opt{x}_{\gamma,x_0}}_2^2
	- 4 \lambda \normFull{x - \opt{x}_{\gamma,x_0}}_{\mSigma}^2
	+ 2\normFro{\mSigma}^2 \normFull{x - \opt{x}_{\gamma,x_0}}_{\mSigma}^2 \\&+ 2\sum_{i \in [n]}\frac{\norm{a_i}_2^2}{\normFro{\ma}^2} \frac{\gamma^2}{4} \norm{x-\opt{x}_{\gamma,x_0} }_2^2\\
	&\le 
	 \left (2\lambda^2 +\gamma^2/2 + 2\lambda_1\norm{\bv A}_F^2 - 4\lambda_1\lambda \right ) \normFull{x - \opt{x}_{\gamma,x_0} }_{2}^2\\
	&\le 
	 \left (\gamma^2/2 + 6\lambda_1\norm{\bv A}_F^2 \right ) \normFull{x - \opt{x}_{\gamma,x_0} }_{2}^2
	\end{align*}
	
	Now, $f_{\gamma,x_0}(\cdot)$ is $\lambda-\lambda_1+\gamma$ strongly convex, so 
	\begin{align*}
	\normFull{x - \opt{x}_{\gamma,x_0} }_{2}^2 \le \frac{2}{\lambda-\lambda_1+\gamma} \left [f_{\gamma,x_0}(x)-f_{\gamma,x_0}(\opt{x}_{\gamma,x_0}) \right ].
	\end{align*}
	So overall we have:
	\begin{align*}
	\sum_{i \in [n]} \frac{1}{p_i} \normFull{\grad \psi_i(x) - \grad \psi_i(\opt{x}_{\gamma,x_0} )}_2^2  &\le \left (  \frac{\gamma^2 + 12\lambda_1\norm{\bv A}_F^2}{\lambda-\lambda_1+\gamma}\right ) \left [f_{\gamma,x_0}(x)-f_{\gamma,x_0}(\opt{x}_{\gamma,x_0}) \right ]
	\end{align*}
\end{proof}

We can now use this variance bound to obtain an accelerated solver for $\bv{B}$. We assume $\nnz(\bv{A}) \le \frac{d\nrank(\bv A)}{\gap^2}$, as otherwise, the unaccelerated solver in Theorem \ref{offline_solver} runs in $O(\nnz(\bv{A}))$ time and cannot be accelerated further.
\begin{theorem}[Accelerated SVRG-Based Solver]\label{accelerated_offline_solver}
Assuming $\left (1 + \frac{\gap}{150} \right ) \lambda_1< \lambda \le \left (1 + \frac{\gap}{100} \right ) \lambda_1$ and $nnz(\bv{A}) \le \frac{d\nrank(\bv A)}{\gap^2}$, applying the iterative procedure described in Theorem \ref{lem:svrg-nonconv} along with the acceleration given by Lemma \ref{acceleration_primitive} gives a solver that returns $x$ with
\begin{align*}
\E \norm{x - \opt{x}}_\mb^2 \le \frac{1}{2} \norm{x_0-\opt{x}}_\mb^2.
\end{align*}
in time $O\left ( \frac{\nnz(\bv A)^{3/4} (d \nrank(\bv A))^{1/4}}{\sqrt{\gap}} \cdot \log \left (\frac{d}{\gap}\right)\right)$.
\end{theorem}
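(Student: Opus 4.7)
\medskip

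\noindent\textbf{Proof plan.} The goal is to combine the SVRG analysis of Theorem~\ref{lem:svrg-nonconv} with the acceleration primitive of Lemma~\ref{acceleration_primitive}, using the regularized variance estimate of Lemma~\ref{regularized_variance_lemma}, and then optimize the regularization parameter $\gamma$. For any $\gamma > 0$ and any $x_0 \in \R^d$, define $f_{\gamma,x_0}(x) \defeq \tfrac{1}{2} x^\top \mb x - b^\top x + \tfrac{\gamma}{2}\norm{x-x_0}_2^2$. Observe that $f_{\gamma,x_0}$ is $(\mu+\gamma)$-strongly convex with $\mu = \lambda-\lambda_1$, and that Lemma~\ref{regularized_variance_lemma} decomposes it into $n$ components $\psi_i$ with sampling probabilities $p_i = \norm{a_i}_2^2/\norm{\bv A}_F^2$ satisfying the SVRG variance hypothesis \eqref{eq:svrg-nonconvex-avgsmooth} with parameter $\avgsmooth \le \tfrac{\gamma^2 + 12\lambda_1 \norm{\bv A}_F^2}{2(\mu+\gamma)}$.

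\medskip

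\noindent\textbf{Step 1: inner solver cost.} Applying Theorem~\ref{lem:svrg-nonconv} to $f_{\gamma,x_0}$ with step size $\eta = \Theta(1/\avgsmooth)$, one epoch of expected constant-factor progress requires $m = O(\avgsmooth/(\mu+\gamma)) = O\!\left(\tfrac{\gamma^2 + \lambda_1\norm{\bv A}_F^2}{(\mu+\gamma)^2}\right)$ stochastic steps. Each epoch performs one full gradient computation in $O(\nnz(\bv A))$ time and $m$ stochastic updates in $O(d)$ time each. As in Theorem~\ref{offline_solver}, strong convexity converts function error into $\norm{\cdot}_\mb^2$ error up to a constant, so solving $f_{\gamma,x_0}$ to relative accuracy $1/c$ takes time
\[
  \mathcal{T}_c \;=\; O\!\left(\log c \cdot \left[\nnz(\bv A) + d\cdot\frac{\gamma^2 + \lambda_1\norm{\bv A}_F^2}{(\mu+\gamma)^2}\right]\right).
\]

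\medskip

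\noindent\textbf{Step 2: apply acceleration.} Plugging $\mathcal{T}_c$ into Lemma~\ref{acceleration_primitive} with the inner accuracy $c' = 4((2\gamma+\mu)/\mu)^{3/2} = O((\gamma/\mu)^{3/2})$ yields an accelerated solver for the original $f(x)=\tfrac{1}{2}x^\top\mb x - b^\top x$ whose time to halve $\norm{x-\opt{x}}_\mb^2$ in expectation is
\[
  O\!\left(\sqrt{\gamma/\mu}\cdot\log(\gamma/\mu)\cdot\left[\nnz(\bv A) + d\cdot\frac{\gamma^2 + \lambda_1\norm{\bv A}_F^2}{(\mu+\gamma)^2}\right]\right).
\]

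\medskip

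\noindent\textbf{Step 3: optimize $\gamma$.} Choose $\gamma \ge \mu$, so $(\mu+\gamma)^2 = \Theta(\gamma^2)$ and the bracketed quantity simplifies to $\nnz(\bv A) + d + d\lambda_1\norm{\bv A}_F^2/\gamma^2$. Ignoring the $O(d)$ additive term (dominated by $\nnz(\bv A)$ and the other term) and writing $t \defeq \gamma/\mu \ge 1$, the total cost, up to logarithmic factors, is
\[
  \sqrt{t}\cdot \nnz(\bv A) \;+\; \frac{1}{t^{3/2}}\cdot \frac{d\lambda_1\norm{\bv A}_F^2}{\mu^2}
  \;=\; \sqrt{t}\cdot \nnz(\bv A) \;+\; \frac{1}{t^{3/2}}\cdot \Theta\!\left(\frac{d\,\nrank(\bv A)}{\gap^2}\right).
\]
Balancing the two terms gives $t^2 = \tfrac{d\,\nrank(\bv A)}{\gap^2\,\nnz(\bv A)}$, which is at least $1$ exactly under the assumption $\nnz(\bv A) \le d\,\nrank(\bv A)/\gap^2$. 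Substituting this choice of $t$ produces the desired runtime
\[
  O\!\left(\frac{\nnz(\bv A)^{3/4}(d\,\nrank(\bv A))^{1/4}}{\sqrt{\gap}} \cdot \log(d/\gap)\right),
\]
where the logarithmic factor absorbs the $\log(\gamma/\mu) = \log(d/\gap)$ factor from Lemma~\ref{acceleration_primitive} and the $\log c$ factor needed for constant-factor progress in $\norm{\cdot}_\mb^2$.

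\medskip

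\noindent\textbf{Main obstacle.} The delicate point is that the regularized variance bound of Lemma~\ref{regularized_variance_lemma} grows quadratically in $\gamma$, so the per-epoch iteration count only improves mildly as $\gamma$ increases, while the $\sqrt{\gamma/\mu}$ overhead from acceleration grows. The assumption $\nnz(\bv A) \le d\,\nrank(\bv A)/\gap^2$ is precisely what guarantees the optimal $\gamma$ is at least $\mu$; otherwise no acceleration is possible and the unaccelerated Theorem~\ref{offline_solver} already achieves $O(\nnz(\bv A))$. A minor subtlety is that Lemma~\ref{acceleration_primitive} is stated for function error whereas our target is $\norm{\cdot}_\mb^2$ error, but $f$ is $\mu$-strongly convex in $\norm{\cdot}_2$ and $\norm{x-\opt{x}}_\mb^2 = 2[f(x)-f(\opt{x})]$, so these quantities are interchangeable up to constants, and the reduction goes through.
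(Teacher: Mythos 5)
Your proposal is correct and follows essentially the same route as the paper: invoke Lemma~\ref{regularized_variance_lemma} to get an SVRG epoch cost of $O\big(\nnz(\bv A) + d\,\tfrac{\gamma^2+\lambda_1\norm{\bv A}_F^2}{(\mu+\gamma)^2}\big)$ for the regularized problem, pay $\log c \cdot \sqrt{\gamma/\mu}$ via Lemma~\ref{acceleration_primitive}, and choose $\gamma$ to balance the two terms, which yields exactly the paper's choice $\gamma = \Theta\big(\sqrt{d\lambda_1\norm{\bv A}_F^2/\nnz(\bv A)}\big)$. The only nitpick is that Lemma~\ref{acceleration_primitive} requires $\gamma > 2\mu$ rather than $\gamma \ge \mu$, but as in the paper this is absorbed by taking a sufficiently large constant in the choice of $\gamma$, which the hypothesis $\nnz(\bv A) \le d\,\nrank(\bv A)/\gap^2$ permits.
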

\begin{proof}
Following Theorem \ref{offline_solver}, the variance bound of Lemma \ref{regularized_variance_lemma} means that we can make constant progress in minimizing $f_{\gamma,x_0}(x)$ in $O\left (\nnz(\bv{A}) + d m \right)$ time where $m = O \left ( \frac{\gamma^2 + 12\lambda_1\norm{\bv\Sigma}_F^2}{(\lambda-\lambda_1+\gamma)^2}\right )$. So, for $\gamma \ge 2(\lambda-\lambda_1)$ we can make $4\left (\frac{2\gamma + (\lambda-\lambda_1)}{\lambda-\lambda_1} \right )^{3/2}$ progress, as required by Lemma \ref{acceleration_primitive} in time $O\left (\left (\nnz(\bv A) + dm \right) \cdot \log \left (\frac{\gamma}{\lambda-\lambda_1}\right) \right )$ time. Hence by Lemma \ref{acceleration_primitive} we can make constant factor expected progress in minimizing $f(x)$ in time:
\begin{align*}
O\left ( \left (\nnz(\bv A) + d\frac{\gamma^2 + 12\lambda_1\norm{\bv A}_F^2}{(\lambda-\lambda_1+\gamma)^2} \right ) \log \left (\frac{\gamma}{\lambda-\lambda_1}\right) \sqrt {\frac{\gamma}{\lambda-\lambda_1}}\right)
\end{align*}

By our assumption, we have $\nnz(\bv{A}) \le \frac{d\nrank(\bv A)}{\gap^2} = \frac{d \lambda_1 \norm{\bv A}_F^2}{(\lambda-\lambda_1)^2}$. So, if we let $\gamma = \Theta \left (\sqrt{\frac{d\lambda_1 \norm{\ma}_F^2}{\nnz(\bv A)}}\right)$ then using a sufficiently large constant, we have $\gamma \ge 2(\lambda-\lambda_1)$. We have $\frac{\gamma}{\lambda-\lambda_1} = \Theta \left (\sqrt{\frac{d\lambda_1 \norm{\bv A}_F^2}{\nnz(\bv A)\lambda_1^2 \gap^2}}\right) = \Theta \left (\sqrt{\frac{d\nrank(\bv A)}{\nnz(\bv A)\gap^2}}\right) $ and can make constant expected progress in minimizing $f(x)$ in time:
\begin{align*}
O\left ( \frac{\nnz(\bv A)^{3/4} (d \nrank(\bv A))^{1/4}}{\sqrt{\gap}} \cdot \log \left (\frac{d}{\gap}\right)\right).
\end{align*}
\end{proof}

\subsection{Shifted-and-Inverted Power Method}
\label{sec:offline:results}

Finally, we are able to combine the solvers from Sections \ref{sec:offline:svrg} and \ref{sec:offline:acceleration} with the framework of Section \ref{framework} to obtain faster algorithms for top eigenvector computation.
\begin{theorem}[Shifted-and-Inverted Power Method With SVRG]\label{main_offline_theorem}
Let $\bv{B} = \lambda \bv{I} - \bv{A}^\top \bv{A}$ for $\left ( 1+\frac{\gap}{150}\right) \lambda_1 \le \lambda \le \left(1+ \frac{\gap}{100} \right ) \lambda_1$ and let $x_0 \sim \mathcal{N}(0,\bv I)$ be a random initial vector. Running the inverted power method on $\bv{B}$ initialized with $x_0$, using the SVRG solver from Theorem \ref{offline_solver} to approximately apply $\bv{B}^{-1}$ at each step, returns $x$ such that with probability $1-O\left (\frac{1}{d^{10}}\right)$, $x^\top \bv{\Sigma}x \ge (1-\epsilon) \lambda_1$ in total time $$O \left (\left(\nnz(\bv A) + \frac{d \nrank(\bv A)}{\gap^2} \right )\cdot \left (\log^2\left(\frac{d}{\gap}\right) + \log\left(\frac{1}{\epsilon}\right) \right ) \right ).$$
\end{theorem}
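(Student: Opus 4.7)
The plan is to combine four ingredients developed in the excerpt: (i) the parameter estimation from Section~\ref{parameter_free} to obtain a valid shift $\lambda$ satisfying $(1+\gap/150)\lambda_1 \le \lambda \le (1+\gap/100)\lambda_1$ and an estimate $\widehat{\lambda}_1$; (ii) the Gaussian initialization bound of Lemma~\ref{lem:init-random}; (iii) the burn-in analysis of Theorem~\ref{thm:init-offline} together with the approximate power-iteration analysis of Theorem~\ref{thm:powermethod-perturb} and its Corollary~\ref{cor:constant_factor_corollary}; and (iv) the SVRG-based linear solver of Theorem~\ref{offline_solver}. In the offline setting the subroutine $\rayquoth{\cdot}$ is trivial, since $x^\top \mSigma x$ can be computed exactly in $\nnz(\ma)$ time per iteration, so the only nontrivial primitive to instantiate is the linear solver for $\mb$. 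The final step will convert the potential-function bound $G(x)$ into a Rayleigh-quotient bound via Lemma~\ref{lem:rayquot-potential}, and a single application of Markov's inequality will boost the expectation bound into high probability.

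For the burn-in phase I would sample $x_0 \sim \mathcal{N}(0,\mI)$, so that by Lemma~\ref{lem:init-random} we have $G(x_0) \le 15 d^{10.5}/\sqrt{\gap}$ with probability $1-O(d^{-10})$. Applying Theorem~\ref{thm:init-offline} then brings $G$ below $1/\sqrt{10}$ in $T_{\text{burn}} = O(\log d + \log \kappa(\mb^{-1})) = O(\log(d/\gap))$ iterations, provided each linear system is solved to relative accuracy $O(1/(\kappa(\mb^{-1}) d^{21}))$ in the $\mb$-norm. Since Theorem~\ref{offline_solver} halves the $\mb$-norm error in expectation per invocation at cost $O(\nnz(\ma) + d\,\nrank(\ma)/\gap^2)$, reaching this accuracy requires $O(\log(d/\gap))$ base invocations per outer step. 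The burn-in therefore uses $O(\log^2(d/\gap))$ total base solves.

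For the main phase I would apply Corollary~\ref{cor:constant_factor_corollary}, initializing each inner solve with the warm start $x/(x^\top \mb x)$ and running the base SVRG solver of Theorem~\ref{offline_solver} for $O(1)$ rounds to reduce the $\mb$-norm error by the required factor of $1/1000$ relative to this warm start. The corollary then guarantees $\E[G(\widetilde{x})] \le (4/25)\, G(x)$ per outer iteration, giving $\E[G(x_k)] \le (4/25)^k \cdot (1/\sqrt{10})$. Taking $k = O(\log d + \log(1/\gap) + \log(1/\epsilon))$ drives this below $\sqrt{\epsilon/\gap}/d^{10}$, and Markov's inequality then gives $G(x_k) \le 10\sqrt{\epsilon/\gap}$ with probability $1-O(d^{-10})$. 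By the lower bound of Lemma~\ref{lem:rayquot-potential}, $\lambda_1 - x_k^\top \mSigma x_k \le G(x_k)^2\,(\lambda-\lambda_1) \le O(\epsilon \lambda_1)$, yielding the claimed Rayleigh-quotient guarantee (after rescaling $\epsilon$ by a constant).

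Adding up, the total number of base SVRG solves is $O(\log^2(d/\gap) + \log(1/\epsilon))$, each costing $O(\nnz(\ma) + d\,\nrank(\ma)/\gap^2)$ by Theorem~\ref{offline_solver}; Rayleigh-quotient evaluations and safeguarding checks add only $O(\nnz(\ma))$ per outer iteration, which is absorbed. A union bound over the initialization event and the final Markov step preserves the $1-O(d^{-10})$ success probability. The main subtlety I expect is the interaction between the two phases: during burn-in we need a genuine small-additive-error solve (Theorem~\ref{thm:init-offline}) rather than the cheap relative-error solve used in the main phase, which forces the quadratic $\log^2(d/\gap)$ burn-in cost. Verifying that Corollary~\ref{cor:constant_factor_corollary}'s assumption that the initial warm start $x/(x^\top \mb x)$ already has $\mb$-norm error proportional to $\alpha_1 \sqrt{\lambda_1(\mb^{-1})}\,G(x)$ (so that only $O(1)$ base solves suffice per outer iteration) is what keeps the dependence on $\epsilon$ purely additive rather than multiplicative with $\log(d/\gap)$.
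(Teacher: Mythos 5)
Your proposal follows essentially the same route as the paper's proof: Gaussian initialization via Lemma~\ref{lem:init-random}, a burn-in phase via Theorem~\ref{thm:init-offline} costing $O(\log^2(d/\gap))$ base SVRG solves, a main phase via Corollary~\ref{cor:constant_factor_corollary} with $O(1)$ base solves per outer iteration, and a final Markov/Lemma~\ref{lem:rayquot-potential}/union-bound step to convert the expected potential decay into the high-probability Rayleigh-quotient guarantee. The accounting of the two phases and the observation that exact Rayleigh quotients make $\rayquoth{\cdot}$ trivial offline match the paper's argument, so the proposal is correct as written.
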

Note that by instantiating the above theorem with $\epsilon' = \epsilon\cdot \gap$, and applying Lemma \ref{lem:ray_to_evec} we can find $x$ such that $| v_1^\top x | \ge 1-\epsilon$ in the same asymptotic running time (an extra $\log(1/\gap)$ term is absorbed into the $\log^2(d/\gap)$ term).
\begin{proof}
By Theorem \ref{thm:init-offline}, if we start with $x_0 \sim \mathcal{N}(0,\bv I)$ we can run $O\left (\log \left (\frac{d}{\gap} \right ) \right )$ iterations of the inverted power method, to obtain $x_1$ with $G(x_1) \le \frac{1}{\sqrt{10}}$ with probability $1-O\left (\frac{1}{d^{10}}\right)$. Each iteration requires applying an linear solver that decreases initial error in expectation by a factor of $\frac{1}{\poly(d,1/\gap)}$. Such a solver is given by applying the solver in Theorem \ref{offline_solver} $O\left (\log \left (\frac{d}{\gap} \right ) \right )$ times, decreasing error by a constant factor in expectation each time. So overall in order to find $x_1$ with $G(x_1) \le \frac{1}{\sqrt{10}}$, we require time $O \left (\left(\nnz(\bv A) + \frac{d \nrank(\bv A)}{\gap^2} \right )\cdot\log^2\left(\frac{d}{\gap}\right)\right )$.

After this initial `burn-in' period we can apply Corollary \ref{cor:constant_factor_corollary} of Theorem \ref{thm:powermethod-perturb}, which shows that running a single iteration of the inverted power method will decrease $G(x)$ by a constant factor in expectation. In such an iteration, we only need to use a solver that decreases initial error by a constant factor in expectation. So we can perform each inverted power iteration in this stage in time $O \left (\nnz(\bv A) + \frac{d \nrank(\bv A)}{\gap^2} \right ).$

With $O\left (\log \left (\frac{d}{\epsilon}\right)\right)$ iterations, we can obtain $x$ with $\E \left [G(x)^2 \right ] = O\left (\frac{\epsilon}{d^{10}}\right)$ So by Markov's inequality, we have $G(x)^2 = O(\epsilon)$, giving us $x^T \bv{\Sigma} x \ge (1-O(\epsilon))\lambda_1$ by Lemma \ref{lem:rayquot-potential}. Union bounding over both stages gives us failure probability $O\left ( \frac{1}{d^{10}}\right )$, and adding the runtimes from the two stages gives us the final result. Note that the second stage requires $O\left (\log \left (\frac{d}{\epsilon}\right)\right) = O(\log d + \log (1/\epsilon))$ iterations to achieve the high probability bound. However, the $O(\log d)$ term is smaller than the $O\left (\log^2 \left (\frac{d}{\gap} \right ) \right )$ term, so is absorbed into the asymptotic notation.

\end{proof}

We can apply an identical analysis using the accelerated solver from Theorem \ref{accelerated_offline_solver}, obtaining the following runtime which beats Theorem \ref{main_offline_theorem} whenever $\nnz(\bv A) \le \frac{d\nrank(\bv A)}{\gap^2}$:
\begin{theorem}[Shifted-and-Inverted Power Method Using Accelerated SVRG]\label{accelerated_offline_theorem}
Let $\bv{B} = \lambda \bv{I} - \bv{A}^\top \bv{A}$ for $\left ( 1+\frac{\gap}{150}\right) \lambda_1 \le \lambda \le \left(1+ \frac{\gap}{100} \right ) \lambda_1$ and let $x_0 \sim \mathcal{N}(0,\bv I)$ be a random initial vector. Assume that $\nnz(\bv A) \le \frac{d\nrank(\bv A)}{\gap^2}$. Running the inverted power method on $\bv{B}$ initialized with $x_0$, using the accelerated SVRG solver from Theorem \ref{accelerated_offline_solver} to approximately apply $\bv{B}^{-1}$ at each step, returns $x$ such that with probability $1-O\left (\frac{1}{d^{10}}\right)$, $| v_1^\top x | \ge 1-\epsilon$ in total time $$O \left (\left(\frac{\nnz(\bv A)^{3/4}(d \nrank(\bv A))^{1/4}}{\sqrt{\gap}} \right )\cdot \left (\log^3\left(\frac{d}{\gap}\right) + \log\left(\frac{d}{\gap}\right)\log\left(\frac{1}{\epsilon}\right) \right ) \right ).$$
\end{theorem}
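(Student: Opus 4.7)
The plan is to mirror the proof of Theorem~\ref{main_offline_theorem} almost verbatim, simply replacing each call to the unaccelerated SVRG solver of Theorem~\ref{offline_solver} by a call to the accelerated solver of Theorem~\ref{accelerated_offline_solver}, and then re-accounting for the per-call cost. The assumption $\nnz(\bv{A})\le \frac{d\,\nrank(\bv{A})}{\gap^2}$ is precisely what is needed to invoke Theorem~\ref{accelerated_offline_solver}, which produces an output $x$ with $\E\|x-\opt{x}\|_{\bv B}^2 \le \tfrac{1}{2}\|x_0-\opt{x}\|_{\bv B}^2$ in time
\[
T_{\mathrm{solve}} \;=\; O\!\left(\tfrac{\nnz(\bv A)^{3/4}(d\,\nrank(\bv A))^{1/4}}{\sqrt{\gap}}\cdot \log(d/\gap)\right).
\]
Since this solver measures error in the $\norm{\cdot}_{\bv B}$ norm with constant-factor expected contraction, it is of exactly the type required by both Theorem~\ref{thm:init-offline} and Corollary~\ref{cor:constant_factor_corollary}: chaining $k$ independent invocations on the output of the previous one, together with the fact that $\E\|x - \opt{x}\|_{\bv B}^2 \le 2^{-k}\|x_0-\opt{x}\|_{\bv B}^2$, yields any desired expected multiplicative error reduction by paying $k$ times the cost above.

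I would execute the argument in two phases. \emph{Burn-in:} by Lemma~\ref{lem:init-random}, a random Gaussian start vector $x_0$ satisfies $G(x_0) \le 15\,d^{10.5}/\sqrt{\gap}$ with probability $1-O(d^{-10})$. Theorem~\ref{thm:init-offline} then guarantees that $T_1=O(\log d + \log \kappa(\bv B^{-1})) = O(\log(d/\gap))$ iterations of inverted power drive $G$ below $1/\sqrt{10}$, provided each iteration's solver achieves a $1/\poly(d,1/\gap)$ expected contraction in the $\bv B$-norm. This is obtained by chaining $O(\log(d/\gap))$ calls of the accelerated solver inside each outer iteration, for a total of $O(\log^2(d/\gap))$ accelerated-solver invocations. \emph{Refinement:} once $G(x)\le 1/\sqrt{10}$, Corollary~\ref{cor:constant_factor_corollary} (applied through Theorem~\ref{thm:powermethod-perturb}) tells us a \emph{single} accelerated-solver invocation per outer iteration suffices to contract $G(x)$ by a constant factor in expectation. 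To reach a target Rayleigh-quotient error of $\epsilon'$ and invoke Lemma~\ref{lem:rayquot-potential}, $O(\log(d/\epsilon'))$ such iterations suffice.

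To convert to the eigenvector guarantee $|v_1^\top x|\ge 1-\epsilon$, I apply Lemma~\ref{lem:ray_to_evec}: it suffices to reach Rayleigh-quotient accuracy $\epsilon' = \epsilon\cdot\gap$, which changes the refinement-phase iteration count to $O(\log(d/(\epsilon\gap))) = O(\log(d/\gap)+\log(1/\epsilon))$. The $O(\log(d/\gap))$ additive term here, after multiplying by the solver cost, is absorbed into the $\log^3(d/\gap)$ burn-in term. Summing, the total cost is
\[
\underbrace{O(\log^2(d/\gap))\cdot T_{\mathrm{solve}}}_{\text{burn-in}}\; +\; \underbrace{O(\log(1/\epsilon))\cdot T_{\mathrm{solve}}}_{\text{refinement}}
\]
which yields exactly the bound $O\!\left(\tfrac{\nnz(\bv A)^{3/4}(d\,\nrank(\bv A))^{1/4}}{\sqrt{\gap}}\cdot\bigl(\log^3(d/\gap) + \log(d/\gap)\log(1/\epsilon)\bigr)\right)$ stated in the theorem.

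Failure probabilities are handled by a union bound: the $O(d^{-10})$ failure event of Lemma~\ref{lem:init-random} and the $O(d^{-10})$ failure event accumulated over burn-in iterations (from Theorem~\ref{thm:init-offline}) combine to $O(d^{-10})$, and Markov's inequality applied at the end of the refinement phase, as in the proof of Theorem~\ref{main_offline_theorem}, converts the $\E[G(x)^2]$ bound into a high-probability Rayleigh-quotient bound. The only substantive thing to check — and the main place where one must be a little careful — is that the per-solver guarantee output by Theorem~\ref{accelerated_offline_solver} is compatible with the precise form of solver accuracy demanded by Theorem~\ref{thm:powermethod-perturb} (expected $\norm{\cdot}_{\bv B}$ accuracy at level $\tfrac{c_1}{1000}\sqrt{\lambda_1(\bv B^{-1})}$) and by Theorem~\ref{thm:init-offline} (relative-error in the $\bv B$-norm against $\mb^{-1}x$); but both reductions go through the standard trick of using $\tfrac{1}{x^\top\bv B x}x$ as an initial approximation to $\bv B^{-1}x$ (as in Corollary~\ref{cor:constant_factor_corollary}), after which the geometric contraction of the accelerated solver converts any desired relative accuracy into the required absolute/expected accuracy at the claimed logarithmic overhead.
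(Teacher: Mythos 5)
Your proposal is correct and follows exactly the route the paper takes: the paper proves Theorem~\ref{accelerated_offline_theorem} simply by declaring the analysis identical to that of Theorem~\ref{main_offline_theorem} with the accelerated solver of Theorem~\ref{accelerated_offline_solver} substituted in, and your runtime accounting (the extra $\log(d/\gap)$ factor per solver call turning $\log^2$ into $\log^3$ for burn-in and $\log(1/\epsilon)$ into $\log(d/\gap)\log(1/\epsilon)$ for refinement, plus the $\epsilon'=\epsilon\cdot\gap$ conversion via Lemma~\ref{lem:ray_to_evec}) matches the stated bound. If anything, your write-up supplies more detail than the paper does for this theorem.
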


\section{Statistical Setting}\label{sec:online}

Here we show how to apply the shifted-and-inverted power method framework of Section \ref{framework} to the online setting. This setting is more difficult than the offline case. As there is no canonical matrix $\bv{A}$, and we only have access to the distribution $\dist$ through samples, in order to apply Theorem \ref{thm:powermethod-perturb}
we must show
 how to both estimate the Rayleigh quotient (Section~\ref{sec:online:rayleigh_quotient}) as well as solve the requisite linear systems in expectation (Section~\ref{sec:online:system-solver}).

After laying this ground work, our main result is  given in Section \ref{sec:online:result}. 
Ultimately, the results in this section allow us to achieve more efficient algorithms for computing the top eigenvector in the statistical setting as well as improve upon the previous best known sample complexity for top eigenvector computation. As we show in Section~\ref{sec:lower} the bounds we provide in this section are in fact tight for general distributions.

\subsection{Estimating the Rayleigh Quotient}
\label{sec:online:rayleigh_quotient}

Here we show how to estimate the Rayleigh quotient of a vector with respect to $\mSigma$. Our analysis is standard -- we first approximate the Rayleigh quotient by its empirical value on a batch of $k$ samples and prove using Chebyshev's inequality that the error on this sample is small with constant probability. We then repeat this procedure $O(\log (1/p))$ times and output the median. By Chernoff bound this yields a good estimate with probability $1 - p$. The formal statement of this result and its proof comprise the remainder of this subsection. 

\begin{theorem}[Online Rayleigh Quotient Estimation]\label{online_rayleigh_estimation}
Given any $\epsilon \in (0,1]$ and $p \in [0, 1]$ let $k = \lceil 4\nvar(\dist) \epsilon^{-2} \rceil$ and $m = c\log(1/p)$ for some sufficiently large constant $c$.
For all $i \in [k]$ and $j \in [m]$ let $a_{i}^{(j)}$ be drawn independently from $\dist$. Then if for any unit norm $x$ we let $R_{i,j} \defeq x^\top a_i^{(j)} (a_i^{(j)})^\top x$, $R_{j} \defeq \frac{1}{k} \sum_{i \in [k]} R_{i,j}$, and let $z$ be median value of the $R_j$ then with probability $1 - p$ we have that
$
\left|z - x^\top \mSigma x\right| \leq \epsilon \lambda_1 
$.
\end{theorem}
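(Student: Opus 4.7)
The plan is to use the standard median-of-means estimator: each mean $R_j$ is close to $x^\top \mSigma x$ with constant probability by Chebyshev, and the median then boosts this to the desired confidence via a Chernoff bound.

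First, I would establish the basic moments of the single-sample estimator $R_{i,j} = (a^\top x)^2$. By linearity of expectation, $\E[R_{i,j}] = x^\top \mSigma x$, so each batch average $R_j$ is unbiased. For the variance, I would use Cauchy-Schwarz to write
\[
\E[R_{i,j}^2] = \E[(a^\top x)^4] \le \norm{x}_2^2 \cdot \E[\norm{a}_2^2 (a^\top x)^2] = x^\top \E[(aa^\top)^2] x \le \norm{\E[(aa^\top)^2]}_2,
\]
where I used that $(aa^\top)^2 = \norm{a}_2^2 \, aa^\top$ and $\norm{x}_2 = 1$. Recalling the definition $\nvar(\dist) = \norm{\E[(aa^\top)^2]}_2 / \lambda_1^2$, this yields $\variance(R_{i,j}) \le \nvar(\dist) \cdot \lambda_1^2$. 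Averaging $k$ independent samples then gives $\variance(R_j) \le \nvar(\dist) \lambda_1^2 / k$.

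Next, with $k = \lceil 4 \nvar(\dist) \epsilon^{-2} \rceil$ the variance of $R_j$ is at most $\lambda_1^2 \epsilon^2 / 4$. Chebyshev's inequality therefore yields
\[
\Pr\!\left[ \left|R_j - x^\top \mSigma x\right| > \epsilon \lambda_1 \right] \le \frac{\variance(R_j)}{\epsilon^2 \lambda_1^2} \le \frac{1}{4}.
\]
So each of the $m$ independent batch means is within $\epsilon \lambda_1$ of the true Rayleigh quotient with probability at least $3/4$.

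Finally, I would amplify via the median. Call a batch ``good'' if $|R_j - x^\top \mSigma x| \le \epsilon \lambda_1$. The median $z$ lies within $\epsilon \lambda_1$ of $x^\top \mSigma x$ whenever at least half the batches are good. Since the expected number of good batches is at least $3m/4$, a standard multiplicative Chernoff bound shows the probability that fewer than $m/2$ are good is at most $\exp(-\Omega(m))$; choosing the constant $c$ in $m = c \log(1/p)$ sufficiently large makes this at most $p$. This is all straightforward; the only mildly non-routine step is the variance bound, and the main thing to be careful about is tying it cleanly to $\nvar(\dist)$ via the $(aa^\top)^2 = \norm{a}_2^2 \, aa^\top$ identity.
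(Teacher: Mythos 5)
Your proposal is correct and follows essentially the same route as the paper: the same Cauchy--Schwarz variance bound $\E[(a^\top x)^4] \le x^\top \E[(aa^\top)^2]x \le \nvar(\dist)\lambda_1^2$, followed by Chebyshev on each batch mean and a Chernoff bound on the median. No gaps.
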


\begin{proof}
\begin{align*}
\variance_{a \sim \dist} (x^\top a a^\top x)
&=
\E_{a \sim \dist} (x^\top a a^\top x)^2 - (\E_{a \sim \dist} x^\top a a^\top x)^2
\\
&\leq 
\E_{a \sim \dist} \norm{a}_2^2 x^\top a a^\top x - (x^\top \mSigma x)^2
\\
&\leq \normFull{\E_{a \sim \dist} \norm{a}_2^2 a a^\top}_2 =\nvar(\dist) \lambda_1^2
\end{align*}
Consequently, $\variance(R_{i,j}) \leq \nvar(\dist) \lambda_1^2$, and since each of the $a_{i}^{(j)}$ were drawn independently this implies that we have that $\variance(R_j) \leq \nvar(\dist) \lambda_1^2 / k$. Therefore, by Chebyshev's inequality 
\[
\Pr\left[\left|R_j - \E[R_j]\right| \geq 2 \sqrt{\frac{\nvar(\dist) \lambda_1^2}{k}}\right]
\leq \frac{1}{4}.
\]
Since $\E[R_j] = x^\top \mSigma x$ and since we defined $k$ appropriately this implies that
\begin{equation}
\label{eq:rayleighlemma:1}
\Pr\left[\left|R_j - x^\top \mSigma x \right| \geq \epsilon \lambda_1 \right]
\leq \frac{1}{4}.
\end{equation}
The median $z$ satisfies $|z - x^\top \mSigma x| \leq \epsilon$ as more than half of the $R_j$ satisfy $|R_j - x^\top \mSigma x| \leq \epsilon$. This happens with probability $1 - p$ by Chernoff bound, our choice of $m$ and \eqref{eq:rayleighlemma:1}.
\end{proof}

\subsection{Solving the Linear system}
\label{sec:online:system-solver}

Here we show how to solve linear systems in $\mb = \lambda \mI - \mSigma$ in the streaming setting. We follow the general strategy of the offline algorithms given in Section \ref{sec:offline}, replacing traditional SVRG with the the streaming SVRG algorithm of \cite{frostig2014competing}. Again we minimize $f(x) = \frac{1}{2} x^\top \bv B x - b^\top x$.  Similarly to in the offline case, we define for all $a \in \supp(\dist)$,
\begin{align}\label{distribution_obj_function}
\psi_{a}(x)
	\defeq
\frac{1}{2} x^{\top} (\lambda \mI - aa^{\top})x - b^{\top}x.
\end{align}

This definition insures that
\[
f(x) = \E_{a \sim \dist} \psi_{a}(x).
\]

The performance of streaming SVRG \cite{frostig2014competing} will be governed by three regularity parameters. As in the offline case, we use the fact that $f(\cdot)$ is $\mu$-strongly convexity for $\mu = \lambda - \lambda_1$. We also again require a smoothness parameter $\avgsmooth$ such that:
\begin{equation}
\label{eq:smoothness}
\forall x \in \R^d \enspace : \enspace
\E_{a \sim \dist}\norm{\grad\psi_{a}(x)-\grad\psi_{a}(\opt{x})}_2^2
\leq 2 \avgsmooth \left [f(x) - f(\opt{x})\right].
\end{equation}

Lastly, we need an upper bound the variance, $\sigma^2$ such that:
\begin{equation}
\label{eq:variance}
\E_{a \sim \dist}    \frac{1}{2}
\normFull{\grad \psi_{a} (\opt{x})}_{
	\left(\hess f (\opt{x})\right)^{-1}}^2
\leq \sigma^2
\end{equation}

We bound the second two parameters as follows.

\begin{lemma} [Streaming Smoothness]
\label{lem:online:smooth}
The smoothness parameter
\[
\avgsmooth \defeq \lambda + \frac{\nvar(\dist)\lambda_1^2}{\lambda - \lambda_1}
\]
satisfies \eqref{eq:smoothness}.
\end{lemma}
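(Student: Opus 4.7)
The plan is to compute the gradient difference explicitly, take its expected squared norm using the definition of $\nvar(\dist)$, and then convert to an $\mb$-norm bound via the strong convexity-style inequality $\mb \succeq (\lambda-\lambda_1)\mI$.

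First I would compute $\grad \psi_a(x) = (\lambda \mI - aa^\top) x - b$, so that setting $y \defeq x - \opt{x}$ gives $\grad \psi_a(x) - \grad \psi_a(\opt{x}) = (\lambda \mI - aa^\top) y$. Expanding the square and using $(aa^\top)^2 = \norm{a}_2^2 aa^\top$, I would get
\[
\E_{a \sim \dist} \norm{\grad \psi_a(x) - \grad \psi_a(\opt{x})}_2^2
= \lambda^2 \norm{y}_2^2 - 2\lambda \, y^\top \mSigma y + y^\top \E_{a \sim \dist}\!\left[(aa^\top)^2\right] y.
\]
By the definition of $\nvar(\dist)$ we have $\norm{\E_{a \sim \dist}[(aa^\top)^2]}_2 = \nvar(\dist)\lambda_1^2$, so the last term is at most $\nvar(\dist)\lambda_1^2 \norm{y}_2^2$.

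Next I would split the first two terms using $\lambda^2\norm{y}_2^2 - 2\lambda y^\top \mSigma y = \lambda\,\norm{y}_\mb^2 - \lambda \, y^\top \mSigma y \leq \lambda\,\norm{y}_\mb^2$, which holds because $\mSigma \succeq 0$. For the variance term, since $\mb \succeq (\lambda-\lambda_1)\mI$ we have $\norm{y}_2^2 \leq \frac{1}{\lambda-\lambda_1}\norm{y}_\mb^2$, so
\[
\nvar(\dist)\lambda_1^2\, \norm{y}_2^2 \leq \frac{\nvar(\dist)\lambda_1^2}{\lambda-\lambda_1} \norm{y}_\mb^2.
\]
Adding the two bounds yields $\E_{a \sim \dist}\norm{\grad \psi_a(x) - \grad \psi_a(\opt{x})}_2^2 \leq \avgsmooth \cdot \norm{y}_\mb^2$. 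Finally, I invoke the identity $\norm{y}_\mb^2 = 2[f(x) - f(\opt{x})]$ established in equation \eqref{norm_to_function_error_conversion}, which gives the claim.

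There is no real obstacle here — the only nontrivial step is seeing that one should absorb the $\lambda^2 \norm{y}_2^2 - 2\lambda y^\top\mSigma y$ piece into a single $\lambda\,\norm{y}_\mb^2$ term (the ``cross term'' $-\lambda y^\top \mSigma y$ is discarded since it is nonpositive), rather than crudely bounding $\lambda^2\norm{y}_2^2$ alone. This is what produces the clean $\lambda$ coefficient rather than a $\lambda^2/(\lambda-\lambda_1)$ coefficient in $\avgsmooth$, matching the expression in the statement.
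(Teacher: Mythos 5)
Your proof is correct and follows essentially the same route as the paper's: expand $\E\norm{(\lambda\mI - aa^\top)y}_2^2$, bound the quadratic term by $\nvar(\dist)\lambda_1^2\norm{y}_2^2$ via the definition of $\nvar(\dist)$, absorb $\lambda^2\norm{y}_2^2 - 2\lambda y^\top\mSigma y$ into $\lambda\norm{y}_\mb^2$ by discarding the nonpositive $-\lambda y^\top \mSigma y$, and convert to function error using $\mb \succeq (\lambda-\lambda_1)\mI$ and $\norm{y}_\mb^2 = 2[f(x)-f(\opt{x})]$. The paper phrases the last step via strong convexity of $f$ rather than $\mb \succeq (\lambda-\lambda_1)\mI$, but these are the same fact.
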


\begin{proof} Our proof is similar to the one given in Lemma \ref{simple_variance_bound}.
\begin{align*}
\E_{a \sim \dist}
\normFull{\grad\psi_{a}(x) - \grad \psi_{a} (\opt{x})}_2^2 &= \E_{a \sim \dist} \normFull{(\lambda \mI - aa^\top) (x - \opt{x})}_2^2\\
	&=
	\lambda^2 \norm{x - \opt{x}}_2^2 - 2\lambda \E_{a \sim \dist} \norm{x - \opt{x}}_{aa^\top}^2 + \E_{a \sim \dist} \norm{aa^\top (x - \opt{x})}_2^2\\
	&\le
	\lambda^2 \norm{x - \opt{x}}_2^2 - 2\lambda \norm{x - \opt{x}}_{\bv \Sigma}^2 + \normFull{\E_{a \sim \dist} \norm{a}_2^2 a a^\top}_2 \cdot \norm{x - \opt{x}}_2^2\\
	&\leq  \lambda \normFull{x - \opt{x}}_\mb^2 + \nvar(\dist)\lambda_1^2\norm{x - \opt{x}}_2^2.
\end{align*}
Since $f$ is $\lambda - \lambda_1$-strongly convex, $\norm{x - \opt{x}}_2^2 \leq \frac{2}{\lambda - \lambda_1} [f(x) - f(\opt{x})]$. Additionally, $2[f(x) - f(\opt{x})] = \norm{x - \opt{x}}_{\mb}^2$. The result follows.
\end{proof}

\begin{lemma}[Streaming Variance] 
\label{lem:online:streamvar}
For $
\sigma^2
\defeq
\E_{a \sim \dist}  \frac{1}{2}
\normFull{\grad \psi_{a} (\opt{x})}_{
	\left(\hess f (\opt{x})\right)^{-1}}^2 $
we have
\[
\sigma^{2} 
\leq 
\left(\frac{\nvar(\dist)\lambda_1^2}{\lambda - \lambda_1} \right)
\normFull{\opt{x}}_2^2.
\]
\end{lemma}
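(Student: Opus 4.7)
The plan is to first obtain an explicit expression for $\grad \psi_a(\opt{x})$, then replace the norm under $(\hess f(\opt{x}))^{-1} = \mb^{-1}$ by a scaled $\ell_2$ norm, and finally bound the resulting second-moment quantity using the definition of $\nvar(\dist)$.

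First, I would compute the gradient. From \eqref{distribution_obj_function}, $\grad \psi_a(x) = (\lambda \mI - aa^\top) x - b$, and from $f(x) = \frac{1}{2} x^\top \mb x - b^\top x$, the optimality condition gives $\mb \opt{x} = b$. Substituting this,
\begin{align*}
\grad \psi_a(\opt{x}) = (\lambda \mI - aa^\top)\opt{x} - \mb \opt{x} = (\mSigma - aa^\top)\opt{x}.
\end{align*}
Also $\hess f(\opt{x}) = \mb$, so the quantity to be bounded is $\sigma^2 = \tfrac{1}{2}\E_a \norm{(\mSigma - aa^\top)\opt{x}}_{\mb^{-1}}^2$.

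Next, since $\mb \succeq (\lambda - \lambda_1) \mI$ we have $\mb^{-1} \preceq \tfrac{1}{\lambda - \lambda_1} \mI$, so
\begin{align*}
\sigma^2 \leq \frac{1}{2(\lambda-\lambda_1)} \E_a \norm{(\mSigma - aa^\top)\opt{x}}_2^2 = \frac{1}{2(\lambda-\lambda_1)} \opt{x}^\top \E_a[(\mSigma - aa^\top)^2] \opt{x}.
\end{align*}
Expanding the square and using $\E_a[aa^\top] = \mSigma$, the cross-terms simplify and we obtain $\E_a[(\mSigma - aa^\top)^2] = \E_a[(aa^\top)^2] - \mSigma^2 \preceq \E_a[(aa^\top)^2]$, since $\mSigma^2 \succeq 0$.

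Finally, I would apply the definition $\nvar(\dist) = \norm{\E_a[(aa^\top)^2]}_2 / \lambda_1^2$ to get $\opt{x}^\top \E_a[(aa^\top)^2] \opt{x} \leq \nvar(\dist) \lambda_1^2 \norm{\opt{x}}_2^2$. Combining these estimates yields $\sigma^2 \leq \tfrac{\nvar(\dist) \lambda_1^2}{2(\lambda - \lambda_1)} \norm{\opt{x}}_2^2$, which is in fact slightly stronger than the claimed bound. There is no real obstacle here — the only mild subtlety is recognizing that the $\mSigma \opt{x}$ terms arising from $\grad \psi_a(\opt{x})$ and from $\mb \opt{x} = b$ cancel cleanly, so that the gradient at the optimum is exactly the centered random matrix $(\mSigma - aa^\top)$ applied to $\opt{x}$, which is precisely what one wants for a variance computation.
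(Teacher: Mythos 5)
Your proposal is correct and follows essentially the same route as the paper: both compute $\grad \psi_a(\opt{x}) = (\mSigma - aa^\top)\opt{x}$ via $b = \mb\opt{x}$, drop the $-\mSigma^2$ (equivalently $-\mSigma\mb^{-1}\mSigma$) term from the variance expansion, and use $\mb^{-1} \preceq \frac{1}{\lambda-\lambda_1}\mI$ together with the definition of $\nvar(\dist)$; the only difference is that you apply the PSD bound on $\mb^{-1}$ before centering rather than after, which is immaterial. Both arguments in fact yield the extra factor of $\tfrac{1}{2}$ that the stated lemma discards.
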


\begin{proof}
We have
\begin{align*}
\sigma^2
	 &=
\E_{a \sim \dist} \frac{1}{2} \normFull{\left(\lambda \mI 
	- aa^{\top}\right) \opt{x} - b}_{\mb^{-1}}^{2}
	\\
	&=
\E_{a \sim \dist} \frac{1}{2} \normFull{\left(\lambda \mI 
	- aa^{\top}\right) \opt{x} - \mb \opt{x}}_{\mb^{-1}}^{2}
	\\
	&=
\E_{a \sim \dist} \frac{1}{2} \normFull{\left(\mSigma
	- aa^{\top}\right) \opt{x}}_{\mb^{-1}}^{2}.
\end{align*}
Applying $\E \norm{a -\E a}_{2}^{2} = \E\norm a_{2}^{2}-\norm{\E a}_{2}^{2}$ gives:
\[
\sigma^2 = \E_{a \sim \dist} \frac{1}{2} \normFull{\opt{x}}_{a a^\top \mb^{-1} a a^\top}^2
- \frac{1}{2} \normFull{\opt{x}}_{\mSigma \mb^{-1} \mSigma}^2 \le \E_{a \sim \dist} \frac{1}{2} \normFull{\opt{x}}_{a a^\top \mb^{-1} a a^\top}^2.
\]
Furthermore, since $\mb^{-1} \preceq \frac{1}{\lambda - \lambda_1} \mI$ we have 
\[
\E_{a \sim \dist}  a a^{\top} \mb^{-1} a a^\top
\preceq
\frac{1}{\lambda - \lambda_1} \E_{a \sim \dist}  (a a^\top)^2
\preceq
\left(
\frac{\normFull{\E_{a \sim \dist} (a a ^\top)^2}_2}{\lambda - \lambda_1}\right) \mI
=
\left(\frac{\nvar(\dist)\lambda_1^2}{\lambda - \lambda_1}\right) \mI.
\]
yielding the result.
\end{proof}

With these two emmas in place, we can apply the streaming SVRG algorithm of \cite{frostig2014competing} to solving systems in $\bv{B}$. We encapsulate the basic iterative step of  Algorithm $1$ of \cite{frostig2014competing} in the following definition:


\begin{definition}[Streaming SVRG Step]\label{svrg_step_def} 
Given $x_0 \in \R^d$ and $\eta, k, m > 0$ we define a \emph{streaming SVRG step}, $x = \ssvrgstep(x_0, \eta, k, m)$ as follows. First we take $k$ samples $a_1, ..., a_k$ from $\dist$ and set $g = \frac{1}{k} \sum_{i \in [k]} \psi_{a_i}$ where $\psi_{a_i}$ is as defined in \eqref{distribution_obj_function}. Then for $\wt{m}$ chosen uniformly at random from $\{1, ..., m\}$ we draw $\tilde m$ additional samples $\wt{a}_1, ..., \wt{a}_{\wt{m}}$ from $\dist$. For $t = 0, ... , \wt{m} - 1$ we let
\[
x_{t+1}
	:=
x_t -
	\frac{\eta}{L}
\left(
	\grad \psi_{\wt{a}_{t}}(x_t)
	- \grad \psi_{\wt{a}_{t}}(x_0)
	+ \grad g(x_0)
\right)
	\]
	Finally, we return $x_{\wt{m}}$.
\end{definition}

The accuracy of the above iterative step is proven in Theorem 4.1 of \cite{frostig2014competing}, which we include, using our notation below:

\begin{theorem} [Theorem 4.1 of \cite{frostig2014competing}]\label{streaming_svrg_perf_bound}
Letting $f(x) = \E_{a\sim \dist} \psi_a(x)$ and $\mu$, $\avgsmooth$, $\sigma^2$ be the strong convexity, smoothness, and variance bounds for $f(x)$, for any distribution over $x_0$ we have that 
$x := \ssvrgstep(w_0, \eta, k, m)$ has
$\E[f(x) - f(\opt{x})]$ upper bounded by
\[
	\frac{1}{1 - 4\eta}
\left[
	\left(\frac{\avgsmooth}{\mu m\eta} + 4 \eta \right)
	\left [f(x_0) - f(\opt{x}) \right ]
+ \frac{1 + 2\eta}{k}
	\left(
	\sqrt{\frac{\avgsmooth}{\mu} \cdot \left [f(x_0) - f(\opt{x})\right ]}
	+ \sigma
	\right)^2
	\right].
\]
\end{theorem}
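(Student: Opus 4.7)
The plan is to mimic the standard SVRG analysis while carefully tracking the bias introduced by replacing the population gradient $\grad f(x_0)$ with the batch-average $\grad g(x_0)$. Let $\xi \defeq \grad g(x_0) - \grad f(x_0)$, a mean-zero random vector determined by the batch samples $a_1,\dots,a_k$, and let $v_t \defeq \grad\psi_{\wt a_t}(x_t) - \grad\psi_{\wt a_t}(x_0) + \grad g(x_0)$ be the search direction at inner step $t$, so that $\E_{\wt a_t}[v_t] = \grad f(x_t) + \xi$. The inner update is therefore exactly SVRG but with a biased anchor gradient; the entire proof will revolve around controlling $\xi$.

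First I would analyze a single inner step, conditional on the batch (equivalently, on $g$). Expanding $\norm{x_{t+1}-\opt{x}}_2^2$, taking $\E_{\wt a_t}$, applying convexity of $f$ to the cross term $\E[v_t]^\top(x_t-\opt{x})$, and splitting $\E\norm{v_t}_2^2$ via $\norm{a+b}_2^2 \le 2\norm{a}_2^2 + 2\norm{b}_2^2$ to isolate the $\xi$ piece (the remaining variance being bounded exactly as in the proof of Theorem~\ref{lem:svrg-nonconv} using the smoothness bound~\eqref{eq:smoothness}) yields a one-step recursion with extra additive terms that are linear and quadratic in $\xi$. Next I would sum over $t=0,\dots,\wt m-1$, use the uniform choice of $\wt m$ to convert the inner sum into an average, telescope the $\norm{x_t-\opt{x}}_2^2$ differences, and convert $\norm{x_0-\opt{x}}_2^2$ to $(2/\mu)(f(x_0)-f(\opt{x}))$ by $\mu$-strong convexity, producing a bound conditional on $g$ of the shape $(1-4\eta)^{-1}\bigl[(\avgsmooth/(\mu m\eta)+4\eta)(f(x_0)-f(\opt{x})) + (1+2\eta)\, E(\xi)\bigr]$, where $E(\xi)$ collects a $\norm{\xi}^2$ term and a cross term. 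The natural norm for the $\xi$-quadratic is $(\hess f(\opt{x}))^{-1}$, matched to the strong convexity direction by Young's inequality; this is precisely why the variance parameter~\eqref{eq:variance} uses that weighting.

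Finally I would take expectation over the batch $a_1,\dots,a_k$. Because $\xi$ is the average of $k$ iid centered summands $\grad\psi_{a_i}(x_0) - \grad f(x_0)$, its second moment scales as $1/k$ times the per-sample variance. Decomposing $\grad\psi_a(x_0) = \grad\psi_a(\opt{x}) + (\grad\psi_a(x_0)-\grad\psi_a(\opt{x}))$ and applying~\eqref{eq:variance} and~\eqref{eq:smoothness} respectively gives $\E\norm{\xi}_{(\hess f(\opt{x}))^{-1}}^2 \lesssim (\sigma^2 + (\avgsmooth/\mu)(f(x_0)-f(\opt{x})))/k$. Bounding the cross term similarly by Cauchy--Schwarz and combining through $a^2+b^2 \le (a+b)^2$ then produces the advertised $((1+2\eta)/k)\bigl(\sqrt{(\avgsmooth/\mu)(f(x_0)-f(\opt{x}))}+\sigma\bigr)^2$ term. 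The main obstacle will be chasing Young's inequality constants so that the two error sources combine into this clean $(\sqrt{\cdot}+\sigma)^2$ form and the leading coefficients emerge as exactly $\avgsmooth/(\mu m\eta)+4\eta$ and $(1+2\eta)/k$ rather than something weaker; conceptually nothing beyond offline SVRG and elementary batch variance reduction is required.
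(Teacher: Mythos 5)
The paper offers no proof of this statement: it is imported verbatim as Theorem 4.1 of \cite{frostig2014competing}, so there is no internal argument to compare yours against. That said, your sketch faithfully reconstructs the proof of the cited result and is the natural extension of the paper's own offline analysis (Theorem~\ref{lem:svrg-nonconv}): run the SVRG recursion conditionally on the batch, then control the anchor-gradient bias $\xi = \grad g(x_0) - \grad f(x_0)$ in the $(\hess f(\opt{x}))^{-1}$ norm, whose second moment scales as $1/k$ and yields the $\frac{1+2\eta}{k}\left(\sqrt{\cdot}+\sigma\right)^2$ term; the $(\hess f(\opt{x}))^{-1}$ weighting is indeed forced by pairing the cross term $\xi^\top(x_t - \opt{x})$ with $\norm{x_t-\opt{x}}_{\hess f(\opt{x})}^2 = 2[f(x_t)-f(\opt{x})]$, which is valid here because $f$ is quadratic (the remark about the parameter $\alpha=1$). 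The only work you defer is the constant chasing — Young's inequality on that cross term and the $L^2$ triangle inequality needed to land on the exact $\left(\sqrt{\cdot}+\sigma\right)^2$ form — which is routine.
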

Note that 
Theorem~4.1 in \cite{frostig2014competing} has an additional parameter of $\alpha$, which bounds the Hessian of $f(\cdot )$ at the optimum in comparison to the hessian at at any other point. In our setting this parameter is $1$ as $\hess f(y) = \hess f(z)$ for all $y$ and $z$.
With Theorem \ref{streaming_svrg_perf_bound} we immediately have the following:

\begin{corollary}[Streaming SVRG-Based Solver]\label{streaming_solverOLD}
Let $\mu =\lambda-\lambda_1$, $\avgsmooth =\lambda + \frac{\nvar(\dist)\lambda_1^2}{\lambda-\lambda_1}$, and $\sigma^2 = \frac{\nvar(\dist)\lambda_1^2}{\lambda-\lambda_1} \norm{\opt{x}}_2^2$. Let $c_2,c_3 \in (0,1)$ be any constants, $\eta = \frac{c_2}{8}$, $m = \left [\frac{\avgsmooth}{\mu c_2^2}\right]$, and $k = \max \left \{ \left [\frac{\avgsmooth}{\mu c_2}\right], \left [\frac{\nvar(\dist)\lambda_1^2}{(\lambda-\lambda_1)^2c_3} \right] \right \}$. If used to solve $\bv{B}x = b$ for a unit vector $b$, the iterative procedure described in Definition \ref{svrg_step_def} returns $x = \ssvrgstep(x_0, \eta, k, m)$ satisfying:
\[
\E \norm{x - \opt{x}}_{\mb}^2
\leq 22c_2 \cdot \norm{x_0-\opt{x}}_\mb^2 + 10 c_3\lambda_1(\bv{B}^{-1}).
\]
Further, the procedure requires $O \left ( \frac{\nvar(\dist)}{\gap^2} \left [\frac{1}{c_2^2} + \frac{1}{c_3} \right ] \right )$ samples from $\dist$.
\end{corollary}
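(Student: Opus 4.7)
The plan is to directly apply the streaming SVRG guarantee of Theorem~\ref{streaming_svrg_perf_bound} with the parameters specified in the corollary, then perform bookkeeping to convert the resulting function-error bound into a $\mb$-norm bound of the claimed form. The smoothness $\avgsmooth$ and variance $\sigma^2$ are already provided by Lemmas~\ref{lem:online:smooth} and~\ref{lem:online:streamvar}, and the strong convexity constant $\mu = \lambda-\lambda_1$ is immediate. Recall also the basic identity $\norm{x-\opt{x}}_{\mb}^2 = 2[f(x)-f(\opt{x})]$ that was already used in~\eqref{norm_to_function_error_conversion}.

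First I would plug the parameters $\eta=c_2/8$, $m \geq \avgsmooth/(\mu c_2^2)$, and $k \geq \avgsmooth/(\mu c_2)$ into the bound of Theorem~\ref{streaming_svrg_perf_bound}. The prefactor $\frac{\avgsmooth}{\mu m \eta} + 4\eta$ on $[f(x_0)-f(\opt{x})]$ becomes $O(c_2)$ by direct substitution, and $1/(1-4\eta) \le 2$ since $c_2 \le 1$. For the noise term, I would use $(a+b)^2 \le 2a^2+2b^2$ to split $\frac{1+2\eta}{k}(\sqrt{(\avgsmooth/\mu)\cdot[f(x_0)-f(\opt{x})]}+\sigma)^2$ into two pieces: one proportional to $\frac{\avgsmooth}{\mu k}[f(x_0)-f(\opt{x})]$, which is absorbed into the $O(c_2)[f(x_0)-f(\opt{x})]$ term by our choice of $k$, and a second proportional to $\sigma^2/k$.

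The second main step is to bound $\sigma^2/k$ in terms of $\lambda_1(\bv{B}^{-1})$. Since $b$ is a unit vector and $\opt{x} = \mb^{-1}b$, we have $\norm{\opt{x}}_2 \le \norm{\mb^{-1}}_2 = \lambda_1(\mb^{-1}) = 1/(\lambda-\lambda_1)$, so from Lemma~\ref{lem:online:streamvar}, $\sigma^2 \le \nvar(\dist)\lambda_1^2/(\lambda-\lambda_1)^3$. Combined with $k \ge \nvar(\dist)\lambda_1^2/((\lambda-\lambda_1)^2 c_3)$ this gives $\sigma^2/k \le c_3/(\lambda-\lambda_1) = c_3\lambda_1(\mb^{-1})$, exactly the form in the claim. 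Multiplying by the constant factor from $(1+2\eta)/(1-4\eta)$ yields the $10c_3\lambda_1(\mb^{-1})$ term (absorbing the slack into this explicit constant). Converting the function-error inequality to $\norm{\cdot}_\mb^2$ via the identity above produces the stated bound, with the $22c_2$ constant emerging from the above accounting (mild tweaks of the multiplicative constant are harmless).

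Finally, for the sample complexity, I would observe that each call to $\ssvrgstep$ uses $k + \tilde m \le k+m$ samples. Since $\lambda = (1+\Theta(\gap))\lambda_1$ we have $\lambda-\lambda_1 = \Theta(\gap \cdot \lambda_1)$, so $\avgsmooth/\mu = \lambda/(\lambda-\lambda_1) + \nvar(\dist)\lambda_1^2/(\lambda-\lambda_1)^2 = O(1/\gap) + O(\nvar(\dist)/\gap^2) = O(\nvar(\dist)/\gap^2)$, using $\nvar(\dist)\ge 1$. Hence $m = O(\nvar(\dist)/(\gap^2 c_2^2))$ and $k = O(\nvar(\dist)/(\gap^2 c_2)) + O(\nvar(\dist)/(\gap^2 c_3))$, giving the claimed $O(\nvar(\dist)\gap^{-2}[c_2^{-2}+c_3^{-1}])$ sample total. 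I do not anticipate a serious obstacle here; the only care needed is the correct handling of the two separate roles of $k$ (controlling both the prefactor of $[f(x_0)-f(\opt{x})]$ and the additive noise $\sigma^2/k$), which is why $k$ is defined as the maximum of two terms.
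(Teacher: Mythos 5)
Your proposal matches the paper's proof essentially step for step: both apply Theorem~\ref{streaming_svrg_perf_bound} with the stated parameters, split the noise term via $(a+b)^2 \le 2a^2+2b^2$, bound $\norm{\opt{x}}_2^2 \le 1/(\lambda-\lambda_1)^2$ using that $b$ is a unit vector, convert function error to the $\mb$-norm via \eqref{norm_to_function_error_conversion}, and count $m+k$ samples using $\avgsmooth/\mu = O(\nvar(\dist)/\gap^2)$. The argument is correct and no meaningful difference from the paper's route exists.
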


\begin{proof}
Using the inequality $(x + y)^2 \leq 2x^2 + 2y^2$ we have that
\[
\left(\sqrt{\frac{\avgsmooth}{\mu} \cdot \E[f(x_0) - f(\opt{x})]} + \sigma \right)^2
\leq \frac{2 \avgsmooth}{\mu} \cdot \E[f(x_0) - f(\opt{x})]
+ 2\sigma^2
\]
Additionally, since $b$ is a unit vector, we know that $\norm{\opt{x}}_2^2 = \norm{\mb^{-1} b}_2^2 \le \frac{1}{(\lambda-\lambda_1)^2}$.
Using the fact shown in equation \eqref{norm_to_function_error_conversion} that for any $x$, $\norm{x-\opt{x}}_\bv{B}^2 = 2 [f(x)-f(\opt{x})]$ we have by Theorem \ref{streaming_svrg_perf_bound}:
\begin{align*}
\E \norm{x-\opt{x}}_\bv{B}^2 &\le \frac{1}{1-c_2/2} \left [8c_2 + \frac{c_2}{2} + \frac{4+c_2}{2}\cdot c_2 \right ]\cdot \norm{x_0-\opt{x}}_\mb^2\\
&+ \frac{4}{1-c_2/2}\cdot \frac{4+c_2}{4k}\cdot \frac{\nvar(\dist)\lambda_1^2}{(\lambda-\lambda_1)^3}\\
&\le 22c \cdot \norm{x_0-\opt{x}}_\mb^2 + \frac{10 c_3}{\lambda-\lambda_1}\\
& = 22c \cdot \norm{x_0-\opt{x}}_\mb^2 + 10 c_3 \lambda_1(\bv{B}^{-1}).
\end{align*}

The number of samples required to make the streaming SVRG step is simply bounded by $m+k$. $m = \frac{\avgsmooth}{\mu c_2^2} = O  \left(\frac{\lambda}{c_2^2(\lambda-\lambda_1)} + \frac{\nvar(\dist)\lambda_1^2}{c_2^2(\lambda-\lambda_1)^2}\right ) = O\left (\frac{1}{c_2^2\gap} + \frac{\nvar(\dist)}{c_2^2\gap^2} \right )$. $\gap < 1$ and $\nvar(\dist) \ge 1$.  So we can simplify: $m = O\left ( \frac{\nvar}{c_2^2\gap^2}\right )$. We can ignore the $\left [\frac{\avgsmooth}{\mu c_2}\right]$ in $k$ since this was already included in our bound of $m$ and so just bound $\frac{\nvar(\dist)\lambda_1^2}{c_3(\lambda-\lambda_1)^2} = O\left ( \frac{\nvar(\dist)}{\gap^2c_3}\right )$. So overall the number of samples we need to take is:
\begin{align*}
O \left ( \frac{\nvar(\dist)}{\gap^2} \left [\frac{1}{c_2^2} + \frac{1}{c_3} \right ] \right ).
\end{align*}
\end{proof}

In the offline case, when solving linear systems in the shifted-and-inverted power method, we can insure that $\norm{x_0 - \opt{x}}_\mb^2$ is small by Corollary \ref{cor:constant_factor_corollary}. In the online case, we do not have the same guarantee, since starting with a good initial value requires accurately estimating our Rayleigh quotient, which is too expensive. However, we can still show the following corollary:

\begin{corollary}[Streaming SVRG-Based Solver With No Initial Error] \label{streaming_solver}
There is a streaming algorithm that iteratively applies the solver of Corollary \ref{streaming_solverOLD} to
$\bv{B}x = b$ for a unit vector $b$, which returns $x$ satisfying:
\[
\E \norm{x - \opt{x}}_{\mb}^2
\leq 10 c_3\lambda_1(\bv{B}^{-1}).
\]
using $O \left ( \frac{\nvar(\dist)}{\gap^2c_3}\right )$ samples from $\dist$.
\end{corollary}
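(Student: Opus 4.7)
The plan is to repeatedly apply the solver from Corollary~\ref{streaming_solverOLD}, feeding each output as the starting iterate of the next call, with a geometrically decreasing schedule of the accuracy parameter $c_3$. I would initialize at $x^{(0)} = 0$, so that the starting error is $\norm{x^{(0)} - \opt{x}}_\mb^2 = \opt{x}^\top \mb \opt{x} = b^\top \mb^{-1} b \le \lambda_1(\mb^{-1})$, using that $b$ has unit norm. This handles the ``no initial error'' assumption without needing a separate warm start.

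At iteration $t = 1,\dots,T$ I would invoke the solver with $c_2 = 1/44$, so the contraction factor $22c_2$ equals $1/2$, and with accuracy parameter $c_3^{(t)} = c_3 \cdot \alpha^{T-t}$ for a constant $\alpha \in (1,2)$ (say $\alpha = 4/3$) and $T = \Theta(\log(1/c_3))$. Taking expectations via the tower property gives the one-step recurrence
\[
E_t \defeq \E \norm{x^{(t)} - \opt{x}}_\mb^2 \le \tfrac{1}{2} E_{t-1} + 10\, c_3^{(t)}\, \lambda_1(\mb^{-1}).
\]
Unrolling yields $E_T \le (1/2)^T \lambda_1(\mb^{-1}) + 10 c_3 \lambda_1(\mb^{-1}) \sum_{s=0}^{T-1}(\alpha/2)^s$. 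Since $\alpha/2 < 1$ the geometric sum is $O(1)$, and choosing $T$ so that $(1/2)^T \le c_3$ gives $E_T = O(c_3 \lambda_1(\mb^{-1}))$. A single renormalization of the schedule constants then delivers the target bound $10 c_3 \lambda_1(\mb^{-1})$.

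For sample complexity, iteration $t$ uses $O(\nvar(\dist)/\gap^2 \cdot (1/c_2^2 + 1/c_3^{(t)}))$ samples. The constant $1/c_2^2$ contributes $O(T) = O(\log(1/c_3))$ across all iterations, which is dominated by $1/c_3$. The key calculation is that since $\alpha > 1$, $\sum_{t=1}^T 1/c_3^{(t)} = (1/c_3)\sum_{s=0}^{T-1}\alpha^{-s} = O(1/c_3)$, giving a total of $O(\nvar(\dist)/(\gap^2 c_3))$ samples.

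The main subtlety worth flagging is the dual constraint on $\alpha$: we need $\alpha > 1$ so that the samples-per-iteration series converges (sample bound), and $\alpha < 2$ so that the additive errors telescope under the $1/2$ contraction (error bound). A naive constant-$c_3^{(t)}$ schedule would incur an extra $\log(1/c_3)$ factor in the sample count that cannot be absorbed into $O(1/c_3)$, which is why a geometric schedule is essential.
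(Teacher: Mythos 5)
Your proof is correct and follows essentially the same route as the paper: initialize at $x^{(0)}=0$ so the initial error is at most $\lambda_1(\mb^{-1})$, then iterate Corollary~\ref{streaming_solverOLD} with a geometrically decreasing accuracy schedule so that the per-iteration sample costs form a geometric series dominated by the final term. If anything, your explicit choice of ratio $\alpha\in(1,2)$ is more careful than the paper's own write-up, which uses the boundary ratio $2$ and states per-step error bounds that do not quite follow from its constants; your version makes both the error telescoping and the $O(1/c_3)$ sample sum airtight.
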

\begin{proof}
Let $x_0 = 0$. Then $\norm{x_0-\opt{x}}_\mb^2 = \norm{\bv{B}^{-1}x}_\mb^2 \le \lambda_1(\bv{B}^{-1})$ since $x$ is a unit vector. If we apply Corollary \ref{streaming_solverOLD} with $c_2 = 1/44$ and $c_3' = \frac{1}{20}$, then we will obtain $x_1$ with $\E \norm{x_1-\opt{x}}_\mb^2 \le \frac{1}{2}\lambda_1(\mb^{-1})$. If we then double $c_3'$ and apply the solver again we obtain $x_2$ with  $\E \norm{x_1-\opt{x}}_\mb^2 \le \frac{1}{4}\lambda_1(\mb^{-1})$. Iterating in this way, after $\log(1/c_3)$ iterations we will have the desired guarantee: $\E \norm{x - \opt{x}}_{\mb}^2 \leq 10 c_3\lambda_1(\bv{B}^{-1}).$ Our total sample cost in each iteration is, by Corollary \ref{streaming_solverOLD}, $O \left ( \frac{\nvar(\dist)}{\gap^2} \left [\frac{1}{44^2} + \frac{1}{c_3'} \right ] \right ).$ Since we double $c_3'$ each time, the cost corresponding to the $\frac{1}{c_3'}$ terms is dominated by the last iteration when we have $c_3' = O(c_3)$. So our overall sample cost is just:
\begin{align*}
O \left ( \frac{\nvar(\dist)}{\gap^2}\left [\frac{1}{c_3} + \log(1/c_3) \right ] \right ) = O \left ( \frac{\nvar(\dist)}{\gap^2c_3}\right ).
\end{align*}
\end{proof}

\subsection{Online Shifted-and-Inverted Power Method}
\label{sec:online:result}

We now apply the results in Section~\ref{sec:online:rayleigh_quotient} and Section~\ref{sec:online:system-solver} to the power method framework of Section \ref{framework} to give our main online result. 
We give an algorithm that quickly refines a coarse approximation to $v_1$ into a finer approximation. 

\begin{theorem}[Online Shifted-and-Inverted Power Method -- Warm Start]\label{warmstart_online_theorem}
Let $\bv{B} = \lambda \bv{I} - \bv{A}^\top \bv{A}$ for $\left ( 1+\frac{\gap}{150}\right) \lambda_1 \le \lambda \le \left(1+ \frac{\gap}{100} \right ) \lambda_1$ and let $x_0$ be some vector with  $G(x_0)\leq \frac{1}{\sqrt{10}}$. Running the inverted power method on $\bv{B}$ initialized with $x_0$, using the streaming SVRG solver described in Definition \ref{svrg_step_def} to approximately apply $\bv{B}^{-1}$ at each step, returns $x$ such that, for any parameter $\delta > \frac{1}{d^{10}}$ with probability $1-\delta$, $x^\top \bv{\Sigma} x \ge (1-\epsilon) \lambda_1$ using total sample count:
\begin{align*}
O \left (\frac{\nvar(\dist)}{\gap} \cdot \left (\frac{\log 1/\delta+\log\log 1/\epsilon}{\gap} + \frac{1}{\delta^2\epsilon} \right) \right )
\end{align*}
The amortized processing time per simple is simple $O(d)$.
\end{theorem}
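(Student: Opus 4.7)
The plan is to instantiate the approximate shifted-and-inverted power iteration of Theorem~\ref{thm:powermethod-perturb} in the streaming setting, using the streaming SVRG solver of Corollary~\ref{streaming_solver} for the approximate linear solves and the estimator of Theorem~\ref{online_rayleigh_estimation} for the accept/reject check. To meet the solver hypothesis $\mathbb{E}\|\solve{x} - \bv{B}^{-1}x\|_\bv{B} \le (c_1/1000)\sqrt{\lambda_1(\bv{B}^{-1})}$ required by Theorem~\ref{thm:powermethod-perturb}, I will invoke Corollary~\ref{streaming_solver} with $c_3 = \Theta(c_1^2)$, so that Jensen's inequality converts its stated second-moment bound into the required first-moment bound, at a cost of $O(\nvar(\dist)/(\gap^2 c_1^2))$ samples per call. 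In parallel, Theorem~\ref{online_rayleigh_estimation} estimates the Rayleigh quotient to the required additive accuracy $(\lambda-\lambda_1)/30 = \Theta(\gap\lambda_1)$ with failure probability $p$ using $O(\nvar(\dist)\log(1/p)/\gap^2)$ samples.

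The algorithm will run in two stages. Stage~1 is a geometric-reduction phase of $T_1 = O(\log(1/\delta) + \log\log(1/\epsilon))$ outer iterations with $c_1$ held at a small absolute constant, so each iteration costs $O(\nvar(\dist)/\gap^2)$ samples (the Rayleigh estimates are union-bounded at failure $\delta/T_1$, absorbing a $\log(T_1/\delta)$ factor); Theorem~\ref{thm:powermethod-perturb} yields $\mathbb{E}[G(\tilde x)] \le \tfrac{3}{25} G(x) + c_1/500$ together with the safety invariant $G(\tilde x) \le 1/\sqrt{10}$. Per-iteration Markov plus Chernoff amplification across iterations will drive $G$ below a target $\tau^\star = \Theta(\delta\sqrt{\epsilon/\gap})$ with probability $1-\delta/2$; the doubly-logarithmic iteration count comes from a bootstrapping schedule on $c_1$ which, once $G$ is small enough, yields an acceleration beyond the naive linear $3/25$ rate. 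Stage~2 is then a single high-accuracy iteration with $c_1 = \Theta(\delta\sqrt{\epsilon/\gap})$, using Corollary~\ref{streaming_solver} to pay $O(\nvar(\dist)/(\gap^2 c_1^2)) = O(\nvar(\dist)/(\gap\delta^2\epsilon))$ samples. Theorem~\ref{thm:powermethod-perturb} gives $\mathbb{E}[G(\tilde x)] = O(\delta\sqrt{\epsilon/\gap})$; Markov yields $G(\tilde x)^2 \le O(\epsilon/\gap)$ with probability $1-\delta/2$, and Lemma~\ref{lem:rayquot-potential} (using $\lambda-\lambda_1 = \Theta(\gap\lambda_1)$) converts this into $\tilde x^\top \mSigma \tilde x \ge (1-\epsilon)\lambda_1$. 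The amortized $O(d)$ per-sample work is immediate from the $O(d)$ cost of each SVRG step, each Rayleigh sample contribution, and each outer vector operation.

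The main obstacle I anticipate is the probabilistic bookkeeping in Stage~1: the expectation bound of Theorem~\ref{thm:powermethod-perturb} only gives constant-probability contraction via Markov, so amplifying to $1-\delta/2$ confidence without either inflating the iteration count past $O(\log(1/\delta) + \log\log(1/\epsilon))$ or paying an extra $\log(1/\delta)$ factor in the Stage~2 cost requires carefully interleaving per-iteration Markov with Chernoff amplification and a carefully chosen schedule for $c_1$ that exploits the additive structure of the recursion $\mathbb{E}[G(\tilde x)] \le \tfrac{3}{25} G(x) + c_1/500$. A secondary subtlety is that the accept/reject step depends on Rayleigh estimates that must each succeed; this is handled by setting their failure probability to $\delta/T_1$ and absorbing the resulting $\log(T_1/\delta)$ into Stage~1's per-iteration sample cost, which remains $O(\nvar(\dist)/\gap^2)$ up to this log factor that is swallowed into $\log(1/\delta) + \log\log(1/\epsilon)$.
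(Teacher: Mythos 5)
Your overall architecture is the paper's: approximate power iterations via Theorem~\ref{thm:powermethod-perturb}, with Corollary~\ref{streaming_solver} at $c_3 = \Theta(c_1^2)$ (converted by Jensen) supplying the solver hypothesis, Theorem~\ref{online_rayleigh_estimation} supplying the accept/reject check, a final solver accuracy $c_1 = \Theta(\delta\sqrt{\epsilon/\gap})$ whose $O(\nvar(\dist)/(\gap^2 c_1^2)) = O(\nvar(\dist)/(\gap\delta^2\epsilon))$ cost dominates, and a single Markov application plus Lemma~\ref{lem:rayquot-potential} at the end. However, your Stage~1 does not work as described, and it is where all of the difficulty lives. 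With $c_1$ held at a constant, the recursion $\E[G(\tilde x)] \le \tfrac{3}{25}G(x) + c_1/500$ has an additive fixed point of order $c_1$, so $G$ stalls at a constant and never approaches $\tau^\star = \Theta(\delta\sqrt{\epsilon/\gap})$. More fundamentally, the multiplicative contraction factor $\tfrac{3}{25}$ is fixed by Theorem~\ref{thm:powermethod-perturb} regardless of how you schedule $c_1$; there is no ``acceleration beyond the naive linear rate,'' so reaching $\tau^\star$ from a constant requires $\Omega(\log(1/(\delta\sqrt{\epsilon/\gap}))) = \Omega(\log(1/\delta) + \log(1/\epsilon) + \log(1/\gap))$ iterations, not $O(\log(1/\delta) + \log\log(1/\epsilon))$. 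Your single Stage~2 iteration cannot close the gap either: one application of the theorem maps a constant $G$ to at least $\tfrac{3}{25}G$ minus nothing, which is still a constant.

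The repair is the paper's schedule: take $c_1(i) = \tfrac{1}{\sqrt{10}}(1/5)^i$ for $i = 1,\dots,T$ with $T = \Theta(\log(\gap/(\delta\epsilon)))$; induction on $\E[G(x_i)] \le \tfrac{3}{25}\,\E[G(x_{i-1})] + \tfrac{4}{1000}c_1(i)$ gives $\E[G(x_i)] \le 5^{-i}/\sqrt{10}$, and a single Markov at the end suffices --- no per-iteration Markov or Chernoff amplification is needed, since the expectation recursion composes by the tower property and the invariant $G(x_i) \le 1/\sqrt{10}$ is maintained deterministically by the accept/reject step. The larger iteration count is harmless for the solver budget because the per-iteration cost $O(\nvar(\dist)/(\gap^2 c_1(i)^2))$ grows geometrically, so the sum is dominated by the final term, exactly as you anticipated for Stage~2.
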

We note that by instantiating Theorem \ref{warmstart_online_theorem}, with $\epsilon' = \epsilon\cdot \gap$, and applying Lemma \ref{lem:ray_to_evec} we can find $x$ such that $| v_1^\top x | \ge 1-\epsilon$ in time $O \left (\frac{\nvar(\dist)}{\gap^2 \cdot \epsilon \cdot \delta^2} \right ).
$
\begin{proof}
By Lemma \ref{lem:rayquot-potential} it suffices to have $G^2(x) = O(\frac{\epsilon}{\gap})$ so $G(x) = O(\sqrt{\epsilon/\gap})$ . In order to have this with probability $1-\delta$ it suffices to have $\expec{ G(x)} = O(\delta \sqrt{\epsilon/\gap})$. Since we start with $G(x_0)\leq \frac{1}{\sqrt{10}}$, we can achieve this
using $\log(\gap/(\delta\epsilon))$ iterations of the approximate power method of Theorem \ref{thm:powermethod-perturb}. In iteration $i$ we choose the error parameter for Theorem \ref{thm:powermethod-perturb} to be $c_1(i) = \frac{1}{\sqrt{10}}\cdot \left( \frac{1}{5}\right)^i$. In this way, we have:
\begin{align*}
\expec{G(x_i)} \le \frac{3}{25} G(x_{i-1}) +  \frac{4}{1000}\frac{1}{\sqrt{10}}\cdot \left( \frac{1}{5}\right)^i
\end{align*}
and by induction $\expec{G(x_i)} \le \frac{1}{5^i} \frac{1}{\sqrt{10}}$. We halt when $(\frac{1}{5})^i = O(\delta \sqrt{\epsilon/\gap})$ and hence $c_1(i) = O(\delta \sqrt{\epsilon/\gap})$.

In order to apply Theorem \ref{thm:powermethod-perturb} we need a subroutine $\rayquoth{x}$ that lets us approximate $\rayquot(x)$ to within an additive error $\frac{1}{30} (\lambda-\lambda_1) = O(\gap\lambda_1)$. Theorem \ref{online_rayleigh_estimation} gives us such a routine, requiring $O \left (\frac{\nvar(\dist)\log (1/\delta\log(\gap/\delta\epsilon))}{\gap^2} \right)$ samples to succeed with probability $1- O\left(\frac{\delta}{\log(\gap/\delta\epsilon)}\right)$. Union bounding, the estimation succeeds in all rounds of the power method with probability at least $1-O(\delta)$.

Our cost for each linear system solve is given by Corollary \ref{streaming_solver} with $c_3 = \Theta(c_1(i)^2)$
, is:
\begin{align*}
O \left ( \frac{\nvar(\dist)}{\gap^2c_1(i)^2} \right ).
\end{align*}

Now, $c_1(i)$ multiplies by a constant factor with each iteration. The cost over all $O(\log(\gap/d\epsilon)$ iterations is just a truncated geometric series and is proportional to cost in the last iteration, when $c_3 = \Theta\left(\frac{\delta^2 \epsilon}{\gap}\right)$. So the total cost for solving the linear systems is 
\begin{align*}
O \left ( \frac{\nvar(\dist)}{\gap \delta^2\epsilon} \right ).
\end{align*}

Adding this to the Rayleigh quotient estimation cost give us total sample count:
\begin{align*}
O \left (\frac{\nvar(\dist)}{\gap^2} \cdot \log (1/\delta\log(1/\epsilon)) + \frac{\nvar(\dist)}{\gap\delta^2\epsilon} \right ) = O \left (\frac{\nvar(\dist)}{\gap} \cdot \left (\frac{\log 1/\delta+\log\log 1/\epsilon}{\gap} + \frac{1}{\delta^2\epsilon} \right) \right ) .
\end{align*}
\end{proof}

\section{Parameter Estimation for Offline Eigenvector Computation}\label{parameter_free}


In Section \ref{sec:offline}, in order to invoke Theorems \ref{thm:powermethod-perturb} and \ref{thm:init-offline} we assumed knowledge of some $\lambda$ with $(1 + c_1 \cdot \gap)\lambda_1 \le \lambda \le (1 + c_2 \cdot \gap)\lambda_1$ for some small constant $c_1$ and $c_2$ ($\hat{\lambda}_1$ in Theorem \ref{thm:powermethod-perturb} could also be obtained in this form). Here we show how to estimate this parameter using Algorithm \ref{algo:eigestimate}, completing the proof of our offline eigenvector estimation algorithm (Theorem \ref{main_offline_theorem}).

In this section, for simplicity we assume that we have oracle access to compute $\mb_{\lambda}^{-1}w$ for any given $w$, and any $\lambda > \lambda_1$, but the results here can be extended to the case where we can compute $\mb_{\lambda}^{-1}w$ only approximately. We will use a result of~\cite{Musco2015} that gives gap free bounds for computing eigenvalues using the power method. The following is a specialization of Theorem 1 from~\cite{Musco2015}:
\begin{theorem}\label{thm:musco}
	For any $\epsilon > 0$, any matrix $\mM \in \R^{d\times d}$ with eigenvalues $\lambda_1,...,\lambda_d$, and $k \leq d$, let $\bv{W} \in \mathbb{R}^{d \times k}$ be a matrix with entries drawn independently from $\mathcal{N}(0,1)$. Let $eigEstimate (\bv{Y})$ be a function returning for each $i$, $\tilde \lambda_i = \tilde v_i^\top \bv{M} \tilde v_i$ where $\tilde v_i$ is the $i^{th}$ largest eigenvector of  $\bv{Y}$. Then setting $[\tilde \lambda_1,...,\tilde \lambda_k ] = eigEstimate\left (\bv{M}^t \bv{W} \right )$, for some fixed constant $c$ and $t = c\alpha \log d$ for any $\alpha > 1$, we have for all $i$:
	\begin{align*}
	|\tilde \lambda_i - \lambda_i|  \le \frac{1}{\alpha}\lambda_{k+1}
	\end{align*}
\end{theorem}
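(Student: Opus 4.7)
The plan is to prove this via the standard block power iteration analysis combined with a Rayleigh--Ritz / Cauchy interlacing argument, exploiting that $t = c\alpha \log d$ amplifies any eigenvalue separated from $\lambda_{k+1}$ beyond all polynomial factors in $d$.

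First I would diagonalize $\bv{M} = \bv{V}\bv{\Lambda}\bv{V}^\top$ and work in the eigenbasis. Setting $\bv{Z} := \bv{V}^\top \bv{W}$, rotational invariance of the Gaussian gives that $\bv{Z}$ again has i.i.d.\ $\mathcal{N}(0,1)$ entries. Partition $\bv{Z} = [\bv{Z}_1^\top \;\; \bv{Z}_2^\top]^\top$ with $\bv{Z}_1 \in \R^{k \times k}$ the block corresponding to the top $k$ eigenvalues. Then
\[
\bv{Y} \;=\; \bv{M}^t \bv{W} \;=\; \bv{V}_1 \bv{\Lambda}_1^t \bv{Z}_1 \;+\; \bv{V}_2 \bv{\Lambda}_2^t \bv{Z}_2.
\]
Standard Gaussian matrix concentration yields $\sigma_{\min}(\bv{Z}_1) \geq 1/\poly(d)$ and $\|\bv{Z}_2\|_2 \leq O(\sqrt d)$ with probability $1 - 1/\poly(d)$; these polynomial factors will be dominated by $(1 + \tfrac{1}{\alpha})^t \geq \poly(d)$ once $c$ is chosen sufficiently large.

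Next I would split indices by the separation set $S := \{i \leq k : \lambda_i \geq (1+\tfrac{1}{2\alpha})\lambda_{k+1}\}$. Letting $\bv{Q} \in \R^{d\times k}$ be an orthonormal basis of the column space of $\bv{Y}$ and $\tilde\lambda_1 \geq \cdots \geq \tilde\lambda_k$ the eigenvalues of $\bv{Q}^\top \bv{M} \bv{Q}$, the Cauchy interlacing theorem immediately gives $\tilde\lambda_i \leq \lambda_i$ for all $i \leq k$, which is one side of the bound. For the other direction, I would invoke the standard principal angle estimate: the amplification from the previous step forces the projection of $\mathrm{range}(\bv{Y})$ onto $\mathrm{span}\{v_i : i \in S\}$ to have principal angles of order $1/\poly(d) \cdot (1+\tfrac{1}{2\alpha})^{-t}$. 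Consequently there exist orthonormal $\tilde u_i \in \mathrm{range}(\bv{Y})$, $i \in S$, with $\tilde u_i^\top \bv{M} \tilde u_i \geq \lambda_i - \lambda_{k+1}/\alpha$, and the min--max principle applied to $\bv{Q}^\top \bv{M} \bv{Q}$ yields $\tilde \lambda_i \geq \lambda_i - \lambda_{k+1}/\alpha$ for $i \in S$.

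The main obstacle is the indices $i \notin S$, for which power iteration provides no amplification of $\lambda_i$ over $\lambda_{k+1}$. However, for such $i$ we have $\lambda_i \in [\lambda_{k+1}, (1+\tfrac{1}{2\alpha})\lambda_{k+1}]$, so it suffices to show $\tilde\lambda_i \geq \lambda_{k+1}(1 - \tfrac{1}{2\alpha})$. I would obtain this by extending the $|S|$-dimensional ``good'' subspace of $\mathrm{range}(\bv{Y})$ by a $(k-|S|)$-dimensional subspace approximating $\mathrm{span}\{v_j : j \leq k,\, j \notin S\}$ (whose eigenvalues satisfy $\lambda_j \geq \lambda_{k+1} \geq \lambda_{k+1}(1 - \tfrac{1}{2\alpha})$, and which still receive amplification $\lambda_j^t / \lambda_{k+1}^t \geq 1$ relative to the tail). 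A Courant--Fischer / min--max argument on this enlarged subspace lower-bounds the remaining Ritz values by $\lambda_{k+1}(1 - \tfrac{1}{2\alpha})$. Combining the two regimes yields $|\tilde\lambda_i - \lambda_i| \leq \lambda_{k+1}/\alpha$ for all $i \leq k$, while the failure probabilities union-bound to $1/\poly(d)$, completing the proof.
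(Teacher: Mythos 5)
The paper does not actually prove this statement: it is imported as a black box, a ``specialization of Theorem 1 from \cite{Musco2015}'', so there is no in-paper argument to compare against. Your sketch reconstructs the standard simultaneous-iteration analysis that underlies the cited result, and its architecture is the right one: rotational invariance to write $\bv{Y} = \bv{V}_1\bv{\Lambda}_1^t\bv{Z}_1 + \bv{V}_2\bv{\Lambda}_2^t\bv{Z}_2$ with $\bv{Z}$ i.i.d.\ Gaussian, the bounds $\sigma_{\min}(\bv{Z}_1)\ge 1/\poly(d)$ and $\norm{\bv{Z}_2}_2 = O(\sqrt d)$, Cauchy interlacing for $\tilde\lambda_i\le\lambda_i$, gap-dependent amplification for the indices separated from $\lambda_{k+1}$ by a $(1+\tfrac{1}{2\alpha})$ factor, and a separate argument showing no Ritz value can fall below $\lambda_{k+1}(1-\tfrac{1}{2\alpha})$, which handles the unseparated indices. (You are implicitly using that $\bv{M}$ is PSD, as it is in the paper's applications, both for interlacing in this form and so that tail contamination cannot drive Rayleigh quotients down arbitrarily.)

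One step is stated in a form that does not literally work. Producing orthonormal $\tilde u_i\in\mathrm{range}(\bv{Y})$ with $\tilde u_i^\top\bv{M}\tilde u_i\ge\lambda_i-\lambda_{k+1}/\alpha$ \emph{individually} does not, via min--max, lower bound the Ritz values: Courant--Fischer needs $\min_{u\in T_i,\norm{u}_2=1}u^\top\bv{M}u$ over the whole $i$-dimensional test subspace, and an orthonormal set with good individual quotients can span a subspace containing unit vectors with much smaller quotient (e.g.\ $u_1=(v_1+v_3)/\sqrt2$, $u_2=(v_1-v_3)/\sqrt2$ both have quotient $(\lambda_1+\lambda_3)/2$, yet their span contains $v_3$). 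The fix is standard and already latent in your principal-angle step: take $T_i=\mathrm{span}\{w_1,\dots,w_i\}$ with $w_j = \bv{Y}\bv{Z}_1^{-1}\bv{\Lambda}_1^{-t}e_j = v_j + \bv{E}e_j$, and use that the perturbation $\bv{E}e_j=\bv{V}_2\bv{\Lambda}_2^t\bv{Z}_2\bv{Z}_1^{-1}\bv{\Lambda}_1^{-t}e_j$ lies entirely in $\mathrm{span}\{v_{k+1},\dots,v_d\}$ with norm at most $\poly(d)\,(\lambda_{k+1}/\lambda_j)^t\le \eta$. Then every unit $u\in T_i$ decomposes orthogonally as $\hat u+e$ with $\hat u\in\mathrm{span}\{v_1,\dots,v_i\}$ and $\norm{e}_2^2\le O(\eta^2)$, so $u^\top\bv{M}u\ge \lambda_i(1-O(\eta^2))$; this exploits the orthogonality of the error to avoid the lossy cross term of order $\eta\lambda_1$ that a generic perturbation bound would give, and is what actually delivers the additive $\lambda_{k+1}/\alpha$ error. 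The same uniform viewpoint ($\min_{u\in\mathrm{range}(\bv{Y})}u^\top\bv{M}u\ge\lambda_{k+1}(1-1/\alpha)$, since mass on directions with $\lambda_j<(1-\tfrac{1}{2\alpha})\lambda_{k+1}$ is damped by $d^{-\Omega(c)}$ relative to the top-$k$ block) handles all unseparated indices at once. With that correction your outline is a valid proof in the regime the paper uses ($\alpha$ a constant; for $\alpha$ superpolynomial in $d$ the $\poly(d)$ slack would not be absorbed, but that case is irrelevant here).
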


\begin{algorithm}[t]
	\caption{Estimating the eigenvalue and the eigengap}
	\begin{algorithmic}[1]
		\renewcommand{\algorithmicrequire}{\textbf{Input: }}
		\renewcommand{\algorithmicensure}{\textbf{Output: }}
		\REQUIRE $\bv{A}\in\mathbb{R}^{n\times d},\;\alpha$
		\STATE $\left[w_{1},w_{2}\right]\leftarrow\mathcal{N}\left(0,I^{d\times d}\right)$
		\STATE $t\leftarrow O\left(\alpha\log d\right)$
		\STATE $\left[\lamtilij 01,\lamtilij 02\right]\leftarrow eigEstimate\left(\left(\bv{A}^{T}\bv{A}\right)^{t}w\right)$
		\STATE $\lambari 0\leftarrow(1+\frac{1}{2})\lamtilij 01$
		\STATE $i\leftarrow0$
		\WHILE{$\lambari i-\lamtilij i1<\frac{1}{10}\left(\lambari i-\lamtilij i2\right)$}
		\STATE $i\leftarrow i+1$
		\STATE $\left[w_{1},w_{2}\right]\leftarrow\mathcal{N}\left(0,I^{d\times d}\right)$
		\STATE $\left[\lamhatij i1,\lamhatij i2\right]\leftarrow eigEstimate\left(\left(\lambari{i-1}\bv{I}-\bv{A}^{T}\bv{A}\right)^{-t}w\right)$
		\STATE $\left[\lamtilij i1,\lamtilij i2\right]\leftarrow\left[\lambari{i-1}-\frac{1}{\lamhatij i1},\lambari{i-1}-\frac{1}{\lamhatij i2}\right]$
		\STATE $\lambari i\leftarrow\frac{1}{2}\left(\lamtilij i1+\lambari{i-1}\right)$
		\ENDWHILE
		\ENSURE $\lambda$
	\end{algorithmic}
	\label{algo:eigestimate}
\end{algorithm}

Throughout the proof, we assume $\alpha$ is picked to be some large constant so that $\alpha>100$, then Theorem \ref{thm:musco} implies:
\begin{lemma}
	\label{lem:topeig1} The iterates
	of Algorithm \ref{algo:eigestimate} satisfy:
	\begin{align*}
	0\leq\lamj 1-\lamtilij 01\leq\frac{1}{\alpha}\lamj 1\;\;\mbox{and}\;\; & \frac{1}{2}\left(1-\frac{3}{\alpha}\right)\lamj 1\leq\lambari 0-\lamj 1\leq\frac{1}{2}\lamj 1,\;\;\mbox{and,}
	\end{align*}
	\begin{align*}
	0\leq\lamj 1-\lamtilij i1\leq\frac{1}{\alpha-1}\left(\lambari{i-1}-\lamj 1\right)\;\;\mbox{and}\;\; & \frac{1}{2}\left(1-\frac{1}{\alpha-1}\right)\left(\lambari{i-1}-\lamj 1\right)\leq\lambari i-\lamj 1\leq\frac{1}{2}\left(\lambari{i-1}-\lamj 1\right).
	\end{align*}
\end{lemma}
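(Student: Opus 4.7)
The plan is to prove the initial bound and the inductive bound by direct application of Theorem~\ref{thm:musco}, combined with the elementary observation that since each $\tilde \lambda_i$ returned by $eigEstimate$ is a Rayleigh quotient $\tilde v_i^\top \bv{M} \tilde v_i$, we automatically have $\tilde \lambda_1 \leq \lambda_1(\bv{M})$ (no Courant--Fischer gymnastics needed for the top eigenvalue). This one-sided ``free'' bound is what drives the nonnegativity of $\lambda_1 - \tilde \lambda_1^{(i)}$ in both parts of the lemma.

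For the base case ($i=0$), I would invoke Theorem~\ref{thm:musco} with $\bv{M} = \bv{A}^\top \bv{A}$ and $k = 2$. This yields $|\tilde \lambda_1^{(0)} - \lambda_1| \leq \tfrac{1}{\alpha} \lambda_3 \leq \tfrac{1}{\alpha} \lambda_1$, and combining with the Rayleigh-quotient upper bound gives $0 \leq \lambda_1 - \tilde \lambda_1^{(0)} \leq \tfrac{1}{\alpha}\lambda_1$. Plugging this into $\bar\lambda^{(0)} = \tfrac{3}{2} \tilde \lambda_1^{(0)}$ and rearranging $\bar\lambda^{(0)} - \lambda_1 = \tfrac{3}{2}(\tilde \lambda_1^{(0)} - \lambda_1) + \tfrac{1}{2}\lambda_1$ produces exactly the claimed two-sided bound on $\bar\lambda^{(0)} - \lambda_1$.

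For the inductive step, the key observation is that since $\bar\lambda^{(i-1)} > \lambda_1$ (guaranteed inductively from the lower bound $\bar\lambda^{(i-1)} - \lambda_1 \geq \tfrac{1}{2}(1 - \tfrac{1}{\alpha-1})(\bar\lambda^{(i-2)} - \lambda_1) > 0$, with base case $\bar\lambda^{(0)} - \lambda_1 \geq \tfrac{1}{2}(1-\tfrac{3}{\alpha})\lambda_1 > 0$), the matrix $\bv{M} = (\bar\lambda^{(i-1)} \bv{I} - \bv{A}^\top \bv{A})^{-1}$ is PSD with eigenvalues $\tfrac{1}{\bar\lambda^{(i-1)} - \lambda_j}$ listed in decreasing order of $j$. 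Applying Theorem~\ref{thm:musco} to $\bv{M}$ with $k=2$ together with the Rayleigh-quotient upper bound $\hat \lambda_1^{(i)} \leq \tfrac{1}{\bar\lambda^{(i-1)} - \lambda_1}$ yields
\[
\left(1 - \tfrac{1}{\alpha}\right) \cdot \tfrac{1}{\bar\lambda^{(i-1)} - \lambda_1}
\;\leq\; \hat\lambda_1^{(i)}
\;\leq\; \tfrac{1}{\bar\lambda^{(i-1)} - \lambda_1},
\]
where the lower bound uses $\tfrac{1}{\bar\lambda^{(i-1)} - \lambda_3} \leq \tfrac{1}{\bar\lambda^{(i-1)} - \lambda_1}$. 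Inverting and substituting into $\tilde \lambda_1^{(i)} = \bar\lambda^{(i-1)} - 1/\hat\lambda_1^{(i)}$ gives $0 \leq \lambda_1 - \tilde\lambda_1^{(i)} \leq \tfrac{1}{\alpha - 1}(\bar\lambda^{(i-1)} - \lambda_1)$, which is the first claimed bound.

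Finally, the bound on $\bar\lambda^{(i)} - \lambda_1$ follows by writing $\bar\lambda^{(i)} - \lambda_1 = \tfrac{1}{2}(\tilde\lambda_1^{(i)} - \lambda_1) + \tfrac{1}{2}(\bar\lambda^{(i-1)} - \lambda_1)$ and using the two-sided bound on $\tilde\lambda_1^{(i)} - \lambda_1$ just obtained. I do not expect any genuine obstacle here; the only thing to be careful about is making sure $\alpha$ is large enough ($\alpha > 100$ as assumed) so that all expressions like $1/(\alpha-1)$ are well-behaved and $\bar\lambda^{(i)}$ remains strictly above $\lambda_1$ throughout, preserving the invariant needed to invoke Theorem~\ref{thm:musco} on the shifted-and-inverted matrix in subsequent iterations.
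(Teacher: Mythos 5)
Your proof is correct and follows essentially the same route as the paper: invoke Theorem~\ref{thm:musco} on $\bv{A}^\top\bv{A}$ for the base case and on $\left(\lambari{i-1}\bv{I}-\bv{A}^\top\bv{A}\right)^{-1}$ for the inductive step, then propagate the resulting two-sided bounds through the update formulas for $\lamtilij i1$ and $\lambari i$. Your explicit appeals to the Rayleigh-quotient upper bound $\lamhatij i1 \le \left(\lambari{i-1}-\lamj 1\right)^{-1}$ and to the inductive invariant $\lambari{i-1} > \lamj 1$ (needed for the shifted matrix to be positive definite) merely make precise two steps the paper leaves implicit.
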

\begin{proof}
	The proof can be decomposed into two parts:
	
	\textbf{Part I (Lines 3-4):} Theorem \ref{thm:musco} tells us that $\lamtilij 01\geq\left(1-\frac{1}{\alpha}\right)\lamj 1$.
	This means that we have
	\begin{align*}
	0\leq\lamj 1-\lamtilij 01\leq\frac{1}{\alpha}\lamj 1\;\;\mbox{and}\;\; & \frac{1}{2}\left(1-\frac{3}{\alpha}\right)\lamj 1\leq\lambari 0-\lamj 1\leq\frac{1}{2}\lamj 1.
	\end{align*}

	\textbf{Part II (Lines 5-6):} Consider now iteration $i$. We now
	apply Theorem \ref{thm:musco} to the matrix $\left(\lambari{i-1}\bv{I}-\bv{A}^{T}\bv{A}\right)^{-1}$.
	The top eigenvalue of this matrix is $\left(\lambari{i-1}-\lamj 1\right)^{-1}$.
	This means that we have $\left(1-\frac{1}{\alpha}\right)\left(\lambari{i-1}-\lamj 1\right)^{-1}\leq\lamhatij i1\leq\left(\lambari{i-1}-\lamj 1\right)^{-1}$,
	and hence we have, 
	\begin{align*}
	0\leq\lamj 1-\lamtilij i1\leq\frac{1}{\alpha-1}\left(\lambari{i-1}-\lamj 1\right)\;\;\mbox{and}\;\; & \frac{1}{2}\left(1-\frac{1}{\alpha-1}\right)\left(\lambari{i-1}-\lamj 1\right)\leq\lambari i-\lamj 1\leq\frac{1}{2}\left(\lambari{i-1}-\lamj 1\right).
	\end{align*}

	This proves the lemma.\end{proof}
\begin{lemma}
	\label{lem:secondeig}Recall we denote $\lamj 2\defeq\lamj 2\left(\bv{A}^{T}\bv{A}\right)$
	and $gap\defeq\frac{\lamj 1-\lamj 2}{\lamj 1}$. The iterates
	of Algorithm \ref{algo:eigestimate} satisfy $\abs{\lamj 2-\lamtilij i2}\leq\frac{1}{\alpha-1}\left(\lambari{i-1}-\lamj 2\right)$,
	and $\lambari i-\lamtilij i2\geq\frac{gap \lamj 1}{4}$.\end{lemma}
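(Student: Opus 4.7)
The plan is to mirror the structure of Part II in the proof of Lemma~\ref{lem:topeig1}, applying Theorem~\ref{thm:musco} with $k=2$ to the shifted and inverted matrix $\mM_i \defeq \left(\lambari{i-1}\bv{I}-\bv{A}^{\top}\bv{A}\right)^{-1}$ (whose top three eigenvalues are $(\lambari{i-1}-\lamj j)^{-1}$ for $j=1,2,3$), and then tracking the error carefully through the affine change of variables $\lamtilij i2 = \lambari{i-1} - 1/\lamhatij i2$.

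For the first claim, Theorem~\ref{thm:musco} with $k=2$ directly gives
\[
\left|\lamhatij i2 - \frac{1}{\lambari{i-1}-\lamj 2}\right|
\leq \frac{1}{\alpha}\cdot\frac{1}{\lambari{i-1}-\lamj 3}
\leq \frac{1}{\alpha}\cdot\frac{1}{\lambari{i-1}-\lamj 2},
\]
so $\lamhatij i2 \in [(1-1/\alpha),(1+1/\alpha)]\cdot(\lambari{i-1}-\lamj 2)^{-1}$. Inverting and subtracting from $\lambari{i-1}$ shows
\[
\lamtilij i2 \in \left[\lamj 2 - \frac{\lambari{i-1}-\lamj 2}{\alpha-1},\; \lamj 2 + \frac{\lambari{i-1}-\lamj 2}{\alpha+1}\right],
\]
which immediately yields $|\lamj 2 - \lamtilij i2| \leq (\lambari{i-1}-\lamj 2)/(\alpha-1)$.

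For the second claim I would combine the first claim with the update rule $\lambari i = \frac{1}{2}(\lamtilij i1 + \lambari{i-1})$ and the lower bound on $\lamtilij i1$ from Lemma~\ref{lem:topeig1}. Writing $\lambari i - \lamj 2 = \frac{1}{2}(\lamtilij i1 - \lamj 2) + \frac{1}{2}(\lambari{i-1}-\lamj 2)$ and using $\lamtilij i1 \geq \lamj 1 - (\lambari{i-1}-\lamj 1)/(\alpha-1)$ together with $\lamj 1 - \lamj 2 = \gap\cdot\lamj 1$, one obtains
\[
\lambari i - \lamtilij i2 \;\geq\; \tfrac{1}{2}\gap\cdot\lamj 1 + (\lambari{i-1}-\lamj 2)\left[\tfrac{1}{2}-\tfrac{1}{\alpha-1}\right] - \tfrac{1}{2(\alpha-1)}(\lambari{i-1}-\lamj 1).
\]
Since $\lambari{i-1}-\lamj 2 = (\lambari{i-1}-\lamj 1)+\gap\cdot\lamj 1$, the middle term absorbs the last term once $\alpha\geq 100$, leaving $\lambari i - \lamtilij i2 \geq \gap\cdot\lamj 1/2 \geq \gap\cdot\lamj 1/4$.

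The main obstacle here is the second claim. A naive argument using $\lambari i \geq \lamj 1$ gives only $\lambari i - \lamtilij i2 \geq \gap\cdot\lamj 1 - (\lambari{i-1}-\lamj 2)/(\alpha-1)$, and since $\lambari{i-1}-\lamj 2$ can be as large as $\Theta(\lamj 1)$ (for instance at $i=1$, $\lambari 0 - \lamj 2$ is at least $\lamj 1/2$), this bound is useless when $\gap$ is much smaller than $1/(\alpha-1)$. The fix, carried out above, is to exploit the cancellation between the $(\lambari{i-1}-\lamj 2)/2$ term coming from the averaging step that defines $\lambari i$ and the $(\lambari{i-1}-\lamj 2)/(\alpha-1)$ error term from Musco--Musco, using the positivity that is created once $\alpha$ is a sufficiently large absolute constant.
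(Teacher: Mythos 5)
Your proposal is correct and follows essentially the same route as the paper: the first claim is obtained by applying Theorem~\ref{thm:musco} to $\left(\lambari{i-1}\bv{I}-\bv{A}^{T}\bv{A}\right)^{-1}$, noting the error is relative to its third eigenvalue (hence multiplicative in $\left(\lambari{i-1}-\lamj 2\right)^{-1}$), and inverting; the second claim rests on the same cancellation, since your expansion via $\lambari i=\frac{1}{2}\left(\lamtilij i1+\lambari{i-1}\right)$ is just an unfolding of the bound $\lambari i-\lamj 1\geq\frac{1}{2}\left(1-\frac{1}{\alpha-1}\right)\left(\lambari{i-1}-\lamj 1\right)$ that the paper imports from Lemma~\ref{lem:topeig1}. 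The only difference is cosmetic bookkeeping (your final constant is $\frac{1}{2}$ rather than the paper's roughly $1-\frac{1}{\alpha-1}$, both of which dominate the stated $\frac{1}{4}$).
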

\begin{proof}
	Since $\left(\lambari{i-1}-\lamj 2\right)^{-1}$ is the second eigenvalue
	of the matrix $\left(\lambari{i-1}\bv{I}-\bv{A}^{T}\bv{A}\right)^{-1}$, Theorem \ref{thm:musco} 
	tells us that
	\[
	\left(1-\frac{1}{\alpha}\right)\left(\lambari{i-1}-\lamj 2\right)^{-1}\leq\lamhatij i2\leq\left(1+\frac{1}{\alpha}\right)\left(\lambari{i-1}-\lamj 2\right)^{-1}.
	\]

	This immediately yields the first claim. For the second claim, we
	notice that
	\begin{align*}
	\lambari i-\lamtilij i2 
	&=\lambari i-\lamj 2+\lamj 2-\lamtilij i2\\
	& \stackrel{(\zeta_{1})}{\geq}\lambari i-\lamj 2-\frac{1}{\alpha-1}\left(\lambari{i-1}-\lamj 2\right) \\
	& =\lambari i-\lamj 1-\frac{1}{\alpha-1}\left(\lambari{i-1}-\lamj 1\right)+\left(1-\frac{1}{\alpha-1}\right)\left(\lamj 1-\lamj 2\right) \\
	& \stackrel{\left(\zeta_{2}\right)}{\geq}\frac{1}{2}\left(1-\frac{3}{\alpha-1}\right)\left(\lambari{i-1}-\lamj 1\right)+\left(1-\frac{1}{\alpha-1}\right)\left(\lamj 1-\lamj 2\right)\geq\frac{gap\lamj 1}{4},
	\end{align*}

	where $\left(\zeta_{1}\right)$ follows from the first claim of this lemma,
	and $\left(\zeta_{2}\right)$ follows from Lemma \ref{lem:topeig1}.
\end{proof}

We now state and prove the final result:
\begin{theorem}\label{thm:parafree}
Suppose $\alpha>100$, and after $T$ iterations, Algorithm \ref{algo:eigestimate} exits
(i.e. $T$ is the final value of iterates $i$ in Algorithm), then we have
$T \le \ceil{\log\frac{10}{gap}}+1$, and we will have:
\begin{equation*}
\left(1+\frac{gap}{120}\right)\lambda_1 
\le \lambari T 
\le \left(1+\frac{gap}{8}\right)\lambda_1
\end{equation*}
\end{theorem}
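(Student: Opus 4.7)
The plan is to monitor the overshoot $\epsilon_i \defeq \lambari{i} - \lamj{1}$, show it decays geometrically, and reduce the while-loop comparison to a clean, monotone inequality between $\epsilon_i$ and $\gap\cdot\lamj{1}$. Once that reduction is in place, both conclusions of the theorem fall out mechanically.

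First I would pin down the decay of $\epsilon_i$ together with an auxiliary ``reverse'' relation. Lemma~\ref{lem:topeig1} gives $\epsilon_0 \le \lamj{1}/2$ and $\tfrac{1}{2}\bigl(1-\tfrac{1}{\alpha-1}\bigr)\epsilon_{i-1} \le \epsilon_i \le \tfrac{1}{2}\epsilon_{i-1}$ for $i \ge 1$. Iterating the upper side yields $\epsilon_i \le \lamj{1}/2^{i+1}$, while the lower side rearranges to $\epsilon_{i-1} \le \tfrac{2}{1-1/(\alpha-1)}\epsilon_i \le 2.03\,\epsilon_i$ for $\alpha > 100$. I will use the latter to absorb approximation errors inherited from step $i-1$ into the scale of step $i$.

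Next I would rewrite both sides of the while comparison in terms of $\epsilon_i$ and $\gap\cdot\lamj{1}$ alone. Combining Lemma~\ref{lem:topeig1} ($0 \le \lamj{1}-\lamtilij{i}{1}\le \tfrac{1}{\alpha-1}\epsilon_{i-1}$) and Lemma~\ref{lem:secondeig} ($|\lamj{2}-\lamtilij{i}{2}|\le \tfrac{1}{\alpha-1}(\epsilon_{i-1}+\gap\cdot\lamj{1})$) with the reverse relation, I obtain
\[
\lambari{i}-\lamtilij{i}{1} = \bigl(1\pm O(1/\alpha)\bigr)\,\epsilon_i, \qquad \lambari{i}-\lamtilij{i}{2} = \bigl(1\pm O(1/\alpha)\bigr)\,\bigl(\epsilon_i+\gap\cdot\lamj{1}\bigr).
\]
Because $\alpha > 100$ the $O(1/\alpha)$ slacks are tiny and can be absorbed into the target constants $1/8$ and $1/120$. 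Modulo that slack, the loop inequality reduces to the monotone comparison of $\epsilon_i$ against $\tfrac{1}{10}(\epsilon_i+\gap\cdot\lamj{1})$, i.e., against $\gap\cdot\lamj{1}/9$. Hence termination is controlled exactly by when $\epsilon_i$ first crosses a threshold of order $\gap\cdot\lamj{1}$.

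With this reduction, both conclusions are essentially free. For the iteration count, since $\epsilon_i \le \lamj{1}/2^{i+1}$, the threshold is reached as soon as $2^{i+1} > 10/\gap$, giving $T \le \lceil \log(10/\gap)\rceil + 1$. For $\lambari{T}\le(1+\gap/8)\lamj{1}$, I plug the tight two-sided bounds into the termination inequality at $i=T$ (using the upper estimate for $\lambari{T}-\lamtilij{T}{1}$ and the lower estimate for $\lambari{T}-\lamtilij{T}{2}$), and solve for $\epsilon_T \le \gap\cdot\lamj{1}/8$ after absorbing the $O(1/\alpha)$ slack. For $\lambari{T}\ge(1+\gap/120)\lamj{1}$, I use that termination did not fire at $i=T-1$ (flipping which two-sided bound is used) to force $\epsilon_{T-1}$ above a constant multiple of $\gap\cdot\lamj{1}$, and then the geometric contraction $\epsilon_T \ge \tfrac{1}{2}(1-\tfrac{1}{\alpha-1})\epsilon_{T-1}$ delivers $\epsilon_T \ge \gap\cdot\lamj{1}/120$ with room to spare.

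The main technical nuisance will be bookkeeping the $O(1/\alpha)$ approximation errors tightly enough to land exactly on the constants $1/8$ and $1/120$ in the statement; the structural reduction itself is short once the two cleaned-up side estimates are in place.
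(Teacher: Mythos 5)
Your proposal is correct and follows essentially the same route as the paper's proof: both use Lemmas~\ref{lem:topeig1} and~\ref{lem:secondeig} to control the overshoot $\lambari{i}-\lamj{1}$, derive $T$ from its geometric decay, get the upper bound from the exit inequality at iteration $T$, and get the lower bound from the failure of the exit test at $T-1$ followed by one contraction step. The only (cosmetic) difference is bookkeeping: you fold the $O(1/\alpha)$ approximation errors into uniform multiplicative $(1\pm O(1/\alpha))$ factors via the reverse contraction $\epsilon_{i-1}\le 2.03\,\epsilon_i$, whereas the paper carries the $\frac{1}{\alpha-1}(\lambari{i-1}-\cdot)$ terms explicitly.
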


\begin{proof}
Let $\overline{i}=\ceil{\log\frac{10}{gap}}$, suppose the algorithm has not exited yet after $\overline{i}$
iterations, then since $\lambari i-\lamj 1$ decays geometrically, we have $\lambari{\overline{i}}-\lamj 1\leq\frac{gap\lamj 1}{10}$.
Therefore, Lemmas \ref{lem:topeig1} and \ref{lem:secondeig} imply
that $\lambari{\overline{i}+1}-\lamtilij{\overline{i}+1}1\leq\left(\frac{1}{2}+\frac{1}{\alpha-1}\right)\left(\lambari{\overline{i}}-\lamj 1\right)\leq\frac{gap\lamj 1}{15}$,
and 
\begin{align*}
\lambari{\overline{i}+1}-\lamtilij{\overline{i}+1}2
&\geq\lambari{\overline{i}+1}-\lamj 2-\abs{\lamj 2-\lamtilij{\overline{i}+1}2}\geq\lamj 1-\lamj 2-\frac{1}{\alpha-1}\left(\lambari{\overline{i}}-\lamj 2\right)\\
&=gap\lamj 1-\frac{1}{\alpha-1}\left(\lambari{\overline{i}}-\lamj 1+\lamj 1-\lamj 2\right)\geq\frac{3}{4}gap\lamj 1
\end{align*}
This means that the exit condition on Line $6$ must be triggered in $\overline{i}+1$ iteration, proving
the first part of the lemma.

For upper bound, by Lemmas \ref{lem:topeig1}, \ref{lem:secondeig} and exit condition we know:
\begin{align*}
\lambari T- \lambda_1 &\le \lambari T-\lamtilij T1
\le \frac{1}{10} (\lambari T-\lamtilij T2)
\le \frac{1}{10} \left(\lambari T - \lambda_2 + \abs{\lamj 2-\lamtilij T2}\right)\\
&\leq \frac{1}{10} \left(\lambari T - \lambda_2 + \frac{1}{\alpha-1}(\lambari{T-1}-\lamj 2)\right)\\
&= \frac{1}{10} \left( \frac{\alpha}{\alpha-1} gap \lambda_1 + (\lambari T  - \lambda_1) +  \frac{1}{\alpha-1}\left(\lambari{T-1}-\lamj 1\right) \right)\\
&\le \frac{1}{10} \left( \frac{\alpha}{\alpha-1} gap \lambda_1 + \frac{\alpha}{\alpha-2}\left(\lambari{T}-\lamj 1\right) \right)
\end{align*}
Since $\alpha > 100$, this directly implies $\lambari{T}-\lamj 1 \le \frac{gap}{8}\lambda_1$.

For lower bound, since as long as the Algorithm \ref{algo:eigestimate} does not exists, by Lemmas \ref{lem:secondeig}, we have
$\lambari {T-1}-\lamtilij {T-1}1\geq\frac{1}{10}\left(\lambari {T-1}-\lamtilij {T-1}2\right)\geq\frac{gap\lamj 1}{40}$, and thus:
\begin{align*}
\lambari {T-1}- \lambda_1 &= \lambari {T-1}-\lamtilij {T-1}1 - (\lambda_1 - \lamtilij {T-1}1)
\ge \frac{gap\lamj 1}{40} - \frac{1}{\alpha-1}\left(\lambari{T-1}-\lamj 1\right) \\
& \ge \frac{gap\lamj 1}{40} - \frac{2}{\alpha-2}\left(\lambari{T}-\lamj 1\right)
\ge \frac{gap\lamj 1}{50}
\end{align*}
By Lemma \ref{lem:topeig1}, we know $\lambari {T}- \lambda_1\ge \frac{1}{2}(1-\frac{1}{\alpha-1}(\lambari {T-1}- \lambda_1)) >\frac{gap}{120}\lamj 1$
\end{proof}

Note that, although we proved the upper bound and lower bound in Theorem \ref{thm:parafree} with specific constants coefficient $\frac{1}{8}$ and $\frac{1}{120}$, this analysis can easily be extended  to any smaller constants by modifying the constant in exit condition, and choosing $\alpha$ larger.

\section{Lower Bounds}\label{sec:lower}

Here we show that our online eigenvector estimation algorithm (Theorem \ref{warmstart_online_theorem})  is asymptotically optimal - as sample size grows large it achieves optimal accuracy as a function of sample size. 
We rely on the following lower bound for eigenvector estimation in the Gaussian spike model:

\begin{lemma}[Lower bound for Gaussian Spike Model \cite{birnbaum2013minimax}] \label{lowerbound_spike_lemma}
Suppose data is generated as 
\begin{equation}\label{spike_model}
a_i = \sqrt{\lambda}\iota_i{v^\star} + Z_i
\end{equation}
where $\iota_i \sim \mathcal{N}(0, 1)$, and $Z_i \sim \mathcal{N}(0, I_d)$. 
Let $\hat{v}$ be some estimator of top eigenvector $v^\star$. 
Then, there exist some universal constant $c_0$, so that for $n$ sufficiently large, we have:
\begin{equation*}
\inf_{\hat{v}} \max_{v^\star \in \mathbb{S}^{d-1}} \E \norm{\hat{v} - v^\star}_2
\ge c_0 \frac{(1+\lambda)d}{\lambda^2 n}
\end{equation*}
\end{lemma}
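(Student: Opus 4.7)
The plan is to prove this minimax lower bound by a standard hypothesis testing reduction via Fano's inequality, reducing the estimation problem to a multiple testing problem over a carefully chosen local packing of the sphere $\mathbb{S}^{d-1}$. I will first parametrize the family of distributions, then compute the KL divergence between members of this family, then construct a packing of $\mathbb{S}^{d-1}$ that is dense enough to force a high Fano error, and finally translate this into the claimed risk bound. (I interpret the stated quantity as the squared norm risk; the same argument yields either version up to constants.)

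First I would compute the KL divergence between two spike-model distributions. Under \eqref{spike_model}, samples are drawn from $P_v = \mathcal{N}(0,\mSigma_v)$ where $\mSigma_v = \lambda vv^\top + \mI_d$. By the Sherman--Morrison identity, $\mSigma_v^{-1} = \mI_d - \frac{\lambda}{1+\lambda}vv^\top$, and $\det(\mSigma_v) = 1+\lambda$ is independent of $v$. Plugging these into the standard formula for KL between centered Gaussians, the log-determinant terms cancel and a short calculation yields
\[
    \mathrm{KL}(P_{v_1}\,\|\,P_{v_2}) \;=\; \frac{\lambda^2}{2(1+\lambda)}\,\bigl(1 - (v_1^\top v_2)^2\bigr).
\]
Tensorization over $n$ i.i.d.\ samples gives $\mathrm{KL}(P_{v_1}^{\otimes n}\|P_{v_2}^{\otimes n}) = \frac{n\lambda^2}{2(1+\lambda)}\bigl(1-(v_1^\top v_2)^2\bigr)$.

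Next, for a parameter $\epsilon > 0$ to be chosen, I would construct a \emph{local} packing $\{v_1,\dots,v_M\}\subset\mathbb{S}^{d-1}$ around a reference direction $v_0$ such that (i) every pair satisfies $\min\{\|v_i-v_j\|_2,\|v_i+v_j\|_2\}\ge \epsilon$ (the $\pm$ reflects the sign ambiguity of eigenvectors), while simultaneously (ii) $1-(v_i^\top v_j)^2 \le C\epsilon^2$, and (iii) $M\ge e^{c(d-1)}$. This is the classical Varshamov--Gilbert style construction: take a packing of a small Euclidean cap in the tangent space at $v_0$ of radius $\Theta(\epsilon)$ with separation $\Theta(\epsilon)$, and lift it to $\mathbb{S}^{d-1}$ via the exponential map; for $\epsilon$ small, chord distance and sine distance agree up to constants, giving (i)--(iii) together.

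Finally I would apply Fano's inequality: if $V$ is uniformly sampled from the packing and $\hat V$ is the packing element closest to the estimator $\hat v$, then
\[
    \Pr[\hat V \neq V] \;\ge\; 1 - \frac{\max_{i,j}\mathrm{KL}(P_{v_i}^{\otimes n}\|P_{v_j}^{\otimes n}) + \log 2}{\log M} \;\ge\; 1 - \frac{\tfrac{n\lambda^2}{2(1+\lambda)}\cdot C\epsilon^2 + \log 2}{c(d-1)}.
\]
Choosing $\epsilon^2 = c' \cdot \frac{(1+\lambda)d}{\lambda^2 n}$ for $c'$ small enough makes the right-hand side at least $1/2$, so by the reduction from testing to estimation (using separation (i)) we conclude $\E\|\hat v - v^\star\|_2^2 = \Omega(\epsilon^2) = \Omega\bigl(\frac{(1+\lambda)d}{\lambda^2 n}\bigr)$, which is the desired bound. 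The main obstacle is (ii)--(iii): one needs a packing simultaneously dense enough that $\log M \gtrsim d$ and tight enough that $1-(v_i^\top v_j)^2$ scales as $\epsilon^2$ rather than as a constant, since otherwise the KL terms are too large and Fano becomes vacuous. This forces the packing to be genuinely \emph{local} and requires carefully handling the $\pm v$ identification intrinsic to eigenvector estimation; everything else in the argument is a routine Gaussian computation.
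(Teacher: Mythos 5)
The paper does not prove this lemma at all: it is imported verbatim from the cited reference \cite{birnbaum2013minimax}, so there is no in-paper proof to compare against. Your reconstruction is the standard (and correct) route to such a bound: reduce estimation to multiway testing over a local packing of the sphere and apply Fano. Your KL computation checks out --- with $\mSigma_v = \mI + \lambda vv^\top$, Sherman--Morrison and the $v$-independent determinant give exactly $\mathrm{KL}(P_{v_1}\|P_{v_2}) = \frac{\lambda^2}{2(1+\lambda)}\bigl(1-(v_1^\top v_2)^2\bigr)$ --- and the packing requirements (i)--(iii) are simultaneously achievable by a volumetric packing of a cap of radius $\Theta(\epsilon)$, since for unit vectors $1-(v_i^\top v_j)^2 \le \|v_i - v_j\|_2^2$, so the chord separation automatically controls the KL radius. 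The choice $\epsilon^2 \asymp (1+\lambda)d/(\lambda^2 n)$ then makes Fano nonvacuous and yields the claimed rate. You were also right to flag that the lemma as printed is dimensionally off (left side is a norm, right side scales as a squared norm); the paper's own use of it in Theorem \ref{lowerbound_online_theorem} confirms the squared-risk reading. One further wrinkle worth noting: since $P_{v^\star} = P_{-v^\star}$ in the spike model, the literal statement with loss $\|\hat v - v^\star\|_2$ is trivially bounded below by a constant by playing $v^\star$ against $-v^\star$; the meaningful version uses the sign-identified loss, which your packing condition (i) handles, and which only strengthens the conclusion for the literal loss. In short, your proposal is a correct, self-contained proof of a result the paper merely quotes.
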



\begin{theorem}\label{lowerbound_online_theorem}
Consider the problem of estimating the top eigenvector $v_1$ of $\E_{a\sim \mathcal{D}} aa^\top$, where we observe $n$ i.i.d samples from unknown distribution $\mathcal{D}$. If $\gap < 0.9$, then there exist some universal constant c, for any estimator $\hat{v}$ of top eigenvector, there always exists some hard distribution $\mathcal{D}$ so that for $n$ sufficiently large:
\begin{align*}
\E\norm{\hat{v}-v_1}^2_2 \ge c  \frac{\nvar(\dist)}{\gap^2n} 
\end{align*}
\end{theorem}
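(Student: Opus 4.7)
The plan is to reduce to Lemma \ref{lowerbound_spike_lemma} by exhibiting, for any target $\gap \in (0, 0.9)$, a single Gaussian spike distribution whose parameters make both the relative eigengap and the variance proxy take the required values. Specifically, set $\lambda = \gap/(1-\gap) \in (0, 9)$ and let $\dist$ be the distribution $a = \sqrt{\lambda}\,\iota\, v^\star + Z$ with $\iota \sim \mathcal{N}(0, 1)$ and $Z \sim \mathcal{N}(0, I_d)$. Then $\mSigma = I + \lambda v^\star (v^\star)^\top$, so the top eigenvector is $v_1 = v^\star$ with eigenvalue $1+\lambda$, the second eigenvalue is $1$, and the relative gap is exactly $\lambda/(1+\lambda) = \gap$. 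This is the input to Lemma \ref{lowerbound_spike_lemma}.

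The key computation is then to bound $\nvar(\dist)$. Expanding $(aa^\top)^2 = \|a\|_2^2 \cdot aa^\top$, taking expectation, and using the Gaussian moment identities $\E[\iota] = \E[\iota^3] = 0$, $\E[\iota^2] = 1$, $\E[\iota^4] = 3$, together with $\E[ZZ^\top] = I$ and the standard fourth-moment identity $\E[\|Z\|_2^2\, ZZ^\top] = (d+2)\,I$, one kills all odd-in-$\iota$ cross-terms and evaluates the rest in closed form to obtain
\begin{equation*}
\E[(aa^\top)^2] \;=\; (d + \lambda + 2)\,I \;+\; (d\lambda + 3\lambda^2 + 4\lambda)\, v^\star (v^\star)^\top.
\end{equation*}
Both summands are PSD with shared top eigenvector $v^\star$, so $\|\E[(aa^\top)^2]\|_2$ equals the sum of their top eigenvalues. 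Using $\lambda \le 9$ to absorb the $O(\lambda^2)$ and $O(1)$ terms into $O(d\lambda)$ and $O(d)$ respectively (for $d$ sufficiently large), this gives $\nvar(\dist) = \|\E[(aa^\top)^2]\|_2/(1+\lambda)^2 \le C \cdot d/(1+\lambda)$ for some universal constant $C$.

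The conclusion follows by substitution. Applying Lemma \ref{lowerbound_spike_lemma} to this distribution yields, for the worst-case estimator, $\E\|\hat v - v_1\|_2^2 \ge c_0 \,(1+\lambda) d / (\lambda^2 n)$, and rewriting using $\gap^2 = \lambda^2/(1+\lambda)^2$ and the bound $\nvar(\dist) \le C\, d/(1+\lambda)$ gives
\begin{equation*}
\E\|\hat v - v_1\|_2^2 \;\ge\; c_0 \frac{(1+\lambda)\,d}{\lambda^2\, n} \;=\; c_0 \cdot \frac{d/(1+\lambda)}{\gap^2\, n} \;\ge\; \frac{c_0}{C}\cdot \frac{\nvar(\dist)}{\gap^2\, n},
\end{equation*}
which is the claim with $c = c_0/C$. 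The main obstacle is the fourth-moment calculation of $\E[(aa^\top)^2]$, which requires careful bookkeeping of the nine cross-terms from the spike/noise expansion and the higher Gaussian moments of $Z$; every other step is elementary algebra on the explicit expressions for $\gap$ and $\nvar(\dist)$.
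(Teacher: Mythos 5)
Your proposal is correct and follows essentially the same route as the paper: instantiate the Gaussian spike model with $\lambda = \gap/(1-\gap)$, compute $\nvar(\dist) = \frac{d+2+3\lambda}{1+\lambda}$ (your fourth-moment expansion checks out and matches the paper's stated value), use $\gap < 0.9 \Rightarrow \lambda < 9$ to bound $\nvar(\dist) \le 30d/(1+\lambda)$, and invoke Lemma~\ref{lowerbound_spike_lemma}. The only cosmetic difference is that you argue directly while the paper phrases it as a contradiction, and you spell out the moment calculation that the paper leaves as "not hard to verify."
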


\begin{proof}
Suppose the claim of theorem is not true, then there exist some estimator $\hat{v}$ so that 
\begin{equation*}
\E\norm{\hat{v}-v_1}^2_2 < c'  \frac{\nvar(\dist)}{\gap^2n} 
\end{equation*}
holds for all distribution $\mathcal{D}$, and for any fixed constant $c'$ when $n$ is sufficiently large.

Let distribution $\mathcal{D}$ be the Gaussian Spike Model specified by Eq.(\ref{spike_model}), then
by calculation, it's not hard to verify that:
\begin{equation*}
\nvar(\dist) = \frac{\norm{\E_{a \sim \dist}\left [ \left (a a^\top \right )^2 \right ]}_2}{\norm{\E_{a \sim \dist} (aa^\top)}_2^2} =\frac{d+2+3\lambda}{1+\lambda}
\end{equation*}Since we know $\gap = \frac{\lambda}{1+\lambda} <0.9$, this implies $\lambda <9$, which gives
$\nvar(\dist) < \frac{d+29}{1+\lambda} <\frac{30d}{1+\lambda}$.
Therefore, we have:
\begin{equation*}
\E\norm{\hat{v}-v^\star}^2_2 < c'  \frac{\nvar(\dist)}{\gap^2n} <30c' \frac{(1+\lambda)d}{\lambda^2 n}
\end{equation*}
holds for all $v^\star \in \mathbb{S}^{d-1}$. Choose $c' = \frac{c_0}{30}$ in Lemma \ref{lowerbound_spike_lemma} we have a contradiction.
\end{proof}

$\norm{\hat{v}-v_1}^2_2  = 2 - 2  \hat{v}^\top v_1$, so this bound implies that- to obtain $|\hat{v}^\top v_1| \ge 1- \epsilon$,  we need $\frac{\nvar(\dist)}{\gap^2 n} = O(\epsilon)$ so $n = \Theta \left (\frac{\nvar(\dist)}{\gap^2 \epsilon} \right )$. This exactly matches the sample complexity given by Theorem \ref{warmstart_online_theorem}.

\section{Acknowledgements}
We would like to thank Dan Garber and Elad Hazan for helpful discussions on our work.

\bibliographystyle{alpha}
\bibliography{refs}

\appendix

\section{Appendix}

\begin{lemma}[Eigenvector Estimation via Spectral Norm Matrix Approximation]\label{spectral_error_conversion}
Let $\bv{A}^\top\bv{A}$ have top eigenvector $1$, top eigenvector $v_1$ and eigenvalue gap $\gap$. Let $\bv{B}^\top \bv{B}$ be some matrix with $\norm{\bv{A}^\top\bv{A}-\bv{B}^\top\bv{B}}_2 \le O()\sqrt{\epsilon} \cdot \gap)$. Let $x$ be the top eigenvector of $\bv{B}^\top \bv{B}$. Then:
\begin{align*}
|x^\top v_1 | \ge 1 - \epsilon ~.
\end{align*}
\end{lemma}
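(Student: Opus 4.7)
My plan is to use a Davis-Kahan $\sin\theta$ style argument. Let $\eta \defeq \norm{\mvar{A}^\top\mvar{A} - \mvar{B}^\top\mvar{B}}_2$ and write the top eigenvector of $\mvar{B}^\top\mvar{B}$ as $x = \alpha v_1 + \beta y$, where $y$ is a unit vector orthogonal to $v_1$ and $\alpha^2 + \beta^2 = 1$. The goal is to show $\beta^2 \le \epsilon$, which immediately gives $|x^\top v_1| = |\alpha| = \sqrt{1-\beta^2} \ge 1-\beta^2 \ge 1-\epsilon$.

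First I would lower bound $\lambda_1(\mvar{B}^\top\mvar{B})$ by a Rayleigh quotient against $v_1$:
\[
\lambda_1(\mvar{B}^\top\mvar{B}) \ge v_1^\top \mvar{B}^\top\mvar{B} v_1 \ge v_1^\top \mvar{A}^\top\mvar{A} v_1 - \eta = 1 - \eta.
\]
Next, the key step: I project the eigenvalue equation $\mvar{B}^\top\mvar{B} x = \lambda_1(\mvar{B}^\top\mvar{B}) x$ onto $y$. Writing $\mvar{E} \defeq \mvar{B}^\top\mvar{B} - \mvar{A}^\top\mvar{A}$ and taking the inner product with $y$ (using $y^\top v_1 = 0$ and that $\mvar{A}^\top\mvar{A} v_1 = v_1$) gives
\[
\beta \cdot y^\top \mvar{A}^\top\mvar{A} y + y^\top \mvar{E} x = \lambda_1(\mvar{B}^\top\mvar{B}) \cdot \beta,
\]
so that
\[
|\beta| \cdot \bigl( \lambda_1(\mvar{B}^\top\mvar{B}) - y^\top \mvar{A}^\top\mvar{A} y \bigr) = |y^\top \mvar{E} x| \le \eta.
\]
Since $y \perp v_1$, the Rayleigh quotient $y^\top \mvar{A}^\top\mvar{A} y \le \lambda_2(\mvar{A}^\top\mvar{A}) = 1 - \gap$, so the parenthesized quantity is at least $(1-\eta) - (1-\gap) = \gap - \eta$. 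Provided the hidden constant in $\eta = O(\sqrt{\epsilon}\cdot \gap)$ is small enough that $\eta \le \gap/2$ (which we can assume since otherwise $\epsilon \ge \Omega(1)$ and the statement is trivial), this yields $|\beta| \le 2\eta/\gap$, hence $\beta^2 \le 4\eta^2/\gap^2 = O(\epsilon)$, and the conclusion follows by choosing constants in $O(\sqrt{\epsilon}\cdot\gap)$ so that $\beta^2 \le \epsilon$.

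The main subtlety, which is why projecting the eigenvalue equation (rather than merely comparing Rayleigh quotients $x^\top \mvar{B}^\top\mvar{B} x$ to $v_1^\top \mvar{B}^\top\mvar{B} v_1$) is essential, is the following: a direct Rayleigh-quotient comparison only yields $\beta^2 \gap \le 2\eta$, i.e., $\beta^2 = O(\sqrt{\epsilon})$, which gives the much weaker bound $|x^\top v_1| \ge 1 - O(\sqrt{\epsilon})$. The projection step exploits the full eigenvalue equation and gives a linear (rather than quadratic) dependence of $|\beta|$ on $\eta/\gap$, which upon squaring yields the claimed $1 - \epsilon$ accuracy. The only other care needed is to verify that the spectral norm bound implies $\eta \le \gap/2$ so that the denominator $\gap - \eta$ is bounded below by $\gap/2$.
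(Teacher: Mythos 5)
Your argument is correct, and it reaches the right conclusion with the crucial linear (rather than quadratic) dependence of the orthogonal component on $\eta/\gap$. The route is genuinely different from the paper's, though. You project the eigenvalue equation $\bv{B}^\top\bv{B}x = \lambda_1(\bv{B}^\top\bv{B})x$ onto the orthogonal direction $y$ and lower bound the spectral separation $\lambda_1(\bv{B}^\top\bv{B}) - y^\top\bv{A}^\top\bv{A}y \ge \gap - \eta$; this is the standard Davis--Kahan $\sin\theta$ mechanism and uses the full strength of $x$ being an eigenvector. The paper instead only uses that $x$ maximizes the Rayleigh quotient of $\bv{B}^\top\bv{B}$: it writes $x = c_1 v_1 + c_2 v_2$, expands $x^\top\bv{B}^\top\bv{B}x \ge v_1^\top\bv{B}^\top\bv{B}v_1$ in this basis, and observes that the cross term satisfies $|v_1^\top\bv{B}^\top\bv{B}v_2| \le \eta$ because $v_1^\top\bv{A}^\top\bv{A}v_2 = 0$ exactly; since that cross term enters multiplied by $|c_2|$, one can divide through by $|c_2|$ and again get $|c_2| \lesssim \eta/\gap$. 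In this light, your aside that ``a direct Rayleigh-quotient comparison only yields $\beta^2\gap \le 2\eta$'' is true only of the crudest version (bounding $x^\top\bv{B}^\top\bv{B}x$ and $v_1^\top\bv{B}^\top\bv{B}v_1$ by $\pm\eta$ perturbations of the corresponding $\bv{A}^\top\bv{A}$ quotients); the paper's Rayleigh-quotient argument recovers the sharp rate precisely by tracking the cross term. Both proofs need the same side condition $\eta \le \gap/2$ (or equivalent constant control), which you handle explicitly and the paper handles implicitly through its choice of constants, so your treatment is, if anything, slightly more careful on that point.
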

\begin{proof}
We can any unit vector $y$ as $y = c_1 v_1 + c_2 v_2$ where $v_2$ is the component of $x$ orthogonal to $v_1$ and $c_1^2 + c_2^2 = 1$. We know that
\begin{align*}
v_1^\top \bv{B}^\top \bv{B} v_1 &= v_1^\top \bv{A}^\top\bv{A}v_1-v_1^T(\bv{A}^\top\bv{A}-\bv{B}^\top\bv{B})v_1\\
1 -\sqrt{\epsilon} \gap &\le v_1^\top \bv{B}^\top \bv{B} v_1\le 1 +\sqrt{\epsilon} \gap
\end{align*}
Similarly we can compute:
\begin{align*}
v_2^\top \bv{B}^\top \bv{B}  v_2 &= v_2^\top \bv{A}^\top\bv{A}v_2-v_2^T(\bv{A}^\top\bv{A}-\bv{B}^\top\bv{B})v_2\\
1 -\gap -\sqrt{\epsilon} \gap &\le v_2^\top \bv{B}^\top \bv{B} v_2 \le 1 -\gap +\sqrt{\epsilon} \gap.
\end{align*}
and 
\begin{align*}
|v_1^\top \bv{B}^\top \bv{B}  v_2| &= |v_1^\top \bv{A}^\top\bv{A}v_2-v_1^T(\bv{A}^\top\bv{A}-\bv{B}^\top\bv{B})v_2|\\
&\le \sqrt{\epsilon} \gap.
\end{align*}

We have $x^\top \bv{B}\bv{B}^\top x = c_1^2 (v_1^\top \bv{B}^\top \bv{B} v_1) + c_2^2 (v_2^\top \bv{B}^\top \bv{B} v_2) + 2c_1c_2 \cdot v_2^\top \bv{B}^\top \bv{B} v_1$. 

We want to bound $c_1 \ge 1-\epsilon$ so $c_1^2 \ge 1- O(\epsilon)$. Since $x$ is the top eigenvector of $\bv{BB}^\top$ we have:
\begin{align*}
x^\top \bv{B}\bv{B}^\top x &\ge v_1^\top \bv{B}\bv{B}^\top v_1\\
c_2^2 (v_2^\top \bv{B}^\top \bv{B} v_2) + 2c_2 v_2^\top \bv{B}^\top \bv{B} v_1 &\ge (1-c_1^2)v_1^\top \bv{B}\bv{B}^\top v_1\\
2\sqrt{1-c_1^2} \sqrt{\epsilon}\gap &\ge (1-c_1^2)\left (v_1^\top \bv{B}\bv{B}^\top v_1 -v_2^\top \bv{B}\bv{B}^\top v_2 \right )\\ 
\frac{1}{\sqrt{1-c_1^2}} &\ge \frac{(1-2\sqrt{\epsilon})\gap}{2\sqrt{\epsilon}\gap}\\
\frac{1}{1-c_1^2} &\ge \frac{1-5\sqrt{\epsilon}}{4\epsilon}
\end{align*}

This means we need have $1-c_1^2 \le O(\epsilon)$  meaning $c_1^2 \ge 1-O(\epsilon)$ as desired.
\end{proof}

\begin{lemma}[Inverted Power Method progress in $\ell_2$ and $\bv{B}$ norms]\label{same_progress}
	Let $x$ be a unit vector with $\inprod{x,v_1} \neq 0$ and let $\xtilde = \mb^{-1}w$, i.e. the power method update of $\mb^{-1}$ on $x$. Then, we have both:
	\begin{align}\label{lb_progress}
\frac{\norm{\bv{P}_{v_1^{\perp}}\xtilde}_{\mb}}{\norm{\bv{P}_{v_1}\xtilde}_{\mb}} \le \frac{\lambda_2(\bv{B}^{-1})}{\lambda_1(\bv{B}^{-1})} \cdot \frac{\norm{\bv{P}_{v_1^{\perp}}x}_{\mb}}{\norm{\bv{P}_{v_1}x}_{\mb}}
	\end{align}
	and 
	\begin{align}\label{l2_progress}
\frac{\norm{\bv{P}_{v_1^{\perp}}\xtilde}_{2}}{\norm{\bv{P}_{v_1}\xtilde}_{2}} \le \frac{\lambda_2(\bv{B}^{-1})}{\lambda_1(\bv{B}^{-1})} \cdot \frac{\norm{\bv{P}_{v_1^{\perp}}x}_{2}}{\norm{\bv{P}_{v_1}x}_{2}}
	\end{align}
\end{lemma}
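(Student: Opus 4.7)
The plan is to work directly in the eigenbasis of $\mb$, which coincides with that of $\mSigma$ since $\mb = \lambda \mI - \mSigma$. Writing $x = \sum_i \alpha_i v_i$, I would note that $\xtilde = \mb^{-1} x = \sum_i \alpha_i \lambda_i(\mb^{-1}) v_i$, and that $v_1$ is the eigenvector of $\mb^{-1}$ with largest eigenvalue $\lambda_1(\mb^{-1}) = 1/(\lambda - \lambda_1)$, while the remaining eigenvalues $\lambda_i(\mb^{-1})$ for $i \geq 2$ are all bounded above by $\lambda_2(\mb^{-1})$.

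For the $\ell_2$ claim \eqref{l2_progress}, the four quantities expand to $\norm{\bv{P}_{v_1}\xtilde}_2^2 = \alpha_1^2 \lambda_1(\mb^{-1})^2$, $\norm{\bv{P}_{v_1^\perp}\xtilde}_2^2 = \sum_{i\geq 2} \alpha_i^2 \lambda_i(\mb^{-1})^2$, and the corresponding quantities for $x$ just drop the $\lambda_i(\mb^{-1})^2$ factors. Since $\lambda_i(\mb^{-1})^2 \leq \lambda_2(\mb^{-1})^2$ for $i \geq 2$, I can pull this out of the numerator sum to get
\[
\frac{\norm{\bv{P}_{v_1^\perp}\xtilde}_2^2}{\norm{\bv{P}_{v_1}\xtilde}_2^2}
\;\leq\;
\frac{\lambda_2(\mb^{-1})^2}{\lambda_1(\mb^{-1})^2}
\cdot
\frac{\sum_{i\geq 2}\alpha_i^2}{\alpha_1^2}
\;=\;
\left(\frac{\lambda_2(\mb^{-1})}{\lambda_1(\mb^{-1})}\right)^2
\cdot \frac{\norm{\bv{P}_{v_1^\perp}x}_2^2}{\norm{\bv{P}_{v_1}x}_2^2},
\]
and taking square roots yields \eqref{l2_progress}.

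For the $\mb$-norm claim \eqref{lb_progress}, I would use the same eigendecomposition but compute $\norm{\cdot}_\mb^2$ by replacing each $\lambda_i(\mb^{-1})^2$ coefficient with $\lambda_i(\mb^{-1})^2 \cdot (1/\lambda_i(\mb^{-1})) = \lambda_i(\mb^{-1})$ in the numerator and $1/\lambda_i(\mb^{-1})$ in the denominator. This gives exactly the form of $G(x)$ and $G(\xtilde)$ appearing in \eqref{gequiv}, so the claim is literally the statement of Theorem~\ref{thm:powermethod}; I would simply invoke it. Alternatively, one can redo the one-line argument: it suffices to show $\sum_{i\geq 2}\alpha_i^2 \lambda_i(\mb^{-1}) \leq \lambda_2(\mb^{-1})^2 \sum_{i\geq 2}\alpha_i^2 / \lambda_i(\mb^{-1})$, which follows term-by-term from $\lambda_i(\mb^{-1})^2 \leq \lambda_2(\mb^{-1})^2$ for all $i \geq 2$.

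There is no real obstacle here: the bookkeeping is routine once the eigenbasis is set up, and the key inequality in both cases reduces to the elementary observation $\lambda_i(\mb^{-1}) \leq \lambda_2(\mb^{-1})$ for $i \geq 2$. The main point worth highlighting in the writeup is that the inverted power method contracts the relevant ratio at the same geometric rate $\lambda_2(\mb^{-1})/\lambda_1(\mb^{-1})$ regardless of which norm ($\ell_2$ or $\mb$) is used to measure the projections, which is the conceptual content motivating the use of the $\mb$-norm potential $G$ elsewhere in the paper.
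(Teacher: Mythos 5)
Your proof is correct and matches the paper's own argument essentially verbatim: the paper likewise cites Theorem~\ref{thm:powermethod} for the $\mb$-norm bound \eqref{lb_progress} and proves \eqref{l2_progress} by expanding $x$ and $\xtilde$ in the eigenbasis and pulling $\lambda_2(\mb^{-1})^2 \le \lambda_2(\mb^{-1})^2$ term-by-term out of the numerator sum. Nothing further is needed.
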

\begin{proof}
\eqref{lb_progress} was already shown in Lemma \ref{thm:powermethod}. We show \eqref{l2_progress} similarly.

	Writing $x$ in the eigenbasis of $\bv{B}^{-1}$, we have $x=\sum_i \alpha_i v_i$ and  $\xtilde = \sum_i \alpha_i \lamiBinv{i}v_i$. Since $\inprod{x,v_1} \neq 0$, $\alpha_1 \neq 0$ and we have:
	\begin{align*}
		\frac{\norm{\bv{P}_{v_1^{\perp}}\xtilde}_{2}}{\norm{\bv{P}_{v_1}\xtilde}_{2}} = \frac{\sqrt{\sum_{i\geq 2} \alpha_i^2 \lambda^2_{i}(\bv{B}^{-1})}}{\sqrt{\alpha_1^2 \lambda^2_{1}(\bv{B}^{-1})}}
		\leq \frac{\lamiBinv{2}}{\lamiBinv{1}} \cdot \frac{\sqrt{\sum_{i\geq 2} \alpha_i^2}}{\sqrt{\alpha_1^2}}
		= \frac{\lamiBinv{2}}{\lamiBinv{1}} \cdot \frac{\norm{\bv{P}_{v_1^{\perp}}x}_{2}}{\norm{\bv{P}_{v_1}x}_{2}}.
	\end{align*}
\end{proof}
%
%
%
%
%
%

\end{document}